%% file: main.tex
\documentclass[aps,prd,twocolumn,showkeys,amsmath,amssymb,longbibliography,superscriptaddress]{revtex4-2}
\usepackage[utf8]{inputenc}
\usepackage[english]{babel}
\usepackage{graphicx}
\usepackage{xcolor}
\usepackage{amsmath,amsfonts,amssymb}
\usepackage{amsthm}
\usepackage[hypertexnames=false,colorlinks]{hyperref}
\usepackage{braket}
\usepackage{url}
\usepackage{xfrac}
\usepackage{multirow}
\usepackage[caption=false]{subfig}
\usepackage{minitoc}
\usepackage{xpatch}
\usepackage{booktabs}
\usepackage{adjustbox}
\usepackage[qm]{qcircuit}
\usepackage[section]{placeins}

\newtheorem{theorem}{Theorem}[section]
\let\Pr\relax                    
\DeclareMathOperator{\Pr}{\mathbb{P}} 

\newtheorem{lemma}{Lemma}

\begin{document}

\title{How to find expressible and trainable parameterized quantum circuits?}

\author{Peter Röseler}
\email{p.roeseler@fz-juelich.de}
\affiliation{J\"ulich Supercomputing Centre, Forschungszentrum J\"ulich, 52425 J\"ulich, Germany}
\affiliation{Universit\"at zu K\"oln, 50923 K\"oln, Germany}

\author{Dennis Willsch}
\affiliation{J\"ulich Supercomputing Centre, Forschungszentrum J\"ulich, 52425 J\"ulich, Germany}
\affiliation{Faculty of Medical Engineering and Technomathematics, University of Applied Sciences Aachen,\\
52428 J\"ulich, Germany}

\author{Kristel~Michielsen}
\affiliation{J\"ulich Supercomputing Centre, Forschungszentrum J\"ulich, 52425 J\"ulich, Germany}
\affiliation{AIDAS, 52425 J\"ulich, Germany}
\affiliation{Universit\"at zu K\"oln, 50923 K\"oln, Germany}

\begin{abstract}
Whether parameterized quantum circuits (PQCs) can be systematically constructed to be both trainable and expressive remains an open question. Highly expressive PQCs often exhibit barren plateaus, while several trainable alternatives admit efficient classical simulation. We address this question by deriving a finite-sample, dimension-independent concentration bound for estimating the variance of a PQC cost function, yielding explicit trainability guarantees. Across commonly used ansätze, we observe an anticorrelation between trainability and expressibility, consistent with theoretical insights. Building on this observation, we propose a property-based ansatz-search framework for identifying circuits that combine trainability and expressibility. We demonstrate its practical viability on a real quantum computer and apply it to variational quantum algorithms. We identify quantum neural network ansätze with improved effective dimension using over $6 \times$ fewer parameters, and for VQE on $\mathrm{H}_2$ we achieve UCCSD-like accuracy at substantially reduced circuit complexity.
\end{abstract}

\date{\today}

\maketitle


\section{Introduction}

Parameterized quantum circuits (PQCs) are the core model class of variational quantum algorithms (VQAs) \cite{Cerezo2021VQA}. They underlie prominent examples such as the variational quantum eigensolver (VQE) \cite{Peruzzo2014VQE} and the quantum approximate optimization algorithm (QAOA) \cite{farhi2014quantumapproximateoptimizationalgorithm}, and they also form the basis of many quantum neural networks (QNNs) \cite{farhi2018classificationquantumneuralnetworks,Abbas2021PowerQNN,ji2026quantumdeeplearning}. In these methods, the choice of ansatz is a primary performance bottleneck. An ansatz that is too restricted may not represent the target solution (low expressibility), while an overly flexible circuit can induce unfavorable optimization landscapes (low trainability), in particular associated with gradients that are vanishing in high-dimensional space (barren plateaus \cite{McClean2018BarrenPlateaus}). 

An important open question is whether there exist expressible and trainable quantum circuits. \cite{ExprVSBP_Holmes} Recent works suggest that highly expressive families can be prone to barren plateaus \cite{ExprVSBP_Holmes}, while circuit classes that avoid or soften barren plateaus may do so by exploiting structures that correlate with efficient classical simulation \cite{Cerezo2025AbsenceBarrenPlateausSimulability,Leone2024practicalusefulness,gilfuster2025relationtrainabilitydequantizationvariational}. Understanding and navigating this tradeoff is crucial for scaling VQAs.

While expressibility admits a range of practical generalizable metrics \cite{Sim2019,du2022efficient}, \emph{quantifying trainability} in a way that is both informative and broadly applicable remains less settled \cite{Cunningham2025MitigatingBarrenPlateaus,Larocca2025BarrenPlateausReview,Qi2023BarrenPlateausQNN}. Barren plateau theory provides diagnostic and mitigation tools, including approaches based on dynamical Lie algebras \cite{Fontana_2024,Ragone2024LieAlgebraBarrenPlateaus} (and extensions thereof \cite{diaz2023showcasingbarrenplateautheory}), initialization strategies \cite{NEURIPS2022_7611a3cb,Grant2019initialization,9951195}, local observables \cite{Cerezo2021CostFunctionBarrenPlateaus}, diagnostic tools for problem-based ansätze such as QAOA \cite{Larocca2022diagnosingbarren}, variable structure ansätze \cite{Grimsley2019ADAPTVQE,Tang2021QubitADAPTVQE,Ramoa2025ADAPTVQE,Yordanov2021QEBAVQE}, and shallow/structured circuit families such as quantum convolutional neural networks \cite{PhysRevX.11.041011} or quantum tree tensor products and multiscale entanglement renormalization ansatz \cite{CerveroMartin2023barrenplateausin}. However, many of these approaches rely on specific circuit structure, initial-state requirements, or problem restrictions, and can be costly to apply at scale.

Because gradient variance is heavily influenced by both the Hamiltonian and the circuit structure \cite{Uvarov_2021}, a broadly applicable notion of trainability would be desirable to find and compare ansätze across diverse problem settings. We therefore introduce a \emph{finite-sample, dimension-free} concentration bound that enables estimating whether a given PQC, under arbitrary initialization and cost function, exhibits barren-plateau behavior. We use this bound to define a practical trainability objective through empirical gradient statistics.

Building on this, we pursue \emph{property-based} ansatz design. Instead of optimizing directly for a computationally expensive downstream objective, we optimize proxy properties---\emph{expressibility}, \emph{trainability}, \emph{entanglement}, and \emph{complexity}---to identify hardware-efficient, non-simulable, and optimizable circuits. This complements a broad literature on quantum architecture/ansatz search that primarily targets the final objective \cite{10821367}, including greedy methods \cite{PhysRevResearch.2.023074,Grimsley2019ADAPTVQE,Ostaszewski2021structure}, reinforcement learning \cite{kuo2021quantumarchitecturesearchdeep,Zhu2023QuantumArchitectureSearchPPO,patel2024curriculumreinforcementlearningquantum,Pirhooshyaran2021QuantumCircuitDesignSearch}, Bayesian optimization \cite{BenitezBuenache2025BPQCO,Pirhooshyaran2021QuantumCircuitDesignSearch,Duong2022QuantumNASBayesian}, performance predictors \cite{Zhang_2021,He2023GNNPredictorQAS}, Monte Carlo tree search \cite{10100911,9566740}, evolutionary algorithms \cite{9773233,Pirhooshyaran2021QuantumCircuitDesignSearch}, one-shot approaches \cite{Du2022QuantumCircuitArchitectureSearch}, and differentiable architecture search \cite{Zhang_2022,pmlr-v202-wu23v}. Although directly optimizing the downstream objective is conceptually aligned with the ultimate goal, it is often prohibitively expensive and can necessitate surrogate simplifications that may be misleading. 

We derive a practical criterion to diagnose barren-plateau behavior from empirical gradients, which we define with the expressibility, trainability, entanglement, and complexity metrics in Section \ref{sec:theory}. We answer the question of how to identify expressible and trainable circuits in simulation and on a real quantum device, how they compare to known ansätze, and if these properties lead to improved performance for VQA algorithms (Section \ref{sec:experiments}). In Section \ref{sec:summary}, we summarize our findings and discuss limitations and open questions of metric-based design, including the dependence on the chosen reference state and the possibility of \emph{metric drift} along the optimization trajectory.


\section{Metrics}
\label{sec:theory}
Many metrics have been proposed to characterize parameterized quantum circuits (PQCs)~\cite{illesova2025qmetric}. In this section, we focus on criteria that are informative for training variational quantum algorithms on NISQ hardware: expressibility, trainability, entanglement, and complexity. For trainability, we introduce an operational barren-plateau diagnostic based on a finite-sample concentration bound for estimating gradient fluctuations from bounded samples, while for the others, we build on established research. Although we emphasize these metrics in our experiments, the proposed ansatz-search framework can and should be extended to incorporate additional properties for specific applications.

\subsection{Expressibility}
To approximate a desired solution, a parameterized quantum circuit must be sufficiently expressive to represent the solution state (or a good approximation thereof). We define expressibility as the ability of a parameterized quantum circuit (PQC) to generate a representative set of quantum states across a target (sub)space of the Hilbert space given by $\operatorname{Pr}_\mathrm{target}(\Omega)$ quantified by the Kullback–Leibler (KL) divergence such that
\begin{equation}
\label{eq:generic_expr}
    \text{Expr} = \frac{1}{|\mathcal{S}_0|}\sum_{\ket{\psi_0}\in \mathcal{S}_0}D_{\mathrm{KL}}(\Pr_\mathrm{PQC}(\Omega;\Theta,\ket{\psi_0})||\Pr_\mathrm{target}(\Omega)),
\end{equation}
where $\mathcal{S}_0$ is the set of initial states the PQC is applied to and $\Theta$ the set of parameter samples that generate the probability distribution $\Pr_\mathrm{PQC}(\Omega;\Theta,\ket{\psi_0})$ from the PQC.

Although one may choose specific target distributions $\operatorname{Pr}_\mathrm{target}$ depending on the task (see \cite{roseler2025efficient}), we use Haar-random pure states as a standard, problem-independent baseline following \cite{Sim2019}. We emphasize that this choice is not meant to be optimal for any particular application domain, but rather to provide a common reference point (random states are also useful in many other contexts \cite{jin2021random}). Formally, let $U(\boldsymbol\theta)\ket{\psi_0}$ be a PQC acting on an initial state $\ket{\psi_0}$, and let $\boldsymbol\theta, \boldsymbol\theta'$ be uniformly drawn i.i.d from a parameter domain. Define the random states
\begin{equation}
    \ket{\psi_{\boldsymbol\theta}} := U(\boldsymbol\theta)\ket{\psi_0},
    \qquad
    \ket{\psi_{\boldsymbol\theta'}} := U(\boldsymbol\theta')\ket{\psi_0},
\end{equation}
and the associated fidelity random variable
\begin{equation}
    F \;:=\; \bigl|\braket{\psi_{\boldsymbol\theta}|{\psi_{\boldsymbol\theta'}}}\bigr|^2 \in [0,1].
\end{equation}
For fixed $\ket{\psi_0}$, the PQC induces a distribution $\Pr_{\mathrm{PQC}}(F\,;\ket{\psi_0})$. In practice, we estimate this distribution from a set $\Theta$ of sampled parameter pairs
$\{(\boldsymbol\theta^{(i)},\boldsymbol\theta'^{(i)})\}_{i=1}^{|\Theta|}$, yielding an empirical (histogram) distribution $\Pr_{\mathrm{PQC}}(F\,;\Theta,\ket{\psi_0})$. The fidelity density for two independent Haar-random pure states is 
\begin{equation}
    \label{eq:fid_distr}
    p_{\mathrm{Haar}}(F) \;=\; (N-1)(1-F)^{N-2}
\end{equation}
with $N$ being the dimension of the Hilbert space \cite{M_Kus_1988}. Quantifying expressibility using the KL divergence between these two distributions gives
\begin{equation}
    \text{Expr} = \frac{1}{|\mathcal{S}_0|}\sum_{\ket{\psi_0}\in \mathcal{S}_0}D_{\mathrm{KL}}(\Pr_\mathrm{PQC}(F;\Theta,\ket{\psi_0})||\Pr_\mathrm{Haar}(F)).
\end{equation}
Lower expressibility values indicate circuits whose generated state ensembles more closely resemble the Haar distribution, while higher values identify circuits with limited expressive power, for example those restricted to product-state manifolds. 

\paragraph*{$\varepsilon$-truncation.} For large $N$, the Haar fidelity density $p_{\mathrm{Haar}}(F)$ concentrates near $F=0$ with exponential decrease (Eq. \ref{eq:fid_distr}), so fixed uniform binning can allocate negligible Haar mass to the high-fidelity tail, which in turn leads to numerical instability in the KL divergence and can assign low KL divergence to highly restricted circuit families (see Appendix \ref{app:expr}). To mitigate this effect, we modify the discretization $\mathcal{B}=\{b_0,\ldots,b_B\}$ with $b_0=0$ and $b_B=1$ by choosing the last bin edge $b_{B-1}$ such that the Haar probability mass of the final bin is at least $\varepsilon$:
\begin{equation}
    \operatorname{Pr}_{\mathrm{Haar}}\!\bigl(F\in[b_{B-1},1]\bigr)
    \;=\;
    \int_{b_{B-1}}^{1} (N-1)(1-F)^{N-2}\,dF
    \;\ge\; \varepsilon .
\end{equation}
Equivalently, this yields
\begin{equation}
    b_{B-1} \le 1-\varepsilon^{\frac{1}{N-1}} .
\end{equation}

\paragraph*{Log-scaled expressibility loss.}
To reduce the influence of outliers and to normalize across problem instances, we use the log-scaled loss
\begin{equation}
\label{eq:expr_loss}
    \mathcal{L}_{\mathrm{Expr}} \;=\;
    \max\left\{
        \frac{\log\!\left( \tfrac{\mathrm{Expr}}{\tau_{\mathrm{Expr}}} \right)}
             {\log\!\left( \tfrac{\mathrm{Expr}_{\max}}{\tau_{\mathrm{Expr}}} \right)},
        \,0
    \right\},
\end{equation}
where $\tau_{\mathrm{Expr}}$ is a chosen threshold and $\mathrm{Expr}_{\max}$ denotes an upper reference value used for normalization. 

\subsection{Trainability}
\label{sec:trainability}
Highly expressible circuits often suffer from barren plateaus \cite{ExprVSBP_Holmes}, rendering optimization ineffective despite good coverage. The aim of this metric is to diagnose whether a given parameterized circuit $U(\boldsymbol\theta)$ acting on a fiducial state $\rho$ and cost function
\begin{equation}
\label{eq:cost_trace}
C(\boldsymbol\theta) \;:=\; \mathrm{Tr}\!\big[\rho(\boldsymbol\theta)\,O\big],
\qquad
\rho(\boldsymbol\theta) \;:=\; U(\boldsymbol\theta)\,\rho\,U^\dagger(\boldsymbol\theta),
\end{equation}
(where $O$ is a Hermitian observable) is prone to exhibit barren plateaus, i.e., regimes in which the optimization landscape becomes exponentially flat as the system size increases. Following \cite{Cerezo_2021, Qi2023BarrenPlateausQNN, Cunningham2025MitigatingBarrenPlateaus, Larocca2022diagnosingbarren} we define the barren plateau as
\begin{equation}
\operatorname{Var}_{\boldsymbol{\theta}}\!\left[\partial_\mu C(\boldsymbol{\theta})\right]
\le F(n),
\qquad
F(n)\in\mathcal{O}\!\left(\frac{1}{c^n}\right),
\qquad c>1 .
\end{equation}

Barren plateaus have been shown for circuits forming 2-designs \cite{McClean2018BarrenPlateaus}, while the expected gradient has been shown to vanish for 1-designs \cite{McClean2018BarrenPlateaus} and for common PQC ansätze \cite{ExprVSBP_Holmes}. 

However, since this definition is asymptotic in the number of qubits $n$, it is primarily suited to statements about scaling. In this work, we want to evaluate candidate ans\"atze at a fixed system size $n$. We therefore adopt an \emph{operational} notion of barren plateaus: for fixed $n$ we declare a circuit effectively non-trainable if the typical gradient fluctuations fall below a threshold $\tau_{\mathrm{BP}}\in\mathbb{R}^+$, i.e.,
\begin{equation}
    \operatorname{Var}_{\boldsymbol\theta}\!\left[\partial_\mu C({\boldsymbol\theta})\right]\leq \tau_{\mathrm{BP}}.
\end{equation}
This criterion raises two practical questions: 
\begin{enumerate}
\renewcommand{\labelenumi}{(\roman{enumi})}
    \item How to set $\tau_{\mathrm{BP}}$
    \item How to estimate $\operatorname{Var}_{\boldsymbol\theta}[\partial_\mu C({\boldsymbol\theta})]$
\end{enumerate}

To address (i), we assume that the algorithm employing the PQC is executed on noisy intermediate-scale quantum (NISQ) devices. If the gradient fluctuations are smaller than the estimation noise floor, optimization can become ineffective because updates are dominated by noise rather than signal. This regime is consistent with noise-induced barren plateaus for noisy VQAs \cite{Wang2021NoiseInducedBarrenPlateaus, Schumann2024EmergenceNoiseInducedBarrenPlateaus}. In general, we assume that the following inequality holds for (non-trivial) PQCs on NISQ devices when a barren plateau is present:
\begin{equation}
     \operatorname{Var}_{\boldsymbol\theta}[\partial_\mu C({\boldsymbol\theta})] \leq \tau_{\mathrm{BP}} \leq \operatorname{Pr}_\mathrm{PQC}(\mathrm{err}),
\end{equation}
where $\operatorname{Pr}_\mathrm{PQC}(\mathrm{err})$ is the probability that an error occurs for the PQC (see Appendix \ref{app:train_err}). Motivated by signal-to-noise analyses of gradient-based learning in stochastic optimization \cite{NIPS2008_8df707a9,mccandlish2018empiricalmodellargebatchtraining}, we therefore deem a PQC trainable if its gradient-to-noise ratio exceeds the threshold $\tau_{\mathrm{BP}}$:
\begin{equation}
    \frac{\operatorname{Var}_{\boldsymbol\theta}[\partial_\mu C({\boldsymbol\theta})]}{\operatorname{Pr}_\mathrm{PQC}(\mathrm{err})}>\tau_{\mathrm{BP}}.
\end{equation}
In other words, if the ratio exceeds $\tau_{\mathrm{BP}}$, the circuit is expected to avoid noise-dominated optimization and the barren-plateau regime. We emphasize that noise does not necessarily imply poor optimization. Stochasticity can improve training dynamics in classical learning \cite{pmlr-v119-smith20a} and has been shown to be beneficial for avoiding strict saddle points in variational quantum algorithms \cite{NoiseOpt2025Junyu}.

Secondly, to address (ii), we estimate $\operatorname{Var}_{\boldsymbol\theta}[\partial_\mu C({\boldsymbol\theta})]$ from finitely many gradient samples using a concentration inequality for self-bounding random variables \cite[Thm.~7]{maurer2009empirical} (derived from \cite[Thm.~13]{Maurer2006ConcentrationInequalities}). This requires uniformly bounded samples. We therefore bound the sampled gradients via \emph{gradient clipping}, a standard technique in machine learning \cite{pmlr-v28-pascanu13,Goodfellow-et-al-2016} and also a key ingredient in proving the differential privacy guarantee for SGD \cite{10.1145/2976749.2978318}. Beyond its original use as a stabilization heuristic (e.g., to mitigate exploding gradients), clipping has also been observed to improve optimization behavior, enabling stable traversal of non-smooth regions and, in some settings, faster convergence \cite{zhang2019gradient,pmlr-v202-koloskova23a}.

Concretely, we enforce bounded gradient samples via element-wise clipping \cite{pmlr-v28-pascanu13} so that
$X_i\in[L,U]$, which matches the boundedness assumption in Theorem~\ref{thm:main}. In our experiments we use the
normalized choice $[L,U]=[-1,1]$ for simplicity. We note that alternative choices, such as $\ell_2$-norm clipping, are
also common and left for future work \cite{Oh2025DiscoveringSOTARLAlgorithms,10.1145/2976749.2978318}.

\begin{theorem}
\label{thm:main}
Let $L,U\in\mathbb R$ with $L<U$ and $R:=U-L$, $\mathbf{X}=(X_1,\dots,X_m)$ be independent random gradient samples with $X_i\in[L,U]$, and $m\ge 3$. Define $Z(\mathbf{X}):=\frac{m}{R^2}s_m^2$, where the sample variance is    
\[
s_m^2:=\frac{1}{m-1}\sum_{i=1}^m (X_i-\bar X)^2,
\qquad
\bar X:=\frac{1}{m}\sum_{i=1}^m X_i.
\]
 For $k\in\{1,\dots,m\}$ and $y\in[L,U]$, let $\mathbf X_{y,k}$ be obtained from $\mathbf X$ by replacing $X_k$ with $y$, and set
\[
\Delta_k(\mathbf X):=Z(\mathbf X)-\inf_{y\in[L,U]} Z(\mathbf X_{y,k}).
\]
Then almost surely
\begin{equation}
    \Delta_k(\mathbf X)\le 1 \quad \forall k,
\end{equation}
\begin{equation}
\sum_{k=1}^m \Delta_k(\mathbf X)^2 \le a\, Z(\mathbf X)
\end{equation}
with $a=\frac{m}{m-1}$. Therefore, for all $\delta>0$:

\begin{equation}
    \operatorname{Pr}\left(\bigl|\sqrt{s_m^2}-\sqrt{\mathbb E[s_m^2]}\bigr|
\le
\sqrt{\frac{2R^2\,\ln(2/\delta)}{(m-1)}}\right)\geq 1-\delta.
\end{equation}
\end{theorem}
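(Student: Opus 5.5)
The plan is to reduce the statement to the concentration inequality for self-bounding random variables \cite[Thm.~13]{Maurer2006ConcentrationInequalities}, following the route of \cite[Thm.~7 and Thm.~10]{maurer2009empirical}. This takes three steps: verify the two almost-sure properties of $Z$, apply both tails of the self-bounding inequality, and convert the result into a bound on $\sqrt{s_m^2}$. First I would rewrite the sample variance in pairwise form, $s_m^2=\frac{1}{2m(m-1)}\sum_{i,j}(X_i-X_j)^2$. This gives $Z(\mathbf X)=\frac{1}{2(m-1)R^2}\sum_{i,j}(X_i-X_j)^2$, which isolates the dependence on $X_k$ in the terms with $i=k$ or $j=k$:
\begin{equation*}
Z(\mathbf X)-Z(\mathbf X_{y,k})=\frac{1}{(m-1)R^2}\sum_{j\neq k}\bigl[(X_k-X_j)^2-(y-X_j)^2\bigr].
\end{equation*}
The right-hand side is a concave quadratic in $y$, maximized at $y=\bar X_{-k}:=\frac{1}{m-1}\sum_{j\ne k}X_j$, which lies in $[L,U]$. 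A direct computation then gives $\Delta_k(\mathbf X)=(X_k-\bar X_{-k})^2/R^2$. Since both $X_k$ and $\bar X_{-k}$ lie in $[L,U]$, this yields $\Delta_k\le 1$.

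For the second property I would use $\Delta_k\le 1$ to bound $\sum_k\Delta_k^2\le\sum_k\Delta_k$. Next I would use the identity $X_k-\bar X_{-k}=\frac{m}{m-1}(X_k-\bar X)$. Together these give
\begin{equation*}
\sum_k\Delta_k\le\frac{m^2}{(m-1)^2R^2}\sum_k(X_k-\bar X)^2=\frac{m^2}{(m-1)R^2}s_m^2=\frac{m}{m-1}Z ,
\end{equation*}
so $a=\frac{m}{m-1}$. Since $a\ge 1$, both tails of \cite[Thm.~13]{Maurer2006ConcentrationInequalities} apply:
\begin{equation*}
\Pr(Z-\mathbb E Z>t)\le \exp\!\Bigl(-\tfrac{t^2}{2a\mathbb E Z+at}\Bigr),\qquad
\Pr(\mathbb E Z-Z>t)\le \exp\!\Bigl(-\tfrac{t^2}{2a\mathbb E Z}\Bigr).
\end{equation*}

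Set $\ell=\ln(2/\delta)$ and assign probability $\delta/2$ to each tail. For the lower tail, with probability at least $1-\delta/2$ we have $\mathbb E Z\le Z+\sqrt{2a\,\mathbb E Z\,\ell}$. Solving this quadratic in $\sqrt{\mathbb E Z}$ gives $\sqrt{\mathbb E Z}\le\sqrt Z+\sqrt{2a\ell}$. For the upper tail, inverting the bound gives $Z\le \mathbb E Z+\sqrt{2a\,\mathbb E Z\,\ell}+a\ell$. I would rewrite the right-hand side as $(\sqrt{\mathbb E Z}+\sqrt{a\ell/2})^2+a\ell/2$. Applying $\sqrt{x+y}\le\sqrt x+\sqrt y$ then yields $\sqrt Z\le\sqrt{\mathbb E Z}+\sqrt{2a\ell}$. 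A union bound combines the two, so $|\sqrt Z-\sqrt{\mathbb E Z}|\le\sqrt{2a\ell}$ with probability at least $1-\delta$. Multiplying through by $R/\sqrt m$ and using $\frac{R^2 a}{m}=\frac{R^2}{m-1}$ gives exactly the claimed bound on $|\sqrt{s_m^2}-\sqrt{\mathbb E[s_m^2]}|$.

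I expect the main obstacle to be the upper-tail inversion. The linear term $at$ in the denominator has to be absorbed without losing the constant $\sqrt{2}$. The completing-the-square trick above is what I would try first, and the computation suggests it closes with exactly the same constant as the lower tail. A secondary point to check carefully is that the infimum in $\Delta_k$ is attained inside $[L,U]$; this is what makes the clipping assumption $X_i\in[L,U]$ essential. I would also confirm that the hypotheses of \cite[Thm.~13]{Maurer2006ConcentrationInequalities} require $a\ge1$ rather than $a\le1$, which would justify using the weaker lower-tail form stated above. The assumption $m\ge3$ should enter only through that theorem's hypotheses, since the computations above need only $m\ge2$.
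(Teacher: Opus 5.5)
Your proposal is correct and is essentially the paper's proof: the same formula $\Delta_k=(X_k-\bar X_{(k)})^2/R^2\le 1$, the same identity $X_k-\bar X_{(k)}=\frac{m}{m-1}(X_k-\bar X)$ giving $a=\frac{m}{m-1}$, both tails of the self-bounding inequality of \cite[Thm.~7]{maurer2009empirical} (which does assume $a\ge 1$), and the same completing-the-square inversions combined by a union bound with $\delta/2$ per tail. The differences are cosmetic. You derive $\Delta_k$ from the pairwise form of $s_m^2$ rather than from the leave-one-out variance-update identity, whose $\frac{1}{m-2}$ normalization is where the paper needs $m\ge 3$; your route works for $m\ge 2$, so that hypothesis is not coming from Maurer's theorem as you guessed. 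Your step $\sum_k\Delta_k^2\le\sum_k\Delta_k$ is equivalent to the paper's bound $|X_k-\bar X|\le R\,\frac{m-1}{m}$.
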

\begin{proof}
    See Appendix \ref{app:main_proof}.
\end{proof}

To estimate $\operatorname{Var}(X)$ from Theorem~\ref{thm:main}, we use the unbiased sample variance $s_m^2$ (with prefactor $1/(m-1)$), so that $\mathbb E[s_m^2]=\operatorname{Var}(X)$ for i.i.d.\ samples. We report concentration in terms of
standard deviation since the bound controls $|\sqrt{s_m^2}-\sqrt{\mathbb E[s_m^2]}|\leq \varepsilon$. On the event that this bound holds, one obtains either an upper bound on the variance deviation via the identity
\begin{equation}
\bigl|s_m^2-\mathbb E[s_m^2]\bigr|
=
\bigl|\sqrt{s_m^2}-\sqrt{\mathbb E[s_m^2]}\bigr|
\left(\sqrt{s_m^2}+\sqrt{\mathbb E[s_m^2]}\right),
\end{equation}
or the conclusion that no barren plateau exists (Lemma \ref{lem:sd_implies_var}).

\begin{lemma}
\label{lem:sd_implies_var}
Let $s_m^2,\mathbb{E}[s_m^2]\in[0,1]$ be fixed and $\varepsilon\ll 1$ with $\bigl|\sqrt{s_m^2}-\sqrt{\mathbb E[s_m^2]}\bigr|\leq \varepsilon$ then either 
\begin{equation}
        \bigl|s_m^2-\mathbb E[s_m^2]\bigr|
\leq
\bigl|\sqrt{s_m^2}-\sqrt{\mathbb E[s_m^2]}\bigr|    
\end{equation}
or
\begin{equation}
    \mathbb E[s_m^2] >
\left(\frac{1-\varepsilon}{2}\right)^2
\end{equation}
holds true.
\begin{proof}
Set $a:=\sqrt{\mathbb{E}[s_m^2]}$ and $b:=\sqrt{s_m^2}$. If $a+b\le 1$, then multiplying by $|b-a|$ yields
$|(a+b)(b-a)|\le |b-a|$, and since $(a+b)(b-a)=b^2-a^2=s_m^2-\mathbb{E}[s_m^2]$, we obtain
$\bigl|s_m^2-\mathbb{E}[s_m^2]\bigr|\le |b-a|$. If instead $a+b>1$, then $b>1-a$.
Using $|b-a|\le \varepsilon$ gives $b\le a+\varepsilon$, hence $1-a<b\le a+\varepsilon$, which implies
$2a+\varepsilon>1$, i.e.\ $a>(1-\varepsilon)/2$. Therefore,
$\mathbb{E}[s_m^2]=a^2>\bigl((1-\varepsilon)/2\bigr)^2$.
\end{proof}
\end{lemma}

Sharper (problem-dependent) bounds may be obtained by working directly with the variance in the small-variance regime or by incorporating assumptions on the sampling distribution. Nevertheless, our bound is dimension-free, i.e., the required number of samples does not depend on the number of qubits, and provides us with an accurate estimate of the variance, especially when the variance decreases to zero. 

\paragraph*{Trainability loss.} For simplicity, we define the trainability loss by averaging over the sampled initial states in $\mathcal{S}_0$ and over the $|\boldsymbol{\theta}|$ trainable parameters
\begin{equation}
\label{eq:train_loss}
\mathcal{L}_{\mathrm{Train}}
=
\max\left(
\frac{
\tau_{\mathrm{BP}}
-
\frac{1}{|\mathcal{S}_0||\boldsymbol{\theta}|}
\sum_{\ket{\psi_0}\in \mathcal{S}_0}
\sum_{i=1}^{|\boldsymbol{\theta}|}
\frac{s_{m,\theta_i,\ket{\psi_0}}^2}{\operatorname{Pr}_\mathrm{PQC}(\mathrm{err})}
}{
\tau_{\mathrm{BP}}
},
0
\right).
\end{equation}
where \(s_{m,\theta_i,\ket{\psi_0}}^2\) denotes the sample variance of \(\partial_{\theta_i} C(\boldsymbol{\theta})\) over the \(m\) sampled parameter values, for the fixed initial state \(\ket{\psi_0}\).

\subsection{Entanglement}
Trainability does not preclude classical tractability; indeed, several works note that well-trainable ans\"atze can often admit efficient classical simulation~\cite{Cerezo2025AbsenceBarrenPlateausSimulability,Leone2024practicalusefulness,gilfuster2025relationtrainabilitydequantizationvariational}. We therefore incorporate entanglement as a complementary indicator of nontrivial quantum correlations, which is commonly associated with increased classical simulation cost~\cite{EntglSimulable2003Guifr,EntglSimulable2003Josza,EntglSimulable2010Eisert}. We emphasize, however, that entanglement alone is not sufficient to guarantee classical hardness~\cite{gottesman1998theorem,ManyBody2022Liu}. To quantify entanglement, we employ the Meyer-Wallach measure \cite{meyer2002global}. For a pure $n$-qubit state $\ket{\psi}$, it is defined as
\begin{equation}
Q(\ket{\psi}) = \frac{4}{n}\sum_{j=1}^{n} D\!\left(\iota_j(0)\ket{\psi},\;\iota_j(1)\ket{\psi}\right),
\end{equation}
where $\iota_j(b)$ denotes the projection of $\ket{\psi}$ onto the subspace in which qubit $j$ is in the computational basis state $\ket{b}$, yielding a vector in $(\mathbb{C}^2)^{\otimes (n-1)}$, and
\begin{equation}
D(u,v) = \sum_{x<y} \left| u_x v_y - u_y v_x \right|^2
\end{equation}
is the squared norm of the wedge product of the vectors $u$ and $v$. The measure is invariant under local unitary transformations and vanishes if and only if $\ket{\psi}$ is fully separable.

Equivalently, the Meyer--Wallach entanglement can be expressed in terms of the single-qubit reduced density matrices $\rho_j = \mathrm{Tr}_{\bar{j}}(\ket{\psi}\!\bra{\psi})$ as
\begin{equation}
Q(\ket{\psi}) = \frac{2}{n} \sum_{j=1}^{n} \left(1 -\mathrm{Tr}\left[\rho_j^2\right]\right),
\end{equation}
which highlights its interpretation as the average linear entropy of the one-qubit subsystems. With this normalization, $Q(\ket{\psi}) \in [0,1]$, attaining its maximal value for globally entangled states such as GHZ states. We estimate the entanglement for a given PQC with
\begin{equation} 
    \text{Ent}=\frac{1}{|\mathcal{S}_0|\cdot|\Theta|}\sum_{\ket{\psi_0}\in \mathcal{S}_0}\sum_{{\boldsymbol\theta}\in \Theta}Q\!\left(U(\boldsymbol\theta)\ket{\psi_0}\right).
\end{equation}
\paragraph*{Entanglement loss.} The corresponding loss function is then given by
\begin{equation}
\label{eq:entgl_loss}
    \mathcal{L}_{\mathrm{Ent}}
    =
    \max\left(
        \frac{\tau_{\mathrm{Ent}} - \mathrm{Ent}}{\tau_{\mathrm{Ent}}},
        0
    \right),
\end{equation}
where $\tau_{\mathrm{Ent}}$ denotes the entanglement threshold.

\subsection{Complexity}
Finally, since these properties must be achieved under near-term hardware and optimization budgets, we include a simple complexity proxy based on parameters, depth, and gate count.

Circuit complexity is not uniquely defined and depends on both the hardware backend and the optimization routine. 
Since we focus on variational quantum algorithms, the number of trainable parameters ($|\boldsymbol{\theta}|$) is particularly relevant because it directly impacts the cost of gradient-based optimization (e.g., parameter-shift \cite{Parameter2018Mitarai,Parameter2019Schuld}). 
Moreover, to ensure that candidate PQCs remain feasible on near-term devices, we also account for circuit depth ($D$) and gate count ($G$) as proxies for noise exposure.

\paragraph*{Complexity loss.} We therefore define a simple, normalized complexity loss as
\begin{equation}
\label{eq:cmplx_loss}
\mathcal{L}_{\mathrm{Cmplx}}
\;=\;
\frac{|\boldsymbol\theta| + D + G}
{|\boldsymbol\theta|_{\max} + D_{\max} + G_{\max}},
\end{equation}
where $(|\boldsymbol\theta|_{\max},D_{\max},G_{\max})$ are reference maxima determined by the considered search space. 
This yields $\mathcal{L}_{\mathrm{Cmplx}}\in[0,1]$, with smaller values indicating less complex circuits.

\section{Experiments}
\label{sec:experiments}
In this section, we evaluate the proposed ansatz-search framework in a range of single- and multi-objective settings, and additionally demonstrate hardware-constrained search on the 5-qubit \emph{IQM Spark} \cite{Ronkko2024OnPremSuperconductingQC} JIQCER-5 located at Jülich Supercomputing Centre, Germany. Our goal is not to advocate a single optimization backend, but to provide a flexible metric-driven formulation that can be combined with different search strategies (e.g., Bayesian optimization, evolutionary methods, or reinforcement learning). In this work, we use Bayesian optimization \cite{NIPS2012_05311655}, motivated by its sample efficiency in expensive black-box settings and its ability to incorporate prior beliefs and uncertainty estimates over the objective landscape.

Unless stated otherwise, we run the optimizer for $10{,}000$ trials and compare against the benchmark circuit collection of \cite{Sim2019}. To prevent degenerate architectures, we require every generated PQC to contain at least one trainable parameter and to satisfy a minimal connectivity constraint, namely that each qubit must be connected to the rest of the circuit via at least one path. All solutions reported below are re-evaluated on an independent test set of random draws from the same distribution, with the same sample size as in training.

For metric estimation, we follow \cite{Sim2019} and use $5{,}000$ Monte Carlo samples for expressibility and entanglement. For trainability (Section \ref{sec:trainability}), we set target accuracy $\varepsilon=0.05$ and confidence level $1-\delta=0.95$, yielding the sufficient sample size
\begin{equation}
m \;\ge\; \frac{8\ln(2/\delta)}{\varepsilon^2}+1
\;=\; \frac{8\ln 40}{0.05^2}+1 \;\approx\; 11{,}806,
\end{equation}
which guarantees $|\sqrt{s_m^2}-\sqrt{\mathbb E[s_m^2]}|\le 0.05$ with probability at least $0.95$.

Additional setup information and details on the results presented in this section are provided in Appendix~\ref{app:setup_details} and Appendix~\ref{app:extended_results}.

\subsection{Single-objective optimization}
\label{sec:single_objective}

To validate the proposed framework, we first consider \emph{single-objective} searches. Concretely, we optimize one of the metric losses introduced in Section~\ref{sec:theory}: expressibility (Eq.~\ref{eq:expr_loss}), trainability (Eq.~\ref{eq:train_loss}), or entanglement (Eq.~\ref{eq:entgl_loss}). Circuit complexity is not treated as a standalone primary objective in this section. Instead, we include it as a strictly secondary criterion via a hierarchical cost,
\begin{equation}
\label{eq:lexicographic_cost_single}
\mathcal{L} =
\begin{cases}
\mathcal{L}_{\mathrm{obj}} + 1, & \mathcal{L}_{\mathrm{obj}} > 0,\\[4pt]
\mathcal{L}_{\mathrm{Cmplx}}, & \text{otherwise},
\end{cases}
\end{equation}
where $\mathcal{L}_{\mathrm{obj}}\in\{\mathcal{L}_{\mathrm{Expr}}, \mathcal{L}_{\mathrm{Train}}, \mathcal{L}_{\mathrm{Ent}}\}$.
This construction enforces a clear search priority. As long as the circuit violates the target threshold (i.e., $\mathcal{L}_{\mathrm{obj}}>0$), the optimizer is driven exclusively by $\mathcal{L}_{\mathrm{obj}}$. Only once the threshold is satisfied does the search minimize complexity.

We emphasize that Eq.~\eqref{eq:lexicographic_cost_single} is merely one convenient choice. Any objective function compatible with the search procedure could be used. In particular, hierarchical objectives are typically nonsmooth at the switching boundary and may therefore require tailored handling in gradient-based search algorithms \cite{Clarke1990,Burke2005}. We set the search space and the metric thresholds based on the best-performing benchmark PQC for the corresponding metric (details in Appendix~\ref{app:setup_details} Table \ref{tab:config}). 

\begin{figure*}[t]
    \centering
    \subfloat[]{%
        \includegraphics[width=0.47\textwidth]{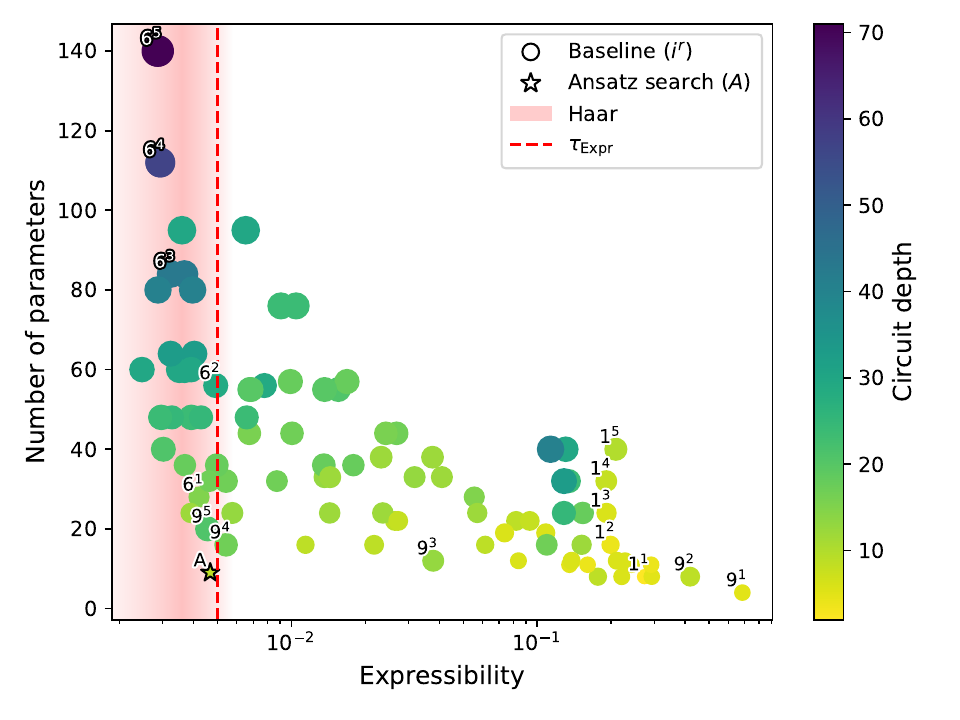}
    }\hfill
    \subfloat[]{%
        \includegraphics[width=0.47\textwidth]{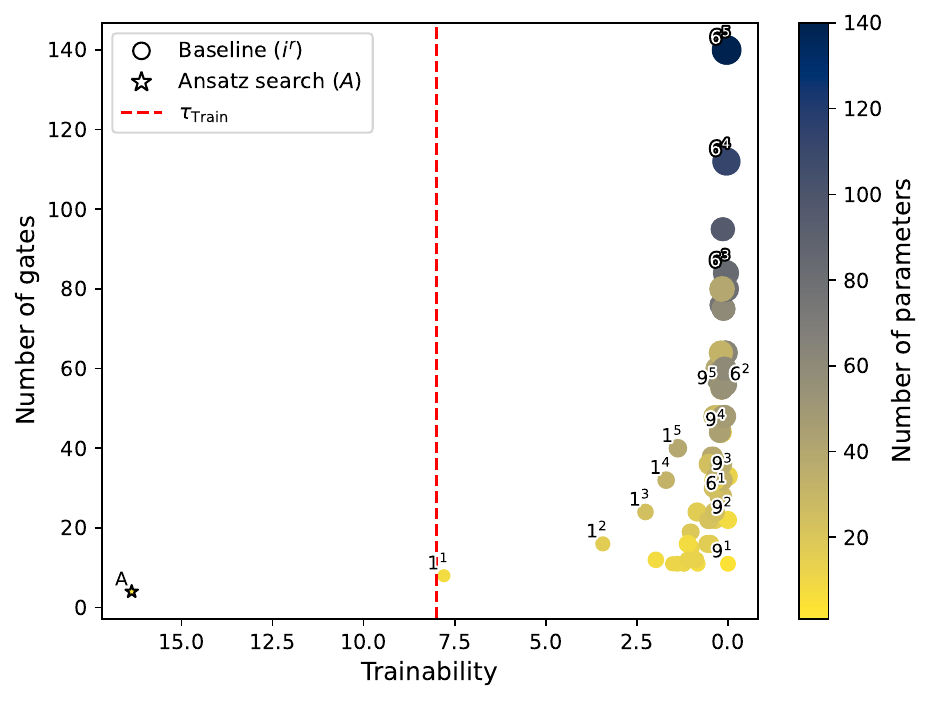}
    }\\[0.6em]
    \subfloat[]{%
        \includegraphics[width=0.47\textwidth]{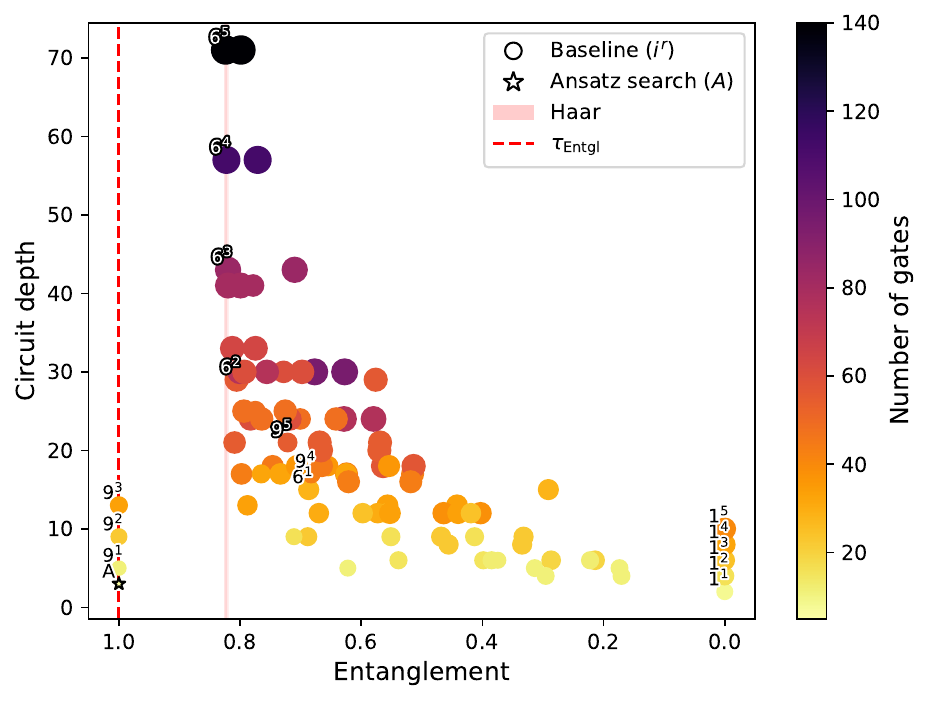}
    }\hfill
    \subfloat[]{%
        \includegraphics[width=0.47\textwidth]{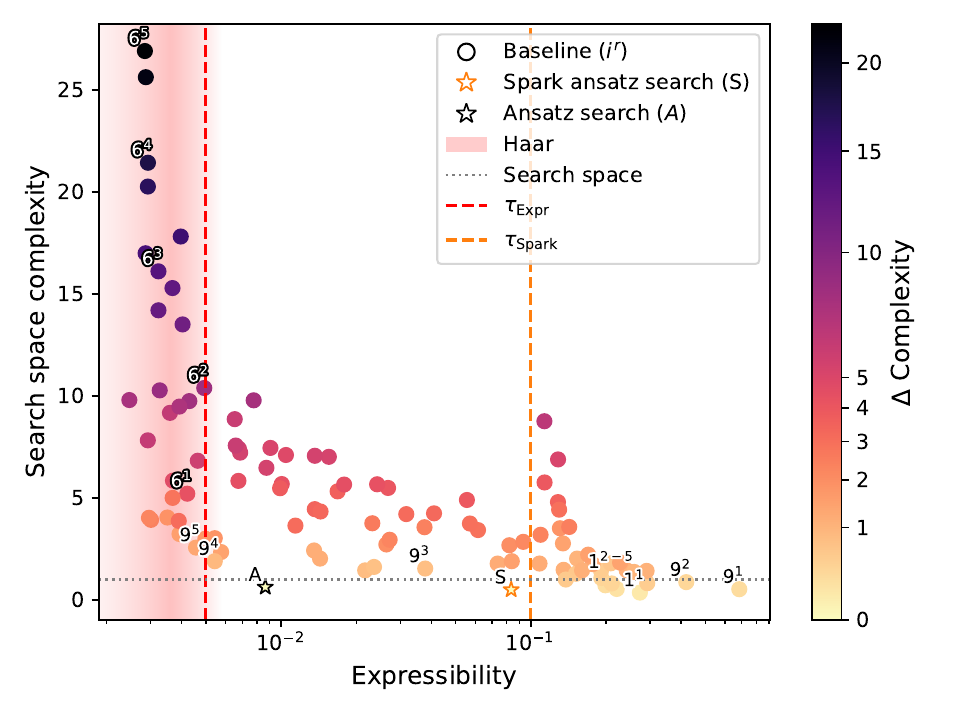} 
    } 
    \caption{Single-metric ansatz search. (a) Expressibility vs number of parameters ($|\boldsymbol{\theta}|$). (b) Trainability vs number of gates ($G$). (c) Entanglement vs circuit depth ($D$). (d) Expressibility vs search space complexity. Markers denote quantum circuit ansätze (see legend), comparing the circuit found by the ansatz search ($A$) to the benchmark circuits $i^r$ from \cite{Sim2019}, where $r$ is the number of repetitions. Where applicable, the horizontal band labeled \textit{Haar} indicates a $\pm 3\sigma$ interval estimated from $5{,}000$ (pair) Haar-random states for the corresponding metric. Marker size encodes a proxy for circuit complexity: in (a) $\propto \sqrt{G}$, in (b) $\propto \sqrt{D}$ (depth), and in (c) $\propto \sqrt{|\boldsymbol\theta|}$. Panel (d) shows the hardware-constrained expressibility search on the 5-qubit IQM Spark device (Spark ansatz search) using its native gate set and star connectivity.}
    \label{fig:single_obj}
\end{figure*} 

Figure~\ref{fig:single_obj} shows that the framework reliably finds ansätze that satisfy a single metric threshold. For expressibility, the discovered circuits are substantially more compact than the benchmark circuits while remaining Haar-like within the expected finite-sample fluctuations. Notably, the optimized solutions also reduce the number of trainable parameters, a practically important outcome because parameter count directly impacts the cost of gradient-based training. Analogous behavior is observed for entanglement and trainability. We find circuits that meet the respective thresholds with less complexity than the strongest benchmark circuit for that metric. 

In addition to simulations, we performed an expressibility search on the 5-qubit \emph{IQM Spark} \cite{Ronkko2024OnPremSuperconductingQC} JIQCER-5 quantum processor (Figure~\ref{fig:single_obj}d). To exploit hardware constraints, the search space was restricted to the native gate set and topology of the device, i.e., a star connectivity graph with parameterized single-qubit $R_z$ rotations and two-qubit $CZ$ gates. We executed 50 optimizer iterations on hardware. Nevertheless, the discovered ansatz improves over all benchmark circuits within the considered search space. This improvement is achieved despite the low circuit fidelities reported for random circuits on superconducting hardware \cite{Arute2019QuantumSupremacy}.

To test whether this hardware-constrained setup can still yield near-Haar expressibility, we additionally ran a 10,000-trial simulation using the same gate set and topology. In contrast to Figure~\ref{fig:single_obj}a we do not observe convergence to a Haar-like expressibility baseline in the noiseless simulation. We attribute the deviation from the expressibility baseline to restricting the circuit family to parameterized $R_z(\theta)$ gates only, i.e., we disallow fixed-angle rotations (constant $R_z$ offsets).  

Importantly, incorporating such fixed-angle $R_z$ rotations would not introduce additional transpilation overhead. By contrast, compiling the benchmark templates to the hardware-native basis incurs substantial complexity overhead by up to a factor of 20 relative to their original representation. This complexity efficiency is another reason why using an ansatz search might be favorable for executing a VQA on a quantum processor.

More broadly, we observe that highly complex circuits tend to become more Haar-like, consistent with deep random circuits approaching approximate $t$-design behavior \cite{Brandao2016LocalRandomCircuitsDesigns,Haferkamp2022randomquantum}. For example, the most expressive circuit at zero repetition (circuit 6) can remain near-Haar in expressibility for small repetitions and can approach the Haar entanglement baseline, whereas circuits that start highly entangling (circuit 9) may trade entanglement for expressibility as repetitions increase. For trainability, circuits degrade with repetitions, such as circuit 1 (see Figure~\ref{fig:single_obj}b). Interestingly, circuit 1, consisting only of single-qubit gates \cite{Sim2019}, exhibits the largest trainability while simultaneously being poorly expressive and having no entanglement. This shows why single-metric optima may not align with good ansätze for VQAs.

\begin{figure*}[t]
    \centering
    \subfloat[]{%
        \includegraphics[width=0.45\textwidth]{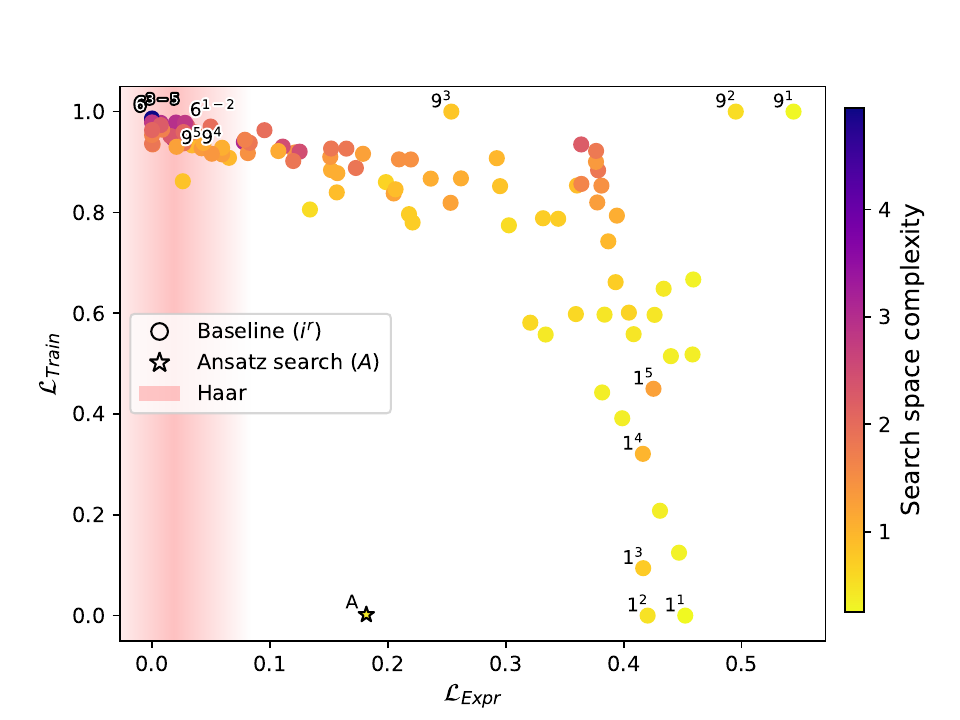}
    }\hfill
    \subfloat[]{%
        \includegraphics[width=0.45\textwidth]{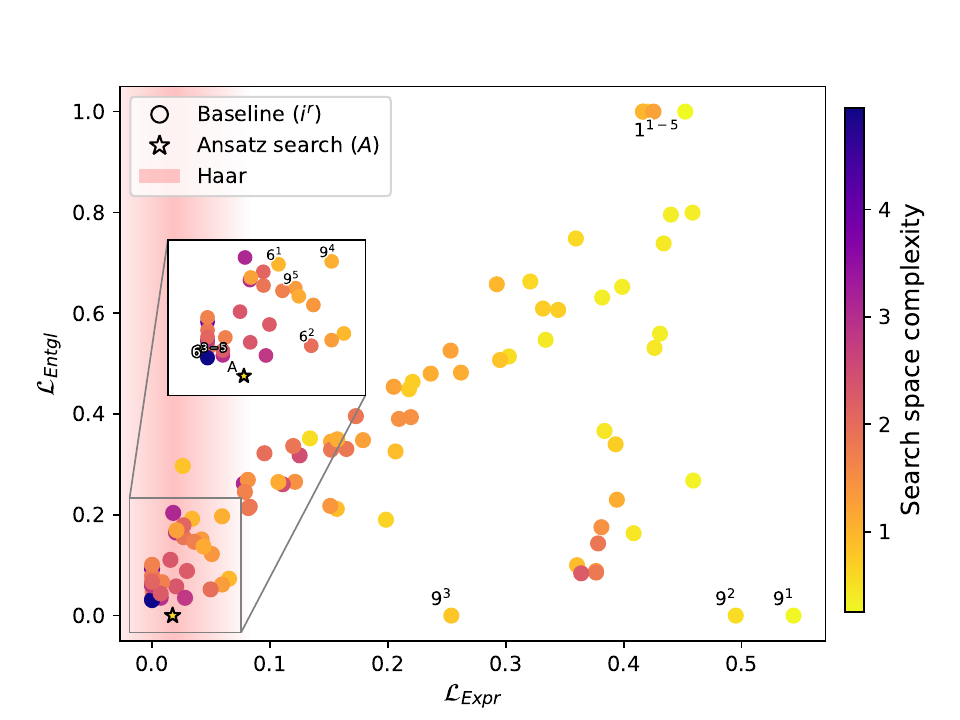}
    }
    \caption{Multi-objective ansatz search. Each point corresponds to a circuit $i^r$ from the benchmark families of \cite{Sim2019} (across repetitions $r$) and the circuit discovered by our search ($A$). The horizontal band labeled \textit{Haar} indicates a $\pm 3\sigma$ interval estimated from 5{,}000 pair Haar-random states for the expressibility metric. (a) Shows the expressibility vs.\ trainability and (b) expressibility vs.\ entanglement. For visualization purposes, we set the expressibility threshold to $0.003$, the trainability threshold to $2.5$, and the entanglement threshold to $0.85$.}    
    \label{fig:multi_obj}
\end{figure*}

This trade-off is also reflected in the circuits returned by the single-objective ansatz search (Appendix~\ref{app:single_obj}). For the trainability objective, the search similarly favors very shallow circuits, which can score highly under the trainability proxy while remaining poorly expressive and thus uninformative for VQA performance. For the entanglement objective, we find solutions that satisfy the entanglement threshold by appending parameterized gates whose optimized angles have negligible effect on the prepared state, effectively ``padding'' the circuit to meet the PQC structural constraints without improving expressibility or trainability. These artifacts underscore that single-metric optima need not correspond to practically useful ansätze and motivate interpreting PQCs jointly across metrics.

\subsection{Multi-objective optimization}
\label{sec:multi_obj}
Meaningful PQCs for practical applications typically need to satisfy multiple desirable properties simultaneously. As in the previous section, we treat circuit complexity as a secondary criterion and optimize the primary objectives until the chosen thresholds are met. Concretely, we use the hierarchical cost
\begin{equation}
\label{eq:lexicographic_cost_multi}
\mathcal{L} =
\begin{cases}
\mathcal{L}_{\mathrm{obj}_1} + \mathcal{L}_{\mathrm{obj}_2} + 1, & \mathcal{L}_{\mathrm{obj}_1} + \mathcal{L}_{\mathrm{obj}_2} > 0,\\[4pt]
\mathcal{L}_{\mathrm{Cmplx}}, & \text{otherwise},
\end{cases}
\end{equation}
where $\mathcal{L}_{\mathrm{obj}_i}\in\{\mathcal{L}_{\mathrm{Expr}}, \mathcal{L}_{\mathrm{Train}}, \mathcal{L}_{\mathrm{Ent}}\}$.
We emphasize that this is, as in Section~\ref{sec:single_objective}, a pragmatic choice. Depending on the target application, it can be beneficial to smooth the switching boundary or to enforce feasibility with a stricter aggregation (e.g., $\max(\mathcal{L}_{\mathrm{obj}_1},\mathcal{L}_{\mathrm{obj}_2})$) to prevent compensation between objectives.

We first address a central question of our paper: can one find circuits that are both expressive \emph{and} trainable? Figure~\ref{fig:multi_obj}a shows that circuits can satisfy moderate thresholds for both metrics, and that our search identifies ans\"atze in this region. At the same time, the region of simultaneously expressible and trainable circuits appears largely inaccessible for commonly used circuit families \cite{Sim2019}. This reproduces the theoretical expectation that expressibility and trainability are typically anticorrelated. Ensembles close to Haar randomness, or even approximate 2-designs, exhibit strong gradient concentration and thus barren-plateau--type behavior \cite{McClean2018BarrenPlateaus}.

A natural follow-up question is whether circuits that are expressive and trainable remain classically simulable. As one proxy for nontrivial many-body structure, we consider entanglement. Figure~\ref{fig:multi_obj}b indicates a strong positive correlation between expressibility and entanglement across the explored circuit families. In our setting, this suggests that an explicit entanglement constraint may not always be necessary, since high expressibility often coincides with high entanglement. Moreover, Figure~\ref{fig:multi_obj}b also shows that the search can identify circuits that simultaneously satisfy thresholds for expressibility and entanglement while remaining competitive with the best benchmark circuits, at lower complexity.

\begin{figure*}[t]
    \centering
    \subfloat[]{%
        \includegraphics[width=0.313\textwidth]{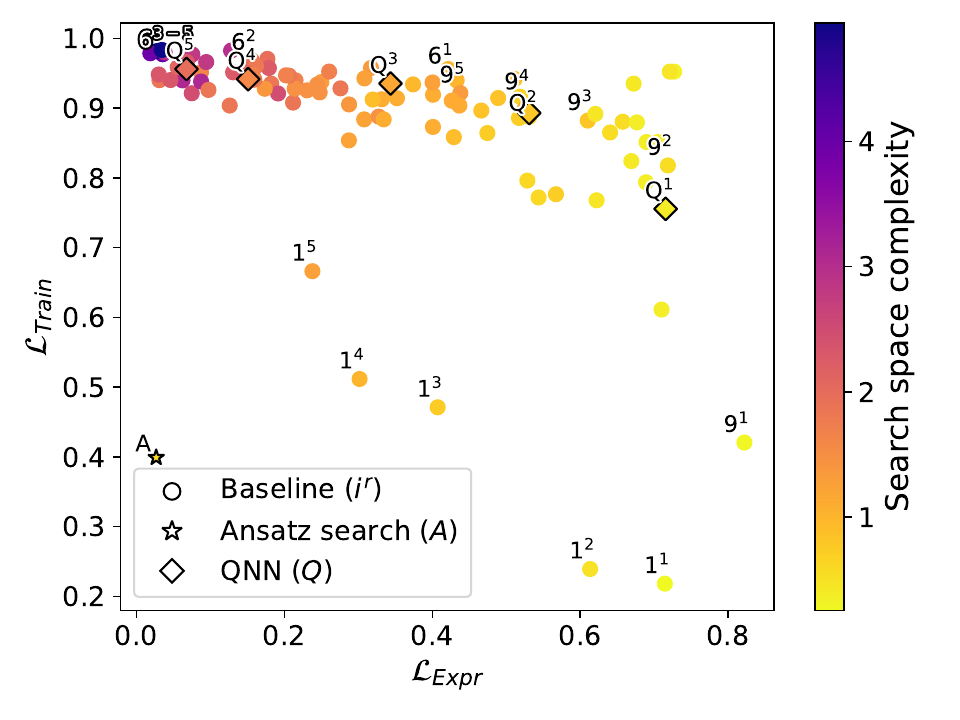}
    }\hfill
    \subfloat[]{%
        \includegraphics[width=0.333\textwidth]{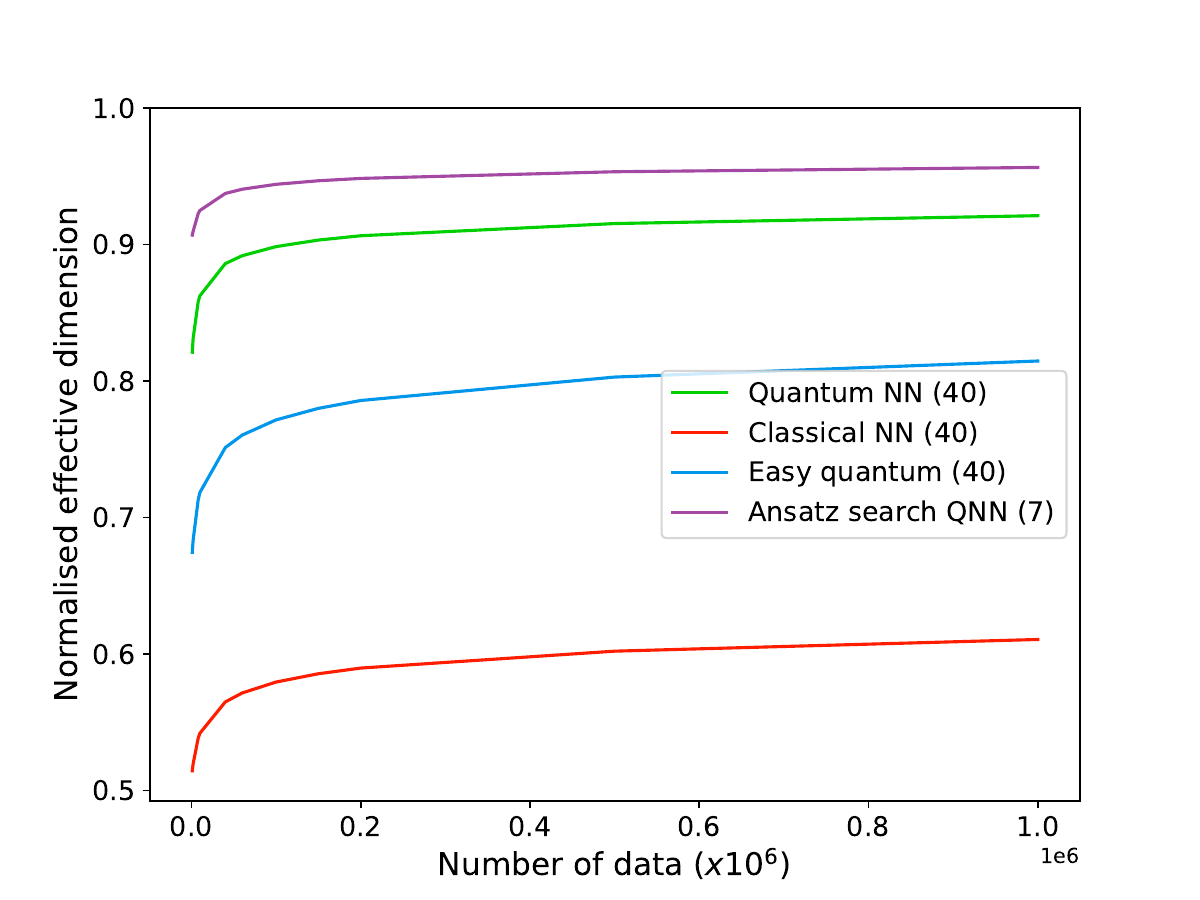}
    }\hfill
    \subfloat[]{%
        \includegraphics[width=0.333\textwidth]{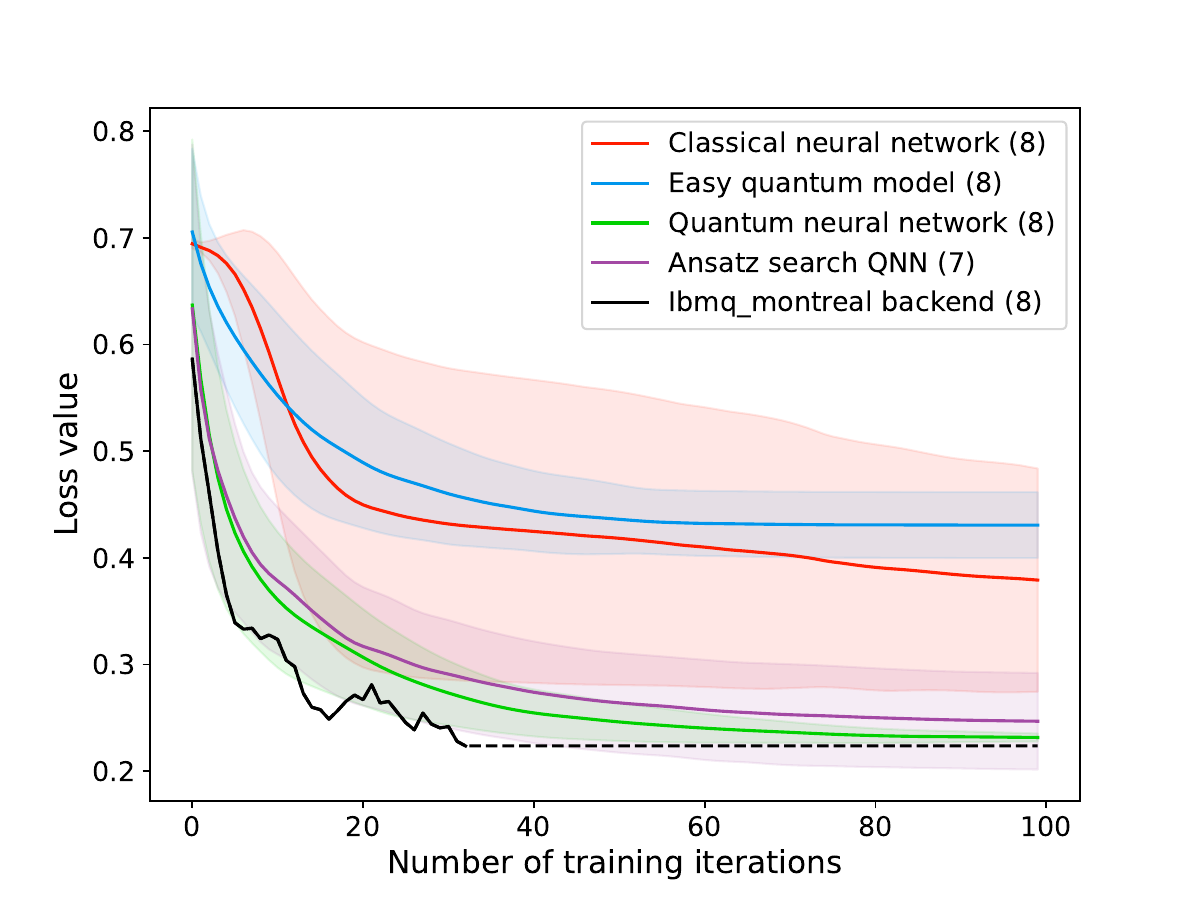} 
    }
    \caption{QNN case study following the setup of \cite{Abbas2021PowerQNN}. (a) Search outcomes in the expressibility--trainability plane for the considered search space. (b) Estimated effective dimension for the reference QNN architecture from \cite{Abbas2021PowerQNN} and for the discovered PQC. (c) Training performance on the dataset of \cite{Abbas2021PowerQNN}, using the same optimizer and protocol, comparing the reference architecture to our discovered PQC. For visualization purposes, we set the expressibility and trainability thresholds to \(0.003\) and \(2.5\), respectively.}
    \label{fig:qnn_training}
\end{figure*}

Finally, we note that the best-performing circuits found by the search are not necessarily ``good-looking'' in the sense of exhibiting simple symmetries (Appendix~\ref{app:extended_results}). Many benchmark templates are highly structured and visually appealing, yet not optimal for applications. Moreover, metric values can depend strongly on the \emph{input state} used for evaluation. In practice, a PQC used in an iterative optimization or as a repeated layer is generally \emph{not} applied to $\ket{0}^{\otimes n}$ after the first iteration/layer, so circuits assessed only on $\ket{0}^{\otimes n}$ may score differently under the states encountered during training. For instance, an $R_xR_z$ layer can explore the same set of single-qubit states as $(R_xR_z)^2$ when composed during training, while exhibiting noticeably different expressibility when evaluated only on $\ket{0}$ (for which the first $R_z$ gate has no effect). 

These observations motivate an application-dependent search that incorporates, where possible, domain knowledge. For the following sections, we explicitly note that we do not claim domain-optimal designs and that domain experts may achieve substantially better task-level performance.

\subsection{Quantum neural networks}
\label{sec:power_of_qnn}
A prominent application of parameterized quantum circuits (PQCs) is as variational models in quantum neural networks (QNNs) \cite{farhi2018classificationquantumneuralnetworks,Benedetti_2019,Beer2020TrainingDeepQNN,Abbas2021PowerQNN}. To assess whether metric-guided ansatz discovery can yield competitive QNN architectures, we adopt the experimental protocol of \cite{Abbas2021PowerQNN} and only replace the QNN circuit architecture with a PQC found by our framework.

For metric evaluation, we estimate expressibility and trainability by averaging over a small set of randomly chosen input states. Concretely, we sample five initial states, which we found sufficient for stable expressibility estimates in this setting (Appendix~\ref{app:qnn} Figure~\ref{fig:app_init}).
Following \cite{Abbas2021PowerQNN}, circuit parameters used during metric evaluation are drawn i.i.d.\ uniformly from $[-1,1]^{|\boldsymbol{\theta}|}$.
We perform a bi-objective search targeting expressibility and trainability, with circuit complexity minimized hierarchically once both metric thresholds are satisfied (Eq.~\ref{eq:lexicographic_cost_multi}). As in Figure~\ref{fig:multi_obj}a, the explored circuit families in Figure~\ref{fig:qnn_training}a again exhibit an anticorrelation trend between expressibility and trainability. The discovered ansatz again lies in the regime where this anticorrelation is observed, whereas the reference QNN exhibits relatively high trainability at one iteration but then degrades rapidly as the number of iterations increases, while its expressibility increases.

We first test the found ansatz on the \emph{effective dimension} \cite{Abbas2021PowerQNN}, an information-geometric complexity measure that estimates the size the model occupies in function (model) space at finite data resolution, derived from the Fisher information. In Figure~\ref{fig:qnn_training}b one can see, that the discovered PQC achieves a higher effective dimension than the five-layer reference QNN from \cite{Abbas2021PowerQNN}, while using substantially fewer trainable parameters ($7$ versus $40$). However, when following \cite{Abbas2021PowerQNN} for training on the first two classes of the Iris dataset, we do not observe an improvement in training performance over the reference model in this particular benchmark. One plausible explanation is that the dataset is comparatively easy, so that the additional functional capacity captured by the effective dimension does not translate into better generalization or lower training loss under the fixed optimization protocol. 

We additionally observe that the expressibility of circuit 1 increases with the number of repetitions, even though more than 2 repetitions do not enlarge the set of reachable states (Figure~\ref{fig:qnn_training}a). A plausible explanation is that our expressibility estimate, given a feature encoded initial state, by repeating the same block, can act as an additional “scrambling” step that drives the output ensemble closer to Haar-like statistics. This is consistent with results showing that (local) random circuits converge toward approximate unitary designs with increasing depth \cite{Brandao2016LocalRandomCircuitsDesigns}. We therefore interpret the observed increase in expressibility primarily as improved mixing toward Haar for the relevant input ensemble, not as evidence that more distinct states become accessible.

\begin{figure*}[t]
    \centering
    \subfloat[]{%
        \includegraphics[width=0.35\textwidth]{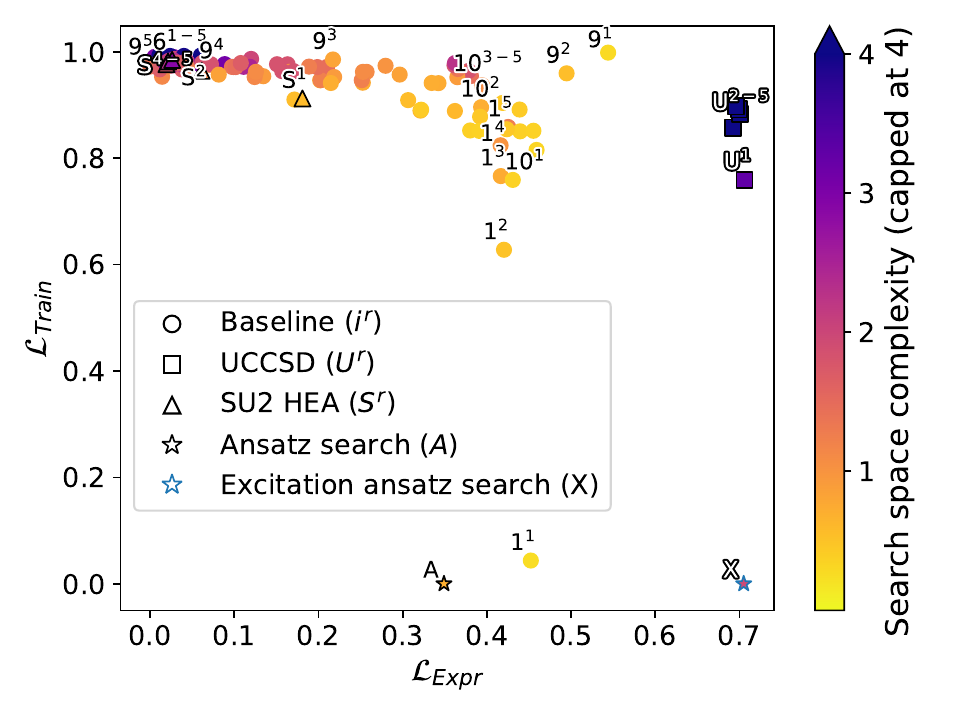}
    }\hfill
    \subfloat[]{%
        \includegraphics[width=0.31\textwidth]{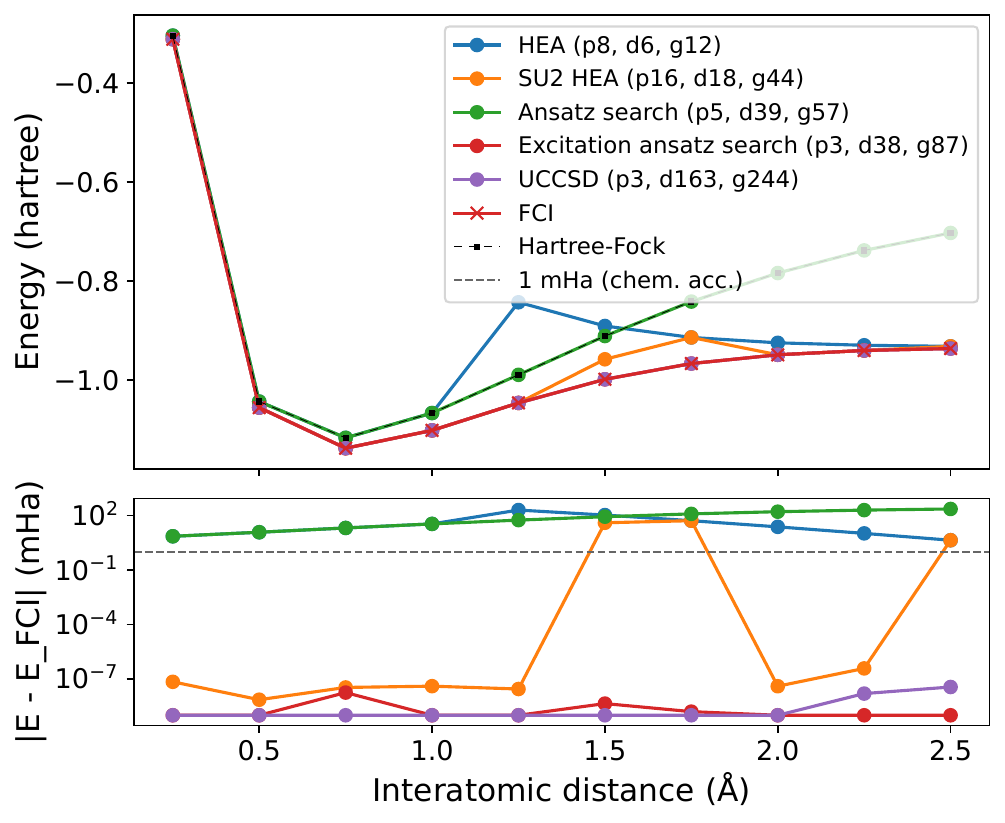}
    }\hfill
    \subfloat[]{%
        \includegraphics[width=0.31\textwidth]{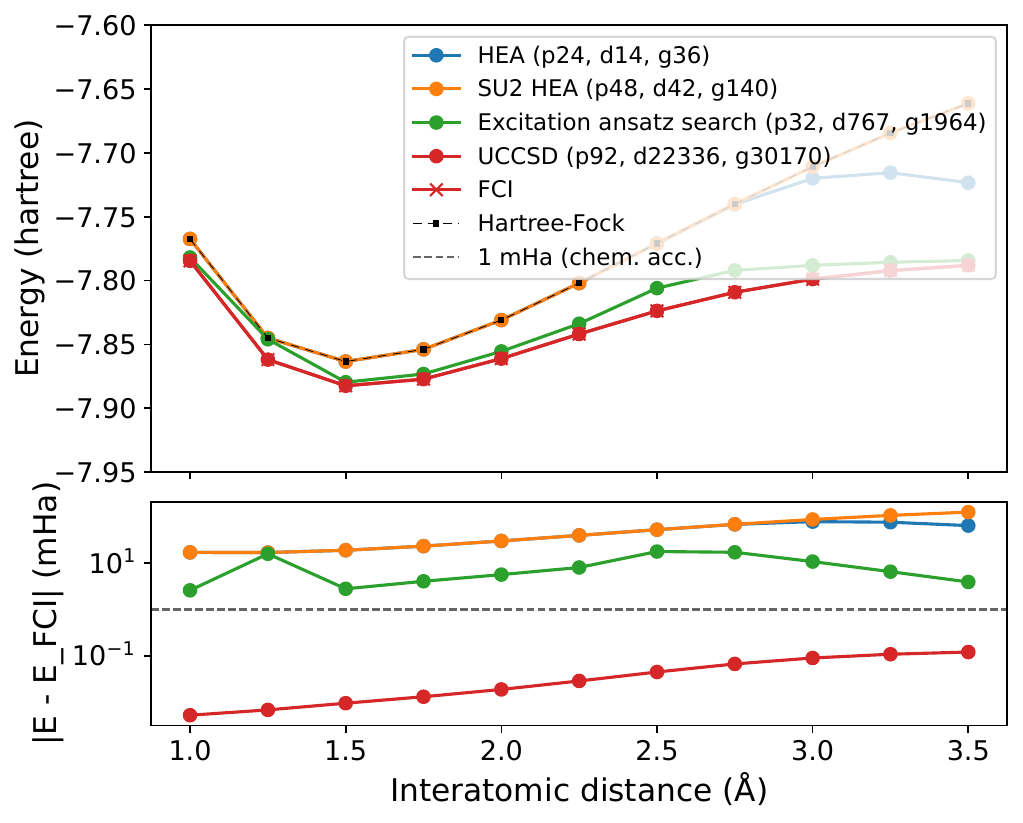}
    }
    \caption{VQE evaluation of ans\"atze discovered by metric-driven search.
(a) Metric values of the discovered ansatz and reference circuits for $\mathrm{H}_2$ at $R=0.75\,\text{\AA}$, including expressibility, trainability, and circuit complexity.
(b) VQE energy convergence for $\mathrm{H}_2$ across multiple interatomic distances using a single fixed ansatz optimized at $R=0.75\,\text{\AA}$, compared against UCCSD and hardware-efficient baselines.
(c) Proof-of-concept LiH VQE at $R=1.5\,\text{\AA}$ (100 Bayesian-optimization iterations). For visualization purposes, we set the expressibility and trainability thresholds to \(0.003\) and \(2.5\), respectively.}
    \label{fig:vqe_obj}
\end{figure*}

More broadly, this experiment highlights an important limitation: higher metric scores (or proxies such as effective dimension) do not automatically imply improved downstream performance. In addition, it remains unclear how the discovered architecture should be scaled with system size and depth. In particular, the dependence of both metric values and task-level performance on (i) the number of qubits and (ii) circuit repetitions/layers is nontrivial, and we leave a systematic scaling study to future work. Nevertheless, the experiment suggests that improved QNN architectures may be attainable, and that metric-driven design, as explored here, is a promising route toward them.

\subsection{Variational quantum eigensolver}
\label{sec:vqe}
The variational quantum eigensolver (VQE) is a hybrid quantum--classical algorithm for approximating the lowest eigenvalues of a Hamiltonian $\hat{H}$ by variationally optimizing a parametrized trial state \cite{Fedorov2022VQESurvey,Tilly2022VQEReview}. By the variational principle, the energy expectation value
\begin{equation}
\label{eq:vqe_opt}
E(\boldsymbol{\theta})
= \bra{\psi(\boldsymbol{\theta})}\hat{H}\ket{\psi(\boldsymbol{\theta})}
\ge E_0
\end{equation}
upper-bounds the ground-state energy $E_0$. VQE seeks parameters $\boldsymbol{\theta}$ that minimize $E(\boldsymbol{\theta})$, where the trial state is prepared by a parametrized quantum circuit
$\ket{\psi(\boldsymbol{\theta})}=U(\boldsymbol{\theta})\ket{\psi_{\mathrm{ref}}}$.

In this work, we consider many-body fermionic Hamiltonians arising in electronic-structure problems \cite{QChem2020McArdle,QChem2019Cao}. After mapping fermionic operators to qubits (via Jordan--Wigner \cite{JordanWigner1928PauliEquivalence}), the Hamiltonian can be expressed as a weighted sum of Pauli strings,
\begin{equation}
\label{eq:hamiltonian_decomp}
\hat{H} = \sum_{j} c_j P_j,
\qquad
c_j \in \mathbb{R},\;
P_j \in \{I,X,Y,Z\}^{\otimes n}.
\end{equation}
Accordingly, the VQE objective decomposes as
\begin{equation}
\label{eq:vqe_energy_sum}
E(\boldsymbol{\theta})
= \sum_j c_j \,\langle P_j\rangle_{\boldsymbol{\theta}},
\qquad
\langle P_j\rangle_{\boldsymbol{\theta}}
:= \bra{\psi(\boldsymbol{\theta})}P_j\ket{\psi(\boldsymbol{\theta})},
\end{equation}
and each $\langle P_j\rangle_{\boldsymbol{\theta}}$ is estimated from measurements in Pauli bases.

The performance of VQE depends critically on the choice of ansatz family $\{U(\boldsymbol{\theta})\}$.
A chemically motivated and widely used choice is the unitary coupled-cluster singles and doubles (UCCSD) ansatz \cite{Peruzzo2014VQE}, typically built on a Hartree--Fock (HF) \cite{QChem2020McArdle} reference state $\ket{\psi_{\mathrm{ref}}}=\ket{\psi_{\mathrm{HF}}}$. While UCCSD can achieve good accuracy for weakly correlated regimes, its parameter count and circuit depth scale unfavorably with system size, which limits its applicability on near-term hardware \cite{Kandala2017HardwareEfficientVQE,Grimsley2019ADAPTVQE,Tang2021QubitADAPTVQE,Yordanov2021QEBAVQE,Ramoa2025ADAPTVQE}. Hardware-efficient ans\"atze (HEAs) instead leverage device-native gate sets and connectivity \cite{Kandala2017HardwareEfficientVQE}, but HEAs are known to exhibit barren-plateau behavior in relevant regimes \cite{McClean2018BarrenPlateaus}.

We apply our ansatz-search framework to identify circuits that are simultaneously expressive, trainable, and of low complexity, using the same hierarchical objective as introduced in Eq.~\ref{eq:lexicographic_cost_multi}. As an initial guess, we initialize from the HF reference state as in \cite{Peruzzo2014VQE,Grimsley2019ADAPTVQE,Yordanov2021QEBAVQE}. We optimize the metrics for $\mathrm{H}_2$ at a bond length close to equilibrium, $R=0.75\text{\AA}$. We compare the discovered ansatz against the benchmark circuits from \cite{Sim2019}, including circuit~10 as a hardware-efficient ansatz (HEA), as well as against UCCSD and a standard two-local SU(2) hardware-efficient ansatz (see Appendix~\ref{app:vqe} Figure~\ref{fig:expr-pqc-efficientSU2}). All circuits are evaluated at the metric level, while UCCSD and the hardware-efficient ansätze are additionally assessed in terms of VQE performance for $\mathrm{H}_2$. For the VQE experiments, UCCSD is initialized from the Hartree–Fock reference state with initial parameters set to zero, whereas the hardware-efficient ansätze are initialized in the computational zero state with parameters sampled uniformly from $[0,2\pi]$, following standard practice \cite{Peruzzo2014VQE,Kandala2017HardwareEfficientVQE}.


The ansatz search circuit achieves favorable metric values (Figure~\ref{fig:vqe_obj}a). However, when deployed in VQE for $\mathrm{H}_2$ it fails to reliably converge to the ground-state energy (Figure~\ref{fig:vqe_obj}b). In contrast, UCCSD converges despite comparatively poor expressibility. This highlights that high ``global'' expressibility is not necessarily helpful for chemical ground state estimation. The relevant search manifold is constrained by particle number, spin, and by the physically relevant sector of Fock space \cite{QChem2019Cao,QChem2020McArdle}, rather than by the full Hilbert space. Consequently, expressibility must be interpreted relative to the task and its symmetry-constrained subspace. It is also interesting that UCCSD, with a single iteration, can exhibit strong trainability among the considered circuits.

Motivated by this mismatch, we replace the gate pool by qubit-excitation operators as used in QEB-ADAPT-VQE \cite{Yordanov2021QEBAVQE} (details in Appendix~\ref{app:vqe}). These operators are designed to generate chemically relevant excitations while preserving problem symmetries (e.g., particle number), and admit circuit-efficient implementations. Note that the expressibility is limited here by the subspace we explore. Using this pool, our search discovers an ansatz that successfully reaches the $\mathrm{H}_2$ ground state (Figure~\ref{fig:vqe_obj}b).

Searching for a separate circuit for each geometry can be costly. We therefore test whether an ansatz optimized at $R=0.75\,\text{\AA}$ generalizes across bond lengths. The same circuit achieves ground-state convergence for multiple interatomic distances (Figure~\ref{fig:vqe_obj}b). However, the underlying metrics are not stationary. While expressibility is geometry-independent, trainability depends on the objective landscape induced by the Hamiltonian coefficients. Writing the geometry-dependent Hamiltonian as $H(R)=\sum_i c_i(R) P_i$, the objective derivative reads
\begin{equation}
\label{eq:vqe_geometry}
\partial_{\theta_\mu} E(\boldsymbol{\theta};R)
=
\sum_i c_i(R)\,
\partial_{\theta_\mu}
\bra{\psi_{\mathrm{ref}}}
U^\dagger(\boldsymbol{\theta})\, P_i\, U(\boldsymbol{\theta})
\ket{\psi_{\mathrm{ref}}}.
\end{equation}
Therefore, gradients at different interatomic distances are not independent. They differ only through the coefficients $c_i(R)$, while they share the same underlying parameter derivatives of the projected Pauli expectation values. Importantly, evaluating trainability across several distances incurs only constant overhead. We leave a joint multi-interatomic distance objective for future work.

As a proof of concept, we also consider LiH at a bond length close to equilibrium, $R=1.5\text{\AA}$, and perform 100 iterations of Bayesian optimization. In this case, a single fixed ansatz found by our search does not consistently converge to the ground state across geometries (Figure~\ref{fig:vqe_obj}c). While most experiments required far fewer than 10,000 iterations, this harder problem may require a larger optimization budget to obtain meaningful results, which we leave for future work. More broadly, tackling chemically harder instances will likely benefit from improved computational efficiency and stronger chemistry/optimization priors. In particular, instead of comparing against Haar randomness over the full Hilbert space, we plan to introduce a more task-aligned expressibility definition in which  $\Pr_{\mathrm{target}}$ in Eq.~\ref{eq:generic_expr} represents a physically relevant subspace motivated by chemically relevant manifolds. 

Note, our framework is metric-agnostic: users can plug in task-specific objectives and constraints, and the search returns ansätze tailored to those choices.

\section{Summary and outlook}
\label{sec:summary}
We have presented a metric-guided ansatz-search framework for discovering parameterized quantum circuits (PQCs) under near-term constraints. On the methodological side, we have introduced an operational barren-plateau diagnostic by estimating gradient fluctuations from finitely many bounded (clipped) samples. Using a concentration bound, we obtain a dimension-free sample complexity requirement for trainability estimation. 

The ansatz search discovers circuits that satisfy the respective metric thresholds while substantially reducing complexity relative to the strongest benchmark circuits for that metric. In multi-objective searches, we reproduce the expected anticorrelation between expressibility and trainability. Highly expressive circuits tend to be poorly trainable, and the region of simultaneously expressive and trainable circuits appears difficult to access within common circuit families. Nevertheless, our results show that the framework can navigate this trade-off in a controlled way and identify feasible circuits in the joint expressibility--trainability regime. We have further observed a strong positive correlation between expressibility and entanglement in the explored families and identified circuits that satisfy joint thresholds while remaining competitive with the best benchmarks at substantially lower complexity.

We have also demonstrated hardware-constrained search on the 5-qubit IQM Spark device \cite{Ronkko2024OnPremSuperconductingQC} by restricting the search space to the native gate set and star connectivity. Already with 50 hardware iterations, the discovered ansatz improves over all benchmark circuits within the considered hardware-constrained search space. Together with the substantial overhead incurred when compiling structured templates into the native basis, this motivates direct hardware-aware ansatz discovery.

To connect metric-level optimization to downstream performance, we have evaluated discovered circuits in a QNN case study and in VQE. In the QNN setting (following \cite{Abbas2021PowerQNN}), the discovered PQC achieves a higher effective dimension while using substantially fewer trainable parameters than the reference architecture. In VQE, we find that favorable ``global'' metrics (in particular Haar-based expressibility) are not sufficient to guarantee ground-state convergence. Replacing the gate pool by symmetry-preserving excitation operators (as in QEB-ADAPT-VQE \cite{Yordanov2021QEBAVQE}) resolves this mismatch, underscoring the importance of aligning the ansatz with the target manifold.


\paragraph*{Future work.} For future work, we propose to study the effect of evaluating the metrics on a larger set of initializations (potentially including randomization), instead of fixed reference input states. The reason is that in an actual parameter optimization, the state may quickly move away from the fixed reference manifold. Understanding and leveraging this metric drift seems key to scaling property-based ansatz design. Beyond scaling via repetitions, we propose to study how discovered architectures scale with qubit number, add further metrics (e.g., magic/non-stabilizerness–inspired measures), and seek structural principles for circuits that are both expressive and trainable—potentially even crafting ansätze tied to physical phenomena to aid explanation and understanding. Finally, we note that our current implementation and optimization budgets are not yet efficiency-tuned, and we encourage domain experts to improve on these results.

\section*{Acknowledgements}
D.W. acknowledges support from the project J\"ulich UNified Infrastructure for Quantum computing (JUNIQ) that has received funding from the Federal Ministry of Research, Technology and Space (BMFTR) and the Ministry of Culture and Science of the State of North Rhine-Westphalia.
The authors gratefully acknowledge the Gauss Centre for Supercomputing e.V. (www.gauss-centre.eu) for funding this project by providing computing time on the GCS Supercomputer JUWELS at J\"ulich Supercomputing Centre (JSC).

\section*{Data availability}
The datasets generated during and/or analyzed during the current study are available from the corresponding author on reasonable request.

\appendix

\section{Expressibility at scale}
\label{app:expr}
The expressibility of a parameterized quantum circuit (PQC), when quantified via fidelity-distribution matching, is not solely a property of the gate template; it also depends on the input state ensemble on which the circuit acts. In the main text we therefore evaluate expressibility under multiple \emph{initializations} that serve as proxies for realistic inputs, e.g., states that arise as outputs of earlier layers (see Section~\ref{sec:power_of_qnn}). This is important because the same PQC can exhibit markedly different expressibility when applied to different input ensembles, as illustrated in Figure~\ref{fig:app_expr_rz}a. Consequently, in settings where a circuit block is reused across layers, one should not assume the input is $\ket{0}^{\otimes n}$; instead, expressibility should be assessed under the relevant intermediate-state distribution. This dependence is particularly relevant when the same block is repeated across layers, since repetition changes the distribution of intermediate states and can thereby alter the observed expressibility \cite{Sim2019}.

We emphasize that this initialization-based analysis does not constitute a direct, quantitative characterization of how expressibility changes as a function of the number of repetitions of the same block. A systematic metric for “expressibility growth under repetition” is left for future work.

Finally, fidelity-histogram expressibility can be artificially inflated by discretization effects. As a concrete example, consider a circuit consisting only of single-qubit $R_z$ rotations acting on the product input $\ket{+}^{\otimes n}$ with 75 fidelity bins (Figure~\ref{fig:app_expr_rz}b). This family explores only a low-dimensional manifold of product states and is far from Haar-expressive. However, for large $n$ the Haar fidelity distribution concentrates near $F\approx 0$ (with $N=2^n$), and with a fixed, coarse binning on $[0,1]$ the empirical histogram of such non-Haar ensembles can become indistinguishable from the discretized Haar reference within numerical precision. In this regime, a small binned discrepancy (e.g., $\mathrm{Expr}\approx 0$) reflects limited histogram resolution rather than genuine Haar-like coverage.

\section{Trainability error}
\label{app:train_err}
As a lightweight noise proxy, we account only for the physical error rates of one-qubit ($p_1$) and two-qubit ($p_2$) gates and their respective counts. This yields the heuristic circuit error probability
\begin{equation}
\operatorname{Pr}_{\mathrm{PQC}}(\mathrm{err})
   =
   1 -
   \bigl(1-\,p_1\bigr)^{N_1}\;
   \bigl(1-\,p_2\bigr)^{N_2},
\end{equation}
where $N_1$ and $N_2$ are the numbers of one-qubit and two-qubit gates in the circuit, respectively. 

We emphasize that more refined approaches include gate-dependent noise-channel simulations (e.g., Pauli/depolarizing or amplitude-damping noise with readout error), device-level characterization of gate errors via randomized benchmarking \cite{ErrComput2012Easwar,Wallman2014RandomizedBenchmarkingConfidence} or gate-set tomography \cite{BlumeKohout2017GSTFaultToleranceThreshold}, and holistic circuit benchmarks such as cycle benchmarking \cite{Erhard2019CycleBenchmarking}, cross-entropy benchmarking \cite{Boixo2018QuantumSupremacyCharacterization,Arute2019QuantumSupremacy}, or quantum volume \cite{PhysRevA.100.032328}.

\section{Ansatz-search setup}
\label{app:setup_details}

The ansatz search is carried out using Qiskit \cite{qiskit2024} and Bayesian optimization implemented with Optuna \cite{optuna}. 
The candidate circuits are assembled from the same gate pool used in our benchmark ansätze,
\[
\{\mathrm{H}, \mathrm{R_X}, \mathrm{R_Y}, \mathrm{R_Z}, \mathrm{CX}, \mathrm{CZ}, \mathrm{CRX}, \mathrm{CRZ}\},
\]
when not otherwise stated. Circuit construction proceeds sequentially: for each position in the circuit, the optimizer selects both a gate type and the qubit index or qubit pair on which that gate is applied. In this manner, the full ansatz is generated step by step. To reduce redundant exploration of the search space, candidate models that have been proposed more than $10$ times are pruned.

\begin{figure*}[t]
    \centering
    \begin{minipage}[t]{0.34\textwidth}
        \centering
        {\small (a)}\\[-0.0em]
        \includegraphics[width=0.45\linewidth]{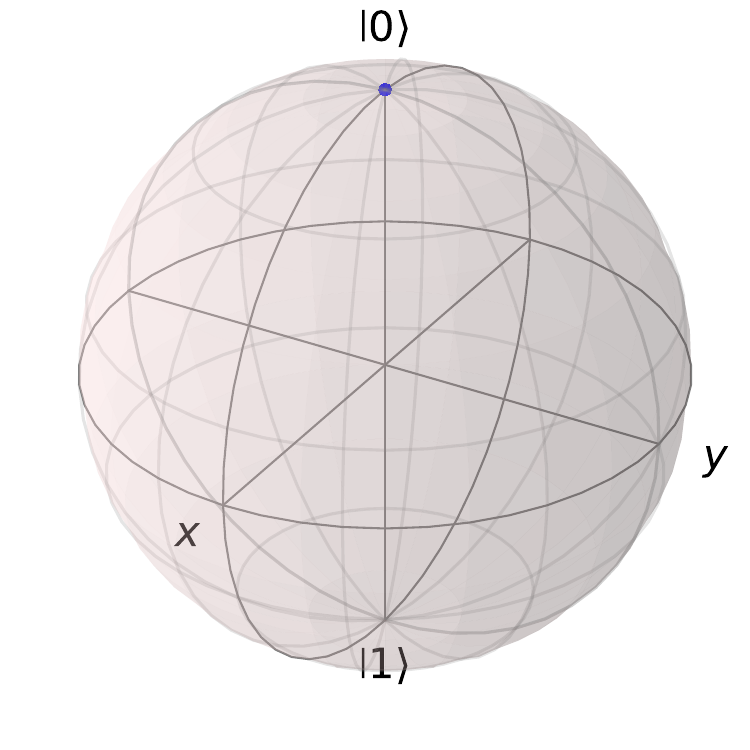}\hfill
        \includegraphics[width=0.45\linewidth]{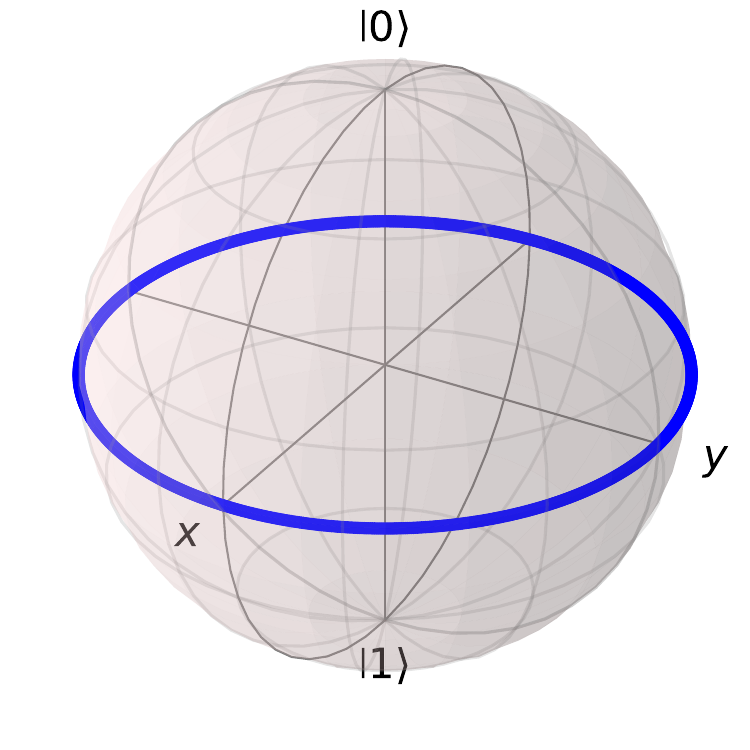}
        \label{fig:app_expr_rz_bloch}
    \end{minipage}\hfill
    \begin{minipage}[t]{0.64\textwidth}
        \centering
        {\small (b)}\\[-0.0em]
        \includegraphics[width=0.295\linewidth]{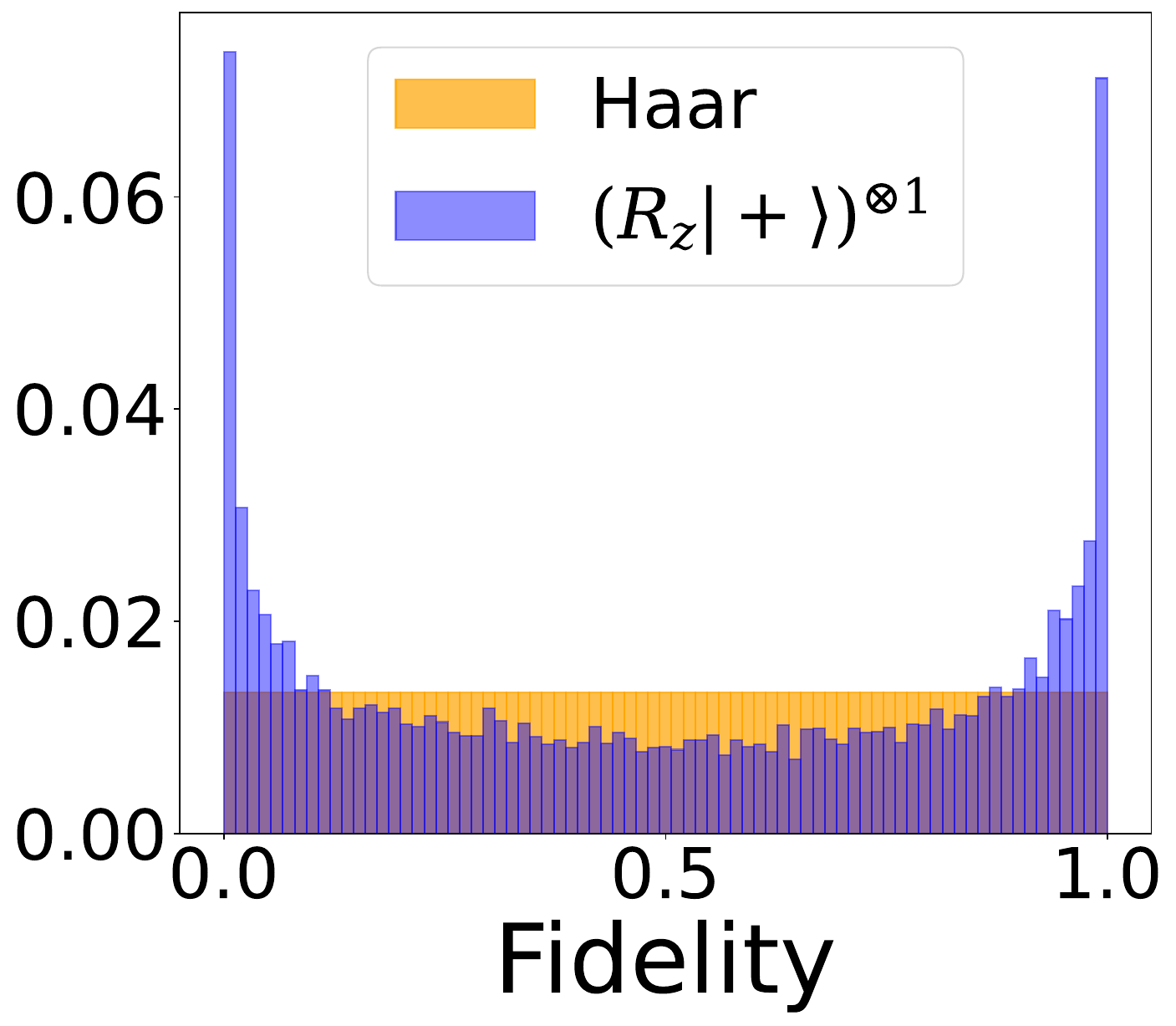}\hfill
        \includegraphics[width=0.285\linewidth]{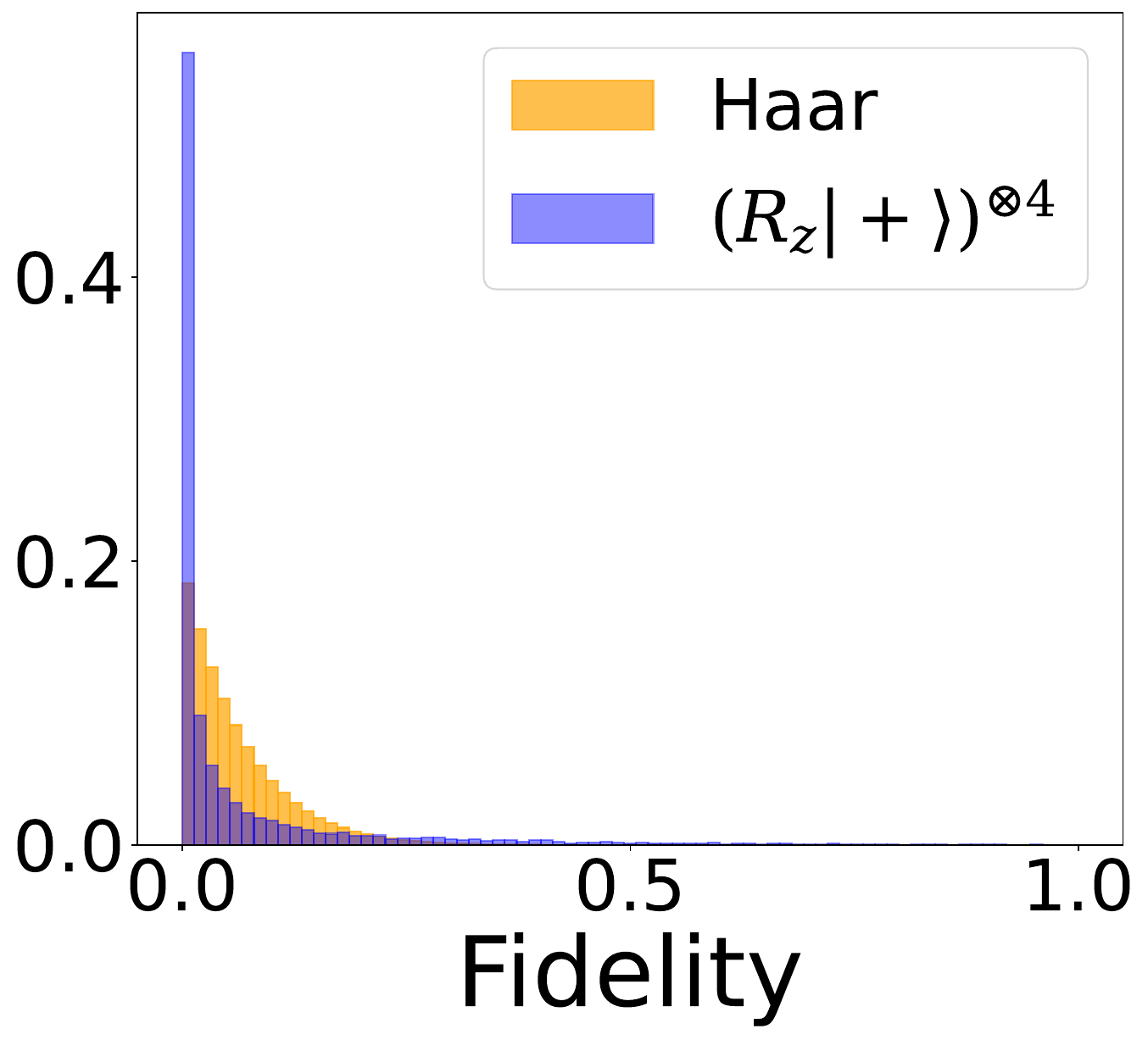}\hfill
        \includegraphics[width=0.285\linewidth]{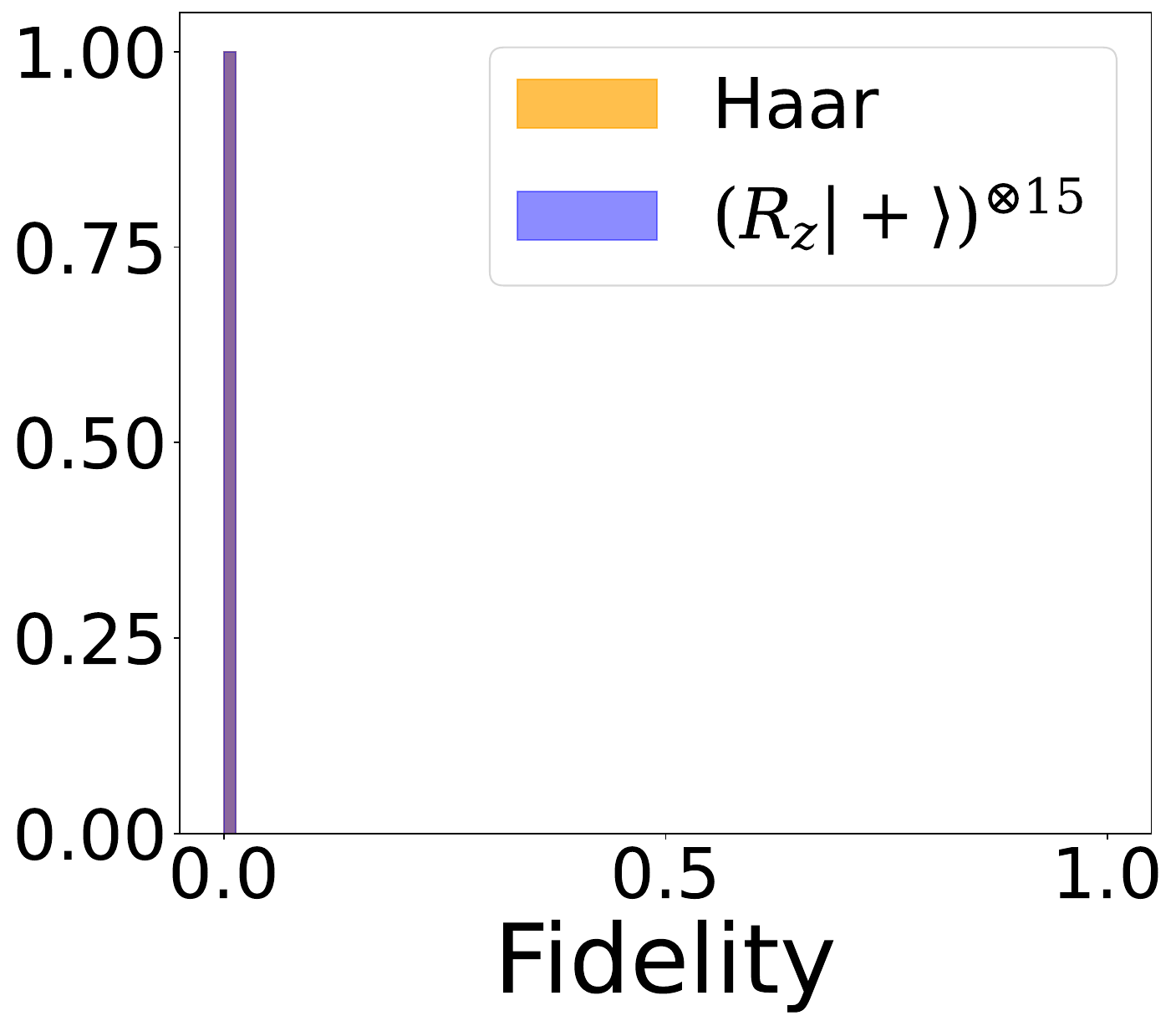}
        \label{fig:rz_fid_scale_hist}
    \end{minipage}

    \caption{%
    (a) Bloch-sphere intuition for the single-qubit family generated by $R_z(\theta)$ under two inputs: starting from $\ket{0}$, the state remains fixed on the Bloch sphere (localized ensemble), whereas starting from $\ket{+}$ the state traces the equator as $\theta$ varies (spread ensemble).
    (b) Binned pairwise-fidelity histograms for the corresponding $n$-qubit product ensemble produced by applying $R_z(\theta_i)$ on each qubit $i$ starting from $\ket{+}$ (with 75 bins), compared against the discretized Haar reference as $n$ increases.%
    }
    \label{fig:app_expr_rz}
\end{figure*}

All variational parameters are initialized by sampling uniformly from the interval $[0,2\pi]$. This provides a simple and uniform parameter domain for all considered circuit families. In principle, other initialization schemes could also be employed. For instance, sampling from narrower or Gaussian-distributed domains may be advantageous in settings where one aims to reduce the tendency toward barren-plateau-prone initializations \cite{NEURIPS2022_7611a3cb}.

Since our framework is intended to remain applicable to general gate sets, gradients are estimated using finite differences with step size $10^{-7}$. For specific gate libraries, analytic gradient techniques such as the parameter-shift rule \cite{Parameter2018Mitarai,Parameter2019Schuld} are preferable and should be used whenever available. Accordingly, the computational cost of our trainability metric depends directly on the efficiency with which gradients can be obtained. For the trainability evaluations in Sections~\ref{sec:single_objective} and~\ref{sec:multi_obj}, we use the local Pauli-\(Z\) observable
\(
O = I \otimes I \otimes I \otimes Z
\).

To ensure a fair complexity-controlled comparison, we further restrict the search space using the most efficient benchmark circuit that achieves the target objective (Table~\ref{tab:config}). Concretely, the maximum number of parameters, circuit depth, and total number of gates are bounded by those of the least complex successful benchmark ansatz. During sequential circuit construction, a trial is terminated as soon as one of these limits is reached. The candidate circuit is subsequently evaluated only if it defines a valid ansatz, meaning that it is connected and contains at least one trainable parameter; otherwise, the trial is pruned. 

For the quantum hardware expressibility experiment and the VQE experiments, all circuits are transpiled with the Qiskit transpiler at optimization level 3 to the native gate set $\{\mathrm{CZ}, \mathrm{RZ}\}$. In the quantum hardware experiment, transpilation additionally respects the hardware topology.

\begin{table*}[t]
\centering
\small
\resizebox{\linewidth}{!}{%
\input{tab/setup.tex}}
\caption{Search-space constraints and metric thresholds for all experiments. For each experiment, the table lists the number of qubits ($n$), the allowed gate set ($\mathcal{G}$), the connectivity/topology constraint ($\mathcal{T}$) when applicable, the bounds imposed on parameter count ($|\boldsymbol{\theta}|$), gate count ($G_{\max}$), and circuit depth ($D_{\max}$) during ansatz search, as well as the corresponding number of optimization trials ($N_{\text{trials}}$), metric thresholds used in the objective ($\tau_{\mathrm{Expr}},\tau_{\mathrm{BP}},\tau_{\mathrm{Ent}}$), and the epsilon truncation ($\varepsilon_{Expr}$).}
\label{tab:config}
\end{table*}

\subsection{Threshold selection.}
Selecting appropriate thresholds for the evaluation metrics considered in this work is a nontrivial task. A seemingly natural choice would be to define idealized target values, such as setting the expressibility threshold to the expected statistical noise level obtained from sampling Haar-random states, or defining the trainability threshold as the ratio between the maximal attainable gradient magnitude and the minimal circuit complexity required to achieve it.

However, such idealized thresholds are of limited practical use. In particular, it is generally unknown \emph{a priori} whether parametrized circuits can attain multiple of these values at all, nor how the required circuit depth or gate count scales in order to approach them. 

For these reasons, we adopt a data-driven thresholding strategy based on benchmark circuits. Concretely, for each metric, we define the threshold as the best value observed across the fixed set of benchmark circuits evaluated under identical conditions (Section~\ref{app:extended_results}). These benchmarks serve as reference points representing empirically attainable performance within the considered design space (Table~\ref{tab:config}).

We emphasize that this choice does not imply that the resulting thresholds are optimal or fundamental. Rather, they provide a consistent and operationally meaningful baseline that enables relative comparisons between different ansätze. A systematic investigation of theoretically motivated or problem-dependent threshold definitions is left for future work.

\begin{figure*}[t]
    \centering
    \subfloat[]{%
        \includegraphics[width=0.31\textwidth]{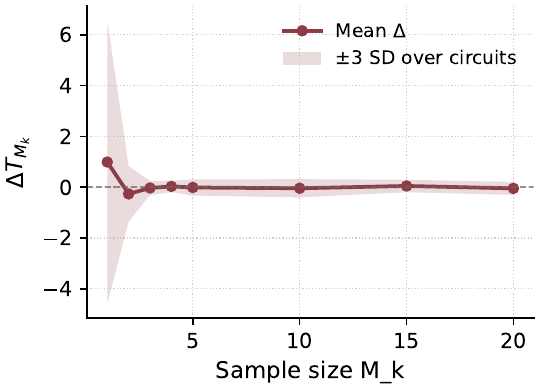}
    }\hspace{0.2\textwidth}
    \subfloat[]{%
        \includegraphics[width=0.31\textwidth]{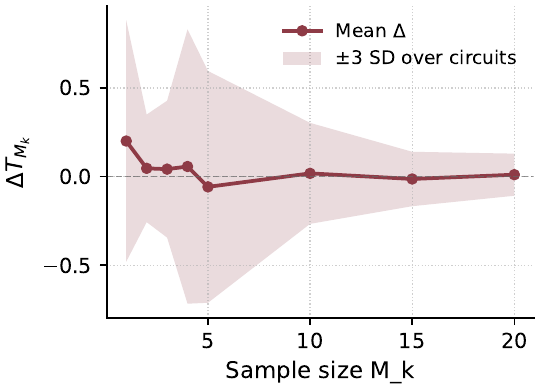}
    }
    \caption{Sensitivity of metric estimates to the number of sampled initial states in the QNN setting. For all baseline circuits and the reference QNN of \cite{Abbas2021PowerQNN}, metric values are computed by averaging over $|\mathcal{S}_0|=M_k$ randomly drawn (feature-map encoded) input states with $(M_k)=(1,2,3,4,5,10,15,20)$ and over circuit parameters uniformly sampled i.i.d.\ from $[-1,1]^{|\boldsymbol{\theta}|}$ following \cite{Abbas2021PowerQNN}. (a) Stepwise change in expressibility $E$ between consecutive sample sizes, $\Delta E_k := E(M_k)-E(M_{k-1})$. (b) Stepwise change in trainability $T$, $\Delta T_k := T(M_k)-T(M_{k-1})$.}
    \label{fig:app_init}
\end{figure*}

\subsection{QNN}
\label{app:qnn}
Measuring all qubits in the computational ($Z$) basis yields bitstrings
$x \in \{0,1\}^n$ distributed as $x \sim p_{\boldsymbol\theta}(x)$.
Any model output used for learning is then a classical post-processing $f(x)$
of these samples. For a shot outcome $x$, the eigenvalue of $Z_i$ is
$(-1)^{x_i}$, hence $\langle Z_i\rangle = \mathbb{E}_{x\sim p_{\boldsymbol\theta}}[(-1)^{x_i}]$.
In particular, parity classification corresponds to
\begin{equation}
f(x) = (-1)^{\sum_{i=1}^n x_i},
\end{equation}
whose expectation equals the parity correlator
\begin{equation}
\mathbb{E}_{x\sim p_{\boldsymbol\theta}}\!\left[f(x)\right]
= \left\langle \bigotimes_{i=1}^n Z_i \right\rangle
= \langle Z^{\otimes n} \rangle.
\end{equation}
Moreover, the even/odd probabilities are given by
\begin{equation}
p_{\mathrm{even}} = \frac{1 + \langle Z^{\otimes n}\rangle}{2},
\qquad
p_{\mathrm{odd}}  = \frac{1 - \langle Z^{\otimes n}\rangle}{2}.
\end{equation}
To diagnose barren plateaus for this parity-based readout, we study the scaling
of the gradient variance
\begin{equation}
\operatorname{Var}_{\boldsymbol\theta}\!\left[\partial_{\theta_k}\langle Z^{\otimes n}\rangle\right],
\end{equation}
with optional averaging over parameters $k$ and input samples.


\subsection{VQE}
\label{app:vqe}
We work in the Jordan--Wigner encoding, where each qubit represents the
occupation of one molecular \emph{spin orbital} and $\ket{1}$ ($\ket{0}$)
denotes an occupied (unoccupied) spin orbital. We use a blocked spin ordering:
the first $n_{\mathrm{spatial}}$ qubits correspond to $\alpha$-spin orbitals,
and the remaining $n_{\mathrm{spatial}}$ qubits correspond to $\beta$-spin
orbitals. Thus, qubits $p$ and $p+n_{\mathrm{spatial}}$ correspond to the
$\alpha$- and $\beta$-spin orbitals of the same spatial orbital $p$,
respectively. We denote the spin associated with index $q$ by
$\sigma_q \in \{\alpha,\beta\}$ according to the spin block to which $q$
belongs. 

\paragraph*{Single excitations (SE).}
The gate $\tilde A_{ik}(\theta)$ implements a number-preserving hopping between
spin orbitals $i$ and $k$:
\begin{equation*}
\tilde{A}_{ik}(\theta)
=
\exp\!\left(
i\,\frac{\theta}{2}
\left(
X_i Y_k - Y_i X_k
\right)
\right).
\end{equation*}
It performs an $\mathrm{SU}(2)$ rotation in the two-qubit subspace spanned by
$\{\ket{10},\ket{01}\}$, while leaving $\ket{00}$ and $\ket{11}$ invariant.
To conserve the spin projection $S_z$, we restrict to spin-conserving single
excitations, i.e.\ $\sigma_i=\sigma_k$.

\paragraph*{Double excitations (DE).}
The gate $\tilde A_{ijkl}(\theta)$ implements a number-preserving two-electron
excitation that mixes configurations in which two occupations are transferred
between two pairs of spin orbitals (for example,
$\ket{1100}\leftrightarrow\ket{0011}$ in a suitable ordering):
\begin{equation*}
\scalebox{0.85}{$
\begin{aligned}
\tilde{A}_{ijkl}(\theta)
&=
\exp\!\Bigg(
i\,\frac{\theta}{8}
\Big(
  X_i Y_j X_k X_l
+ Y_i X_j X_k X_l
+ Y_i Y_j Y_k X_l
+ Y_i Y_j X_k Y_l \\
&\hspace{2.2em}
- X_i X_j Y_k X_l
- X_i X_j X_k Y_l
- Y_i X_j Y_k Y_l
- X_i Y_j Y_k Y_l
\Big)
\Bigg).
\end{aligned}
$}
\end{equation*}
We restrict to spin-conserving double excitations by allowing only index combinations for which the total spin projection is preserved.

\begin{figure*}[tbp]
  \centering
  \begingroup
  \input{pqc/efficientSU2}
  \endgroup
  \caption{SU2 HEA.}
  \label{fig:expr-pqc-efficientSU2}
\end{figure*}

\onecolumngrid

\section{Theorem \ref{thm:main} proof}
\label{app:main_proof}
\begin{theorem}
Let $L,U\in\mathbb R$ with $L<U$ and $R:=U-L$, $\mathbf{X}=(X_1,\dots,X_m)$ be independent random gradient samples with $X_i\in[L,U]$, and $m\ge 3$. Define $Z(\mathbf{X}):=\frac{m}{R^2}s_m^2$, where the sample variance is    
\[
s_m^2:=\frac{1}{m-1}\sum_{i=1}^m (X_i-\bar X)^2,
\qquad
\bar X:=\frac{1}{m}\sum_{i=1}^m X_i.
\]
 For $k\in\{1,\dots,m\}$ and $y\in[L,U]$, let $\mathbf X_{y,k}$ be obtained from $\mathbf X$ by replacing $X_k$ with $y$, and set
\[
\Delta_k(\mathbf X):=Z(\mathbf X)-\inf_{y\in[L,U]} Z(\mathbf X_{y,k}).
\]
Then almost surely
\begin{equation}
    \Delta_k(\mathbf X)\le 1 \quad \forall k,
\end{equation}
\begin{equation}
\sum_{k=1}^m \Delta_k(\mathbf X)^2 \le a\, Z(\mathbf X)
\end{equation}
with $a=\frac{m}{m-1}$. Therefore, for all $\delta>0$:

\begin{equation}
    \operatorname{Pr}\left(\bigl|\sqrt{s_m^2}-\sqrt{\mathbb E[s_m^2]}\bigr|
\le
\sqrt{\frac{2R^2\,\ln(2/\delta)}{(m-1)}}\right)\geq 1-\delta.
\end{equation}
\end{theorem}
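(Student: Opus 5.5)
The plan is to reduce everything to the self-bounding machinery of Maurer--Pontil (\cite[Thm.~7]{maurer2009empirical}, built on \cite[Thm.~13]{Maurer2006ConcentrationInequalities}). We first verify the two deterministic conditions, $\Delta_k\le 1$ and $\sum_k\Delta_k^2\le aZ$. Then we invoke the resulting two-sided tail bounds for $Z$. Finally we transfer these from $Z$ to $\sqrt{s_m^2}$ by a square-root argument.

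The deterministic part starts from the pairwise form of the sample variance,
\[
s_m^2=\frac{1}{2m(m-1)}\sum_{i,j}(X_i-X_j)^2,
\qquad\text{so}\qquad
Z=\frac{1}{2(m-1)R^2}\sum_{i,j}(X_i-X_j)^2 .
\]
Replacing $X_k$ by $y$ changes only the terms involving $k$. Hence
\[
Z(\mathbf X)-Z(\mathbf X_{y,k})=\frac{1}{(m-1)R^2}\sum_{j\neq k}\bigl[(X_k-X_j)^2-(y-X_j)^2\bigr].
\]
As a function of $y$ this quantity is maximized at $y=\bar X^{(k)}$, the mean of the other $m-1$ points, which lies in $[L,U]$. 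Using the identity $\sum_{j\ne k}(X_k-X_j)^2-\sum_{j\ne k}(\bar X^{(k)}-X_j)^2=(m-1)(X_k-\bar X^{(k)})^2$, we obtain
\[
\Delta_k=\frac{(X_k-\bar X^{(k)})^2}{R^2}\le 1 .
\]
For the second condition, write $X_k-\bar X^{(k)}=\frac{m}{m-1}(X_k-\bar X)$. Then
\[
\sum_k\Delta_k^2\le \max_k\Delta_k\cdot\sum_k\Delta_k\le \sum_k\Delta_k
=\frac{m^2}{(m-1)^2R^2}\sum_k(X_k-\bar X)^2=\frac{m}{m-1}Z,
\]
which gives $a=m/(m-1)$.

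For the probabilistic part, Maurer's theorem for $(a,b{=}0)$-self-bounding functions gives upper and lower tails. We take the lower tail in the form
\[
\Pr\bigl(\mathbb E Z-Z>t\bigr)\le \exp\!\bigl(-t^2/(2a\,\mathbb E Z)\bigr).
\]
For the upper tail we use a Bernstein-type bound with variance proxy $a\mathbb E Z$; when $a\ge1$ this may carry an extra linear term, and if that is a nuisance we follow Maurer--Pontil's route, which handles $a\ge 1$ through $\sqrt{\cdot}$ directly. The key step is the implication
\[
\sqrt{\mathbb E Z}-\sqrt Z>\epsilon\ \Longrightarrow\ \mathbb E Z-Z>\epsilon\sqrt{\mathbb E Z},
\]
and the symmetric one for the upper side. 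Either implication yields a Gaussian tail $e^{-\epsilon^2/(2a)}$ for $|\sqrt Z-\sqrt{\mathbb E Z}|$, independent of $\mathbb E Z$. We then use Jensen, $\sqrt{\mathbb E s_m^2}\ge\mathbb E\sqrt{s_m^2}$, to pass between $\sqrt{\mathbb E Z}$ and $\mathbb E\sqrt Z$, where needed, as Maurer--Pontil do. A union bound over the two sides, each at level $\delta/2$, gives with probability at least $1-\delta$
\[
|\sqrt Z-\sqrt{\mathbb E Z}|\le\sqrt{2a\ln(2/\delta)}.
\]
Rescaling by $\sqrt{Z}=\sqrt{m}\,\sqrt{s_m^2}/R$ turns this into
\[
\bigl|\sqrt{s_m^2}-\sqrt{\mathbb E s_m^2}\bigr|\le R\sqrt{2a\ln(2/\delta)/m}=\sqrt{2R^2\ln(2/\delta)/(m-1)}.
\]

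The main obstacle is the upper-tail step. Self-bounding upper tails with $a>1$ are weaker than the lower tails, so we must check carefully that Maurer's Thm.~13 (or Thm.~7 of \cite{maurer2009empirical}) gives exactly the constant $2a$ without a residual linear term. The hypothesis $m\ge3$ presumably enters here. If a linear term survives, we would absorb it by restricting to the regime where it is dominated, or accept a slightly inflated constant.
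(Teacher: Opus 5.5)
Your deterministic part is correct and matches the paper in substance. The pairwise-difference form yields the same $\Delta_k=(X_k-\bar X^{(k)})^2/R^2$ that the paper gets from the leave-one-out update identity. Your estimate $\sum_k\Delta_k^2\le\max_k\Delta_k\sum_k\Delta_k$ is exactly what the paper's bound $|X_k-\bar X|\le R(m-1)/m$ encodes. Your lower-tail step is also sound. The implication $\sqrt{\mathbb E Z}-\sqrt Z>\epsilon\Rightarrow\mathbb E Z-Z>\epsilon\sqrt{\mathbb E Z}$ is a slicker route to the paper's inequality $\sqrt{\mathbb E[s_m^2]}\le\sqrt{s_m^2}+\alpha$.

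The gap is the upper side, which you leave open. A linear term does survive: the upper tail available from Maurer's theorem, and used by the paper, is $\Pr(Z-\mathbb E Z>t)\le\exp\bigl(-t^2/(2a\,\mathbb E Z+at)\bigr)$. With the literally symmetric implication, i.e.\ $t=\epsilon\sqrt{\mathbb E Z}$, the exponent is $\epsilon^2/\bigl(a(2+\epsilon/\sqrt{\mathbb E Z})\bigr)$. This degenerates as $\mathbb E Z\to0$, which is precisely the small-variance regime the theorem targets. So your claim that either implication gives $e^{-\epsilon^2/(2a)}$ uniformly in $\mathbb E Z$ is unjustified on this side. Both of your fallbacks (restricting to a regime, inflating the constant) would prove a different statement.

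Your pointer to Maurer--Pontil's route is the right one, and it is what the paper does, but it must be carried out. The key point is that the linear term is absorbed exactly once you pass to square roots. For example, keep the factor $\sqrt Z$ and set $w:=\sqrt{\mathbb E Z}$. On $\{\sqrt Z-w>\epsilon\}$ you have $Z-\mathbb E Z=(\sqrt Z-w)(\sqrt Z+w)>2\epsilon w+\epsilon^2=:t_0$, and
\[
\frac{t_0^2}{2aw^2+at_0}=\frac{\epsilon^2}{a}\cdot\frac{(2w+\epsilon)^2}{2w^2+2\epsilon w+\epsilon^2}\;\ge\;\frac{\epsilon^2}{a}\;\ge\;\frac{\epsilon^2}{2a}.
\]
The paper does the equivalent computation. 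It inverts the quadratic threshold condition and bounds the root by $\beta+\sqrt{2\beta\,\mathbb E[s_m^2]}$ with $\beta=R^2a\ln(1/\eta)/m$. It then writes the resulting bound as $\bigl(\sqrt{\mathbb E[s_m^2]}+\sqrt{\beta/2}\bigr)^2+\beta/2$ and uses subadditivity of the square root. Either way the upper side costs no more than the lower one, and the union bound at level $\delta/2$ gives the stated constant.

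Two smaller corrections. First, $m\ge3$ plays no role in the tail step: the paper needs it only so that the leave-one-out variance with its $1/(m-2)$ normalization is defined, and your pairwise derivation works for $m\ge2$. Second, the Jensen step is superfluous, since the statement concerns $\sqrt{\mathbb E[s_m^2]}$ and never $\mathbb E\sqrt{s_m^2}$.
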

\begin{proof}

\begin{enumerate}
\item \label{itm:singlecoord} \textbf{Single--coordinate influence.}  Denote the leave‑one‑out mean by $\displaystyle\bar X_{(k)}:=\frac1{m-1}\sum_{i\ne k}X_i$ and variance by $s_{m-1,(k)}^2:=\frac{1}{m-2}\sum_{i\neq k}\bigl(X_i-\bar X_{(k)}\bigr)^2$.  The variance‐update identity gives
  \[
    s_m^2=\frac{m-2}{m-1}s_{m-1,(k)}^2+\frac{(X_k-\bar X_{(k)})^2}{m},
  \]
  which yields an expression for $Z(\mathbf X) = \frac{m}{R^2} s_m^2$. To obtain $\inf_{y\in[L,U]}Z(\mathbf X_{y,k})$, we note that the first term $\frac{m-2}{m-1}s_{m-1,(k)}^2$ does not depend on $y$, so the minimum is at $y=\bar X_{(k)}$ for which last term vanishes. Hence
  \[
    \Delta_k=Z(\mathbf X)-\inf_{y}Z(\mathbf X_{y,k})=\frac{(X_k-\bar X_{(k)})^2}{R^2}.
  \]
  Because both numbers lie in $[L,U]$, $|X_k-\bar X_{(k)}|\le R$ and thus $\Delta_k\le1$.    

\item \label{itm:sumsqinf} \textbf{Sum of squared influences.}  First, let us note that we can rewrite  $X_k - \bar X_{(k)}$:

\begin{align}
m\bar X
      &= \sum_{i\ne k} X_i + X_k
       = (m-1)\bar X_{(k)} + X_k, \notag\\
\Longrightarrow\quad
\bar X_{(k)}
      &= \frac{m\bar X - X_k}{m-1}, \notag\\
\Longrightarrow\quad
X_k - \bar X_{(k)}
      &= X_k - \frac{m\bar X - X_k}{m-1}
       = \frac{m}{m-1}\bigl(X_k - \bar X\bigr).
\label{eq:xk-minus-xbar-k}
\end{align}

Second, since all $X_i\in[L,U]$, we have $\bar X_{(k)}\in[L,U]$. Rearranging \eqref{eq:xk-minus-xbar-k} yields
\begin{equation}
\label{eq:xk-minus-xbar-bound}
  |X_k-\bar X| 
  \overset{\eqref{eq:xk-minus-xbar-k}}{=} | X_k - \bar{X}_{(k)}|\Bigl(\frac{m-1}{m}\Bigr) 
  \le (U-L)\Bigl(\frac{m-1}{m}\Bigr) ,
\end{equation}

which is attained for the configuration $X_k=U$ and $X_i=L$ for all $i\neq k$ (or vice versa). Hence,
  \begin{align*}
    \sum_{k=1}^m \Delta_k^{2}&\overset{\eqref{eq:xk-minus-xbar-k}}{=}\frac{m^{4}}{R^4(m-1)^{4}}\sum_{k=1}^m(X_k-\bar X)^{4}\overset{\eqref{eq:xk-minus-xbar-bound}}{
    \le}\frac{m^{4}}{R^4(m-1)^{4}}\,R^2\left(\frac{m-1}{m}\right)^2\sum_{k=1}^m(X_k-\bar X)^{2}\\
    &=\frac{m^{2}}{R^2(m-1)}\,s_m^{2}=\frac{m}{(m-1)}\,Z(\mathbf X).
  \end{align*}
\\

\item \textbf{Concentration bound.} Given the single-coordinate influence (\ref{itm:singlecoord}) and sum of squared influences (\ref{itm:sumsqinf}) with $a=\frac{m}{(m-1)}$ we get for any $t>0$ with Theorem~7 in \cite{maurer2009empirical} (or Theorem~13 in \cite{Maurer2006ConcentrationInequalities}) :
\begin{align}
\operatorname{Pr}(\mathbb E[s_m^2]-s_m^2>t)=\operatorname{Pr}(\mathbb E[Z]-Z>\tfrac{m}{R^2}t)
&\le \exp\!\left(-\frac{m\,t^{2}}{2R^2a\,\mathbb E[s_m^2]}\right),
\label{eq:lowerV}\\
\operatorname{Pr}\{s_m^2-\mathbb E[s_m^2]>t\}=\operatorname{Pr}(Z-\mathbb E[Z]>\tfrac{m}{R^2}t)
&\le \exp\!\left(-\frac{m\,t^{2}}{R^2a\,(2\mathbb E[s_m^2]+t)}\right).
\label{eq:upperV}
\end{align}
Fix $\eta>0$ and set $L:=\ln(1/\eta)$. For the lower tail bound \eqref{eq:lowerV} we have,
\[
\operatorname{Pr}\{\mathbb E[s_m^2]-s_m^2>t\}\le \eta
\quad\text{whenever}\quad
t \ge \sqrt{\frac{2R^2a\,\mathbb E[s_m^2]\,L}{m}}.
\]
Let $\alpha:=\sqrt{\frac{2R^2aL}{m}}$, then with probability at least $1-\eta$,
\[
\mathbb E[s_m^2] - \alpha\sqrt{\mathbb E[s_m^2]} \le s_m^2,
\]
\[
\Longrightarrow \left(\sqrt{\mathbb E[s_m^2]}-\frac{\alpha}{2}\right)^2 \le s_m^2 + \frac{\alpha^2}{4},
\]
\[
\Longrightarrow\sqrt{\mathbb E[s_m^2]} \le \sqrt{s_m^2 + \frac{\alpha^2}{4}} + \frac{\alpha}{2}\leq \sqrt{s_m^2}+\alpha, \qquad \text{(using } \sqrt{u+v}\le \sqrt u+\sqrt v\text{)}
\]
i.e., with probability at least $1-\eta$,
\begin{equation}
\sqrt{\mathbb E[s_m^2]} \le \sqrt{s_m^2} + \sqrt{\frac{2R^2a\,\ln(1/\eta)}{m}}.
\label{eq:rootlower}
\end{equation}
For the upper tail bound \eqref{eq:upperV}, requiring $\operatorname{Pr}\{s_m^2-\mathbb E[s_m^2]>t\}\le \eta$ is ensured by
\[
\frac{m\,t^{2}}{R^2a(2\mathbb E[s_m^2]+t)} \ge L
\quad\Longleftrightarrow\quad
m t^2 - R^2aLt - 2R^2aL\,\mathbb E[s_m^2] \ge 0.
\]
The smallest $t\ge 0$ satisfying this quadratic inequality is
\[
t_\eta
=\frac{R^2aL+\sqrt{R^4a^2L^2+8R^2aL\,m\,\mathbb E[s_m^2]}}{2m}
\;\le\;
\frac{R^2aL}{m} + \sqrt{\frac{2R^2a\,\mathbb E[s_m^2]\,L}{m}}.
\]
Hence, with probability at least $1-\eta$,
\[
s_m^2 \le \mathbb E[s_m^2] + \sqrt{\frac{2R^2a\,\mathbb E[s_m^2]\,L}{m}} + \frac{R^2aL}{m}.
\]
Writing $\beta:=\frac{R^2aL}{m}$, this can be rewritten as
\[
s_m^2 \le \Bigl(\sqrt{\mathbb E[s_m^2]}+\sqrt{\beta/2}\Bigr)^2 + \frac{\beta}{2}.
\]
Taking square roots and using $\sqrt{u+v}\le \sqrt{u}+\sqrt{v}$ again yields
\begin{equation}
\sqrt{s_m^2(\mathbf X)} \le \sqrt{\mathbb E[s_m^2]} + \sqrt{\frac{2R^2a\,\ln(1/\eta)}{m}}.
\label{eq:rootupper}
\end{equation}
Combining \eqref{eq:rootlower} and \eqref{eq:rootupper}, and applying a union bound (with $\eta=\delta/2$) yields that for any $\delta>0$:
\[
\operatorname{Pr}\left(\bigl|\sqrt{s_m^2(\mathbf X)}-\sqrt{\mathbb E[s_m^2]}\bigr|
\le
\sqrt{\frac{2R^2\,\ln(2/\delta)}{m-1}}\right)\geq 1-\delta.
\]
\end{enumerate}
\end{proof}

Alternatively, the result may be proved by mapping \(X\in[-1,1]\) to \(Y=(X+1)/2\in[0,1]\). Since \(s_m^2(X)=4\,s_m^2(Y)\), one can apply Theorem~10 of \cite{maurer2009empirical} to \(Y\). After reformulating the resulting inequality in terms of \(X\) and applying a union bound, one obtains the same bound. A tighter bound may be obtainable by exploiting additional distributional information or sharper estimates for the root inequalities.

\section{Extended results}
\label{app:extended_results}
This appendix collects supplementary quantitative results and circuit visualizations for the experiments discussed in Section~\ref{sec:experiments}. In particular, we provide the full benchmark comparisons underlying the single-objective, multi-objective, QNN, and VQE case studies, together with circuit diagrams of the ans\"atze returned by the search. The purpose of this appendix is to document the extended results referenced in the main text and to make the comparison across benchmark families and discovered circuits transparent. Numerical values are rounded to four decimal places.

\subsection{Single-objective optimization}
\label{app:single_obj}
In this subsection, we report the full results for the single-objective searches discussed in Section~\ref{sec:single_objective}. Table~\ref{tab:ext_single_obj} summarizes the metric values and complexity measures for all benchmark circuits and for the ans\"atze found by optimizing expressibility, entanglement, or trainability individually. The corresponding circuit diagrams are presented in Figures~\ref{fig:expr-pqc}, \ref{fig:train-pqc}, and \ref{fig:entgl-pqc}, together with the additional hardware-constrained expressibility results on IQM Spark reported in Table~\ref{tab:iqm_expr} and Figures~\ref{fig:iqm-expr-pqc} and \ref{fig:iqm-expr-pqc-10k}.
\begin{table*}[tbp]
\centering
\scriptsize
\begin{adjustbox}{max width=\textwidth}
\input{tab/single_metric.tex}
\end{adjustbox}
\caption{Detailed single-objective results. The results are shown for the benchmark circuits $i^r$ from \cite{Sim2019} and the ansätze found by our search ($\mathrm{AS}_{\mathrm{Expr}}$, $\mathrm{AS}_{\mathrm{Ent}}$, $\mathrm{AS}_{\mathrm{Train}}$).
For each circuit ID and repetition $r\in\{1,\dots,5\}$ we report parameter count $|{\boldsymbol\theta}|$, gate count $G$, depth $D$, and the achieved metric values (expressibility, entanglement, trainability).
Entries marked with ``--'' indicate metrics that are not applicable to the corresponding single-objective run.}
\label{tab:ext_single_obj}
\end{table*}

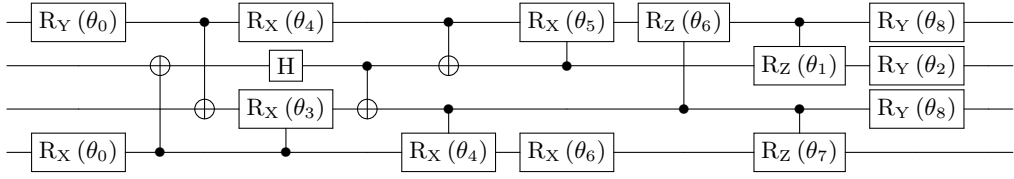
\begin{figure}[tbp]
  \centering
  \begingroup
  \input{pqc/expr_pqc}
  \endgroup
  \caption{Circuit diagram of the ansatz returned by the single-objective search optimizing expressibility ($\mathrm{AS}_{\mathrm{Expr}}$).}
  \label{fig:expr-pqc}
\end{figure}

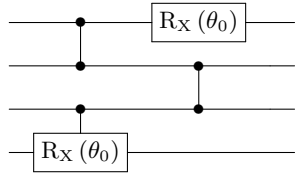
\begin{figure}[tbp]
  \centering
  \begingroup
  \input{pqc/train_pqc}
  \endgroup
  \caption{Circuit diagram of the ansatz returned by the single-objective search optimizing trainability ($\mathrm{AS}_{\mathrm{Train}}$).}
  \label{fig:train-pqc}
\end{figure}

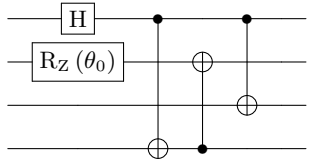
\begin{figure}[!tbp]
  \centering
  \begingroup
  \input{pqc/entgl_pqc}
  \endgroup
  \caption{Circuit diagram of the ansatz returned by the single-objective search optimizing entanglement ($\mathrm{AS}_{\mathrm{Ent}}$).}
  \label{fig:entgl-pqc}
\end{figure}

\begin{table}[!tbp]
\centering
\scriptsize
\begin{adjustbox}{width=0.6\textwidth}
\input{tab/iqm_metric}
\end{adjustbox}
\caption{Expressibility ansatz search on quantum hardware. The expressibility and effective circuit complexity are shown after compiling the benchmark circuits from \cite{Sim2019} and the discovered ansätze to the IQM Spark native basis and star connectivity. We report $|{\boldsymbol\theta}|$, gate count $G$, depth $D$, and expressibility across repetitions $r\in\{1,\dots,5\}$. Gate count and depth are reported after transpilation to the native gate set $\{\mathrm{CZ},\mathrm{RZ}\}$ and star topology. The last rows summarize the hardware-constrained ansatz-search result ($\mathrm{AS}_{\mathrm{Expr}}$) and the best circuit obtained from the 10{,}000-trial simulation in the same restricted gate set/topology ($\mathrm{AS}_{\mathrm{Expr,10k}}$).}
\label{tab:iqm_expr}
\end{table}

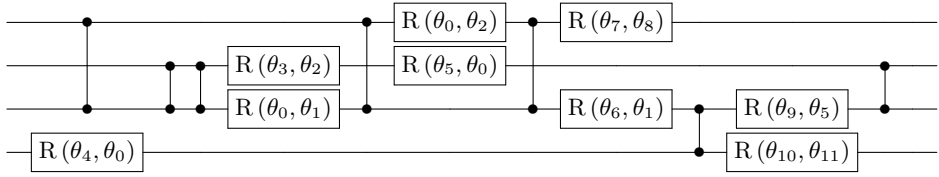
\begin{figure}[tbp]
  \centering
  \begingroup
  \input{pqc/iqm_expr_pqc}
  \endgroup
  \caption{Circuit returned by the hardware-constrained expressibility search on the 5-qubit IQM Spark device ($\mathrm{AS}_{\mathrm{Expr}}$).}
  \label{fig:iqm-expr-pqc}
\end{figure}

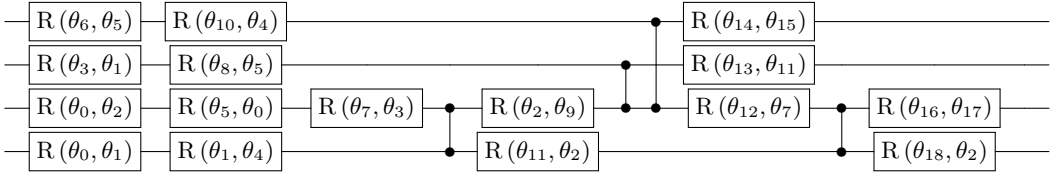
\begin{figure}[tbp]
  \centering
  \begingroup
  \input{pqc/iqm_sim_expr_pqc}
  \endgroup
  \caption{Circuit obtained from the 10{,}000-trial simulation using the same IQM Spark gate set and star topology ($\mathrm{AS}_{\mathrm{Expr,10k}}$).}
  \label{fig:iqm-expr-pqc-10k}
\end{figure}

\subsection{Multi-objective optimization}
Here we provide the complete results for the multi-objective searches from Section~\ref{sec:multi_obj}. Table~\ref{tab:multi_obj_ext} lists the benchmark circuits and the discovered ans\"atze for the joint expressibility--trainability and expressibility--entanglement objectives, while Figures~\ref{fig:expr-train-pqc} and \ref{fig:expr-entgl-pqc} present the corresponding returned circuit solutions.

\FloatBarrier

\begin{table}[!h]
\centering
\scriptsize
\begin{adjustbox}{width=1\textwidth}
\input{tab/multi_metric}
\end{adjustbox}
\caption{Detailed multi-objective results. The results are shown for the benchmark circuits $i^r$ from \cite{Sim2019} and the ansätze found by our search ($\mathrm{AS}_{\mathrm{Expr}-\mathrm{Ent}}$, $\mathrm{AS}_{\mathrm{Expr}-\mathrm{Train}}$).
For each circuit ID and repetition $r\in\{1,\dots,5\}$ we report parameter count $|{\boldsymbol\theta}|$, gate count $G$, depth $D$, and the achieved metric values (expressibility, entanglement, trainability).
Entries marked with ``--'' indicate metrics that are not applicable to the corresponding multi-objective run.}
\label{tab:multi_obj_ext}
\end{table}

\begin{figure}[!h]
  \centering
  \begingroup
  \input{pqc/expr_train_pqc}
  \endgroup
  \caption{Circuit diagram of the ansatz returned by the multi-objective expressibility and trainability search ($\mathrm{AS}_{\mathrm{Expr-Train}}$).}
  \label{fig:expr-train-pqc}
\end{figure}

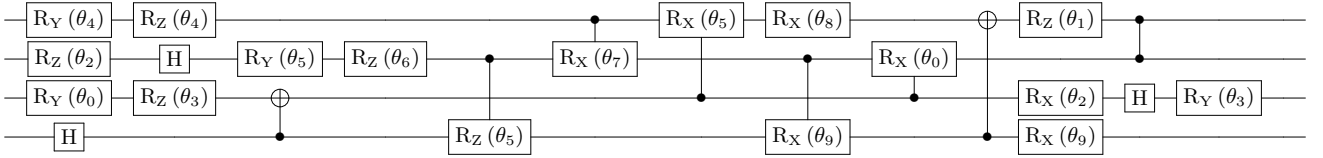
\begin{figure}[tbp]
  \centering
  \begingroup
  \input{pqc/expr_entgl_pqc}
  \endgroup
  \caption{Circuit diagram of the ansatz returned by the multi-objective expressibility and entanglement search ($\mathrm{AS}_{\mathrm{Expr-Ent}}$).}
  \label{fig:expr-entgl-pqc}
\end{figure}

\FloatBarrier

\subsection{QNN}
This subsection contains the extended quantitative results for the QNN case study of Section~\ref{sec:power_of_qnn}. Table~\ref{tab:qnn_ext} reports the benchmark comparisons, the reference architecture from \cite{Abbas2021PowerQNN}, and the ansatz returned by our search, while Fig.~\ref{fig:qnn-expr-entgl-pqc} shows the discovered circuit explicitly.
\begin{table}[!h]
\centering
\scriptsize
\begin{adjustbox}{width=0.75\textwidth}
\input{tab/power_of_qnn}
\end{adjustbox}
\caption{Detailed QNN results. The results are shown for the benchmark circuits $i^r$ from \cite{Sim2019}, the QNN from \cite{Abbas2021PowerQNN}, and the ansätze found by our search ($\mathrm{AS}$).
For each circuit ID and repetition $r\in\{1,\dots,5\}$ we report parameter count $|{\boldsymbol\theta}|$, gate count $G$, depth $D$, and the achieved metric values (expressibility, trainability).
Entries marked with ``--'' indicate metrics that are not applicable to the corresponding ansatz search run.}
\label{tab:qnn_ext}
\end{table}

\begin{figure}[!h]
  \centering
  \begingroup
  \input{pqc/qnn_pqc}
  \endgroup
  \caption{Circuit diagram of the QNN ansatz returned by the multi-objective expressibility and trainability search ($\mathrm{AS}$).}
  \label{fig:qnn-expr-entgl-pqc}
\end{figure}
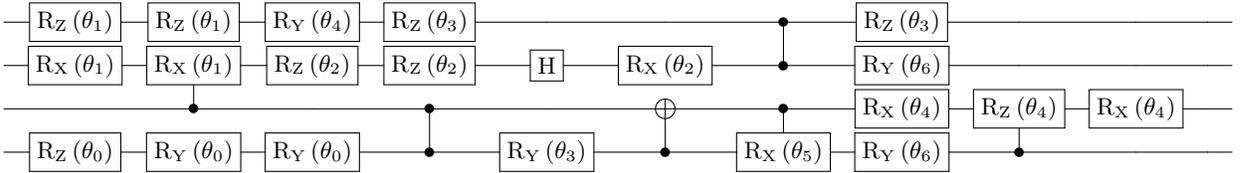

\subsection{H2}
We next report the detailed results for the $\mathrm{H}_2$ VQE study described in Section~\ref{sec:vqe}. Table~\ref{tab:h2_ext} includes the benchmark circuits, UCCSD, the SU2 hardware-efficient ansatz, and the ans\"atze returned by our search, together with their metric values and complexity measures. The circuit diagrams shown below distinguish the solution obtained in the generic gate pool (Figure~\ref{fig:h2-expr-entgl-pqc}) from the excitation-gate-based ansatz (Figure~\ref{fig:h2-excitation-expr-entgl-pqc}) used in the chemistry-informed setting.
\begin{table*}[tbp]
\centering
\scriptsize
\begin{adjustbox}{width=0.78\textwidth}
\input{tab/h2}
\end{adjustbox}
\caption{Detailed $\mathrm{H}_2$ results. The results are shown for the benchmark circuits $i^r$ from \cite{Sim2019} including circuit 10 as HEA, the HEA SU2 from Figure~\ref{fig:expr-pqc-efficientSU2}, the UCCSD \cite{Peruzzo2014VQE}, and the ansätze found by our search ($\mathrm{AS}$).
For each circuit ID and repetition $r\in\{1,\dots,5\}$ we report parameter count $|{\boldsymbol\theta}|$, gate count $G$, depth $D$, and the achieved metric values (expressibility, trainability). Gate count and depth are reported after transpilation to the native gate set $\{\mathrm{CZ},\mathrm{RZ}\}$.
Entries marked with ``--'' indicate metrics that are not applicable to the corresponding ansatz search run.}
\label{tab:h2_ext}
\end{table*}

\begin{figure}[tbp]
  \centering
  \begingroup
  \input{pqc/vqe_h2}
  \endgroup
  \caption{Circuit diagram of the $\mathrm{H}_2$ ansatz returned by the multi-objective expressibility and trainability search ($\mathrm{AS}_{\mathrm{bench}}$).}
  \label{fig:h2-expr-entgl-pqc}
\end{figure}

\begin{figure}[!h]
  \centering
  \begingroup
  \input{pqc/vqe_h2_excitation}
  \endgroup
  \caption{Circuit diagram of the $\mathrm{H}_2$ ansatz returned by the multi-objective expressibility and trainability search ($\mathrm{AS}_{\mathrm{excitation}}$) with excitation gates.}
  \label{fig:h2-excitation-expr-entgl-pqc}
\end{figure}
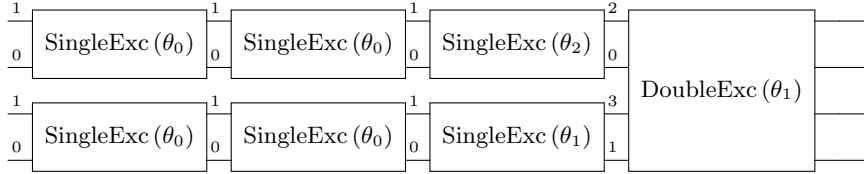

\subsection{LiH}
Finally, we provide the proof-of-concept LiH results corresponding to the experiment discussed in Section~\ref{sec:vqe}. Table~\ref{tab:lih_ext} summarizes the metric values and complexity of the considered circuits, and Fig.~\ref{fig:lih-expr-entgl-pqc} shows the ansatz returned by the corresponding search run.
\begin{table}[!h]
\centering
\scriptsize
\begin{adjustbox}{max width=\textwidth}
\input{tab/lih}
\end{adjustbox}
\caption{Detailed LiH results. The results are shown for the considered LiH ansätze together with the ansatz found by our search ($\mathrm{AS}$). We report parameter count $|{\boldsymbol\theta}|$, gate count $G$, depth $D$, and the achieved metric values (expressibility and trainability). Gate count and depth are reported after transpilation to the native gate set $\{\mathrm{CZ},\mathrm{RZ}\}$.}
\label{tab:lih_ext}
\end{table}

\begin{figure}[tbp]
  \centering
  \begingroup
  \input{pqc/lih_pqc}
  \endgroup
  \caption{Circuit diagram of the LiH ansatz returned by the multi-objective expressibility and trainability search ($\mathrm{AS}_{\mathrm{excitation}}$).}
  \label{fig:lih-expr-entgl-pqc}
\end{figure}

\twocolumngrid

\FloatBarrier
\bibliography{bib}

\end{document}

%% file: tab/setup.tex
\begin{tabular}{ccccccccccccc}
\toprule
Experiment & $n$ & $\mathcal{G}$ & $\mathcal{T}$ & $|\boldsymbol{\theta}|_{\max}$ & $G_{\max}$ & $D_{\max}$ & $|\boldsymbol{\theta}|_{\min}$ & $N_{\mathrm{trials}}$ & $\tau_{\mathrm{Expr}}$ & $\varepsilon_{\mathrm{Expr}}$ & $\tau_{\mathrm{BP}}$ & $\tau_{\mathrm{Ent}}$ \\
\midrule
Expr & 4 & H, RX, RY, RZ, CX, CZ, CRX, CRZ & -- & 28 & 28 & 15 & 0 & 10000 & 0.005 & 0 & -- & -- \\
Expr IQM & 4 & R, CZ & 0-2, 1-2, 2-3 & 12 & 41 & 25 & 0 & 50 & 0.1 & 0 & -- & -- \\
Expr IQM 10k & 4 & R, CZ & 0-2, 1-2, 2-3 & 20 & 115 & 65 & 0 & 10000 & 0.005 & 0 & -- & -- \\
Train & 4 & H, RX, RY, RZ, CX, CZ, CRX, CRZ & -- & 8 & 11 & 5 & 0 & 10000 & -- & -- & 8.0 & -- \\
Ent & 4 & H, RX, RY, RZ, CX, CZ, CRX, CRZ & -- & 4 & 11 & 5 & 0 & 10000 & -- & -- & -- & 0 \\
Expr vs Train & 4 & H, RX, RY, RZ, CX, CZ, CRX, CRZ & -- & 28 & 28 & 15 & 0 & 10000 & 0.005 & 0 & 2.0 & -- \\
Expr vs Ent & 4 & H, RX, RY, RZ, CX, CZ, CRX, CRZ & -- & 28 & 28 & 15 & 0 & 10000 & 0.005 & 0 & -- & 0.85 \\
QNN & 4 & H, RX, RY, RZ, CX, CZ, CRX, CRZ & -- & 28 & 28 & 15 & 0 & 10000 & 0.005 & 0 & 0.6 & -- \\
H2 benchmark & 4 & H, RX, RY, RZ, CX, CZ, CRX, CRZ & -- & 36 & 45 & 18 & 0 & 10000 & 0.005 & 0 & 2.5 & -- \\
H2 excitation & 4 & SE, DE & -- & 3 & 244 & 163 & 0 & 10000 & 0.005 & 0 & 0.5 & -- \\
LiH excitation & 12 & SE, DE & -- & 100 & 500 & 300 & 30 & 100 & 0.15 & $1.00\times 10^{-30}$ & 0.083 & -- \\
\bottomrule
\end{tabular}

%% file: pqc/efficientSU2.tex
\scalebox{1.0}{
\Qcircuit @C=1.0em @R=0.2em @!R { \\
	 	\nghost{} & \lstick{} & \gate{\mathrm{R_Y}\,(\mathrm{\theta_{0}})} & \gate{\mathrm{R_Z}\,(\mathrm{\theta_{4}})} & \ctrl{1} & \qw & \qw & \gate{\mathrm{R_Y}\,(\mathrm{\theta_{8}})} & \gate{\mathrm{R_Z}\,(\mathrm{\theta_{12}})} & \qw & \qw & \qw & \qw\\
	 	\nghost{} & \lstick{} & \gate{\mathrm{R_Y}\,(\mathrm{\theta_{1}})} & \gate{\mathrm{R_Z}\,(\mathrm{\theta_{5}})} & \targ & \ctrl{1} & \qw & \gate{\mathrm{R_Y}\,(\mathrm{\theta_{9}})} & \gate{\mathrm{R_Z}\,(\mathrm{\theta_{13}})} & \qw & \qw & \qw\\
	 	\nghost{} & \lstick{} & \gate{\mathrm{R_Y}\,(\mathrm{\theta_{2}})} & \gate{\mathrm{R_Z}\,(\mathrm{\theta_{6}})} & \qw & \targ & \ctrl{1} & \gate{\mathrm{R_Y}\,(\mathrm{\theta_{10}})} & \gate{\mathrm{R_Z}\,(\mathrm{\theta_{14}})} & \qw & \qw\\
	 	\nghost{} & \lstick{} & \gate{\mathrm{R_Y}\,(\mathrm{\theta_{3}})} & \gate{\mathrm{R_Z}\,(\mathrm{\theta_{7}})} & \qw & \qw & \targ & \gate{\mathrm{R_Y}\,(\mathrm{\theta_{11}})} & \gate{\mathrm{R_Z}\,(\mathrm{\theta_{15}})} & \qw & \qw\\
\\ }}

%% file: tab/single_metric.tex
\begin{tabular}{lcccccccccccccccccc}
\toprule
ID & $|\boldsymbol{\theta}|_{r=1}$ & $G_{r=1}$ & $D_{r=1}$ & $\mathrm{Expr}_{r=1}$ & $\mathrm{Expr}_{r=2}$ & $\mathrm{Expr}_{r=3}$ & $\mathrm{Expr}_{r=4}$ & $\mathrm{Expr}_{r=5}$ & $\mathrm{Ent}_{r=1}$ & $\mathrm{Ent}_{r=2}$ & $\mathrm{Ent}_{r=3}$ & $\mathrm{Ent}_{r=4}$ & $\mathrm{Ent}_{r=5}$ & $\mathrm{Train}_{r=1}$ & $\mathrm{Train}_{r=2}$ & $\mathrm{Train}_{r=3}$ & $\mathrm{Train}_{r=4}$ & $\mathrm{Train}_{r=5}$ \\
\midrule
1 & 8 & 8 & 2 & 0.2736 & 0.1990 & 0.1916 & 0.1911 & 0.2091 & 9.13e-18 & 4.29e-17 & 1.40e-18 & 8.06e-18 & -2.16e-17 & 7.7985 & 3.4361 & 2.2648 & 1.6976 & 1.3749 \\
2 & 8 & 11 & 5 & 0.2927 & 0.0217 & 0.0058 & 0.0054 & 0.0030 & 0.6222 & 0.6883 & 0.7878 & 0.7976 & 0.8091 & 0.8327 & 0.3497 & 0.2293 & 0.2121 & 0.1628 \\
3 & 11 & 11 & 5 & 0.2423 & 0.0821 & 0.0317 & 0.0268 & 0.0155 & 0.1738 & 0.3321 & 0.4419 & 0.5153 & 0.5690 & 1.2132 & 0.5291 & 0.3326 & 0.2375 & 0.1836 \\
4 & 11 & 11 & 5 & 0.1354 & 0.0264 & 0.0136 & 0.0101 & 0.0068 & 0.3135 & 0.4682 & 0.5567 & 0.6244 & 0.6687 & 1.3931 & 0.5088 & 0.2900 & 0.2025 & 0.1555 \\
5 & 28 & 28 & 15 & 0.0555 & 0.0078 & 0.0037 & 0.0029 & 0.0029 & 0.2912 & 0.5762 & 0.7098 & 0.7710 & 0.7986 & 0.2311 & 0.0922 & 0.0544 & 0.0405 & 0.0343 \\
6 & 28 & 28 & 15 & 0.0042 & 0.0049 & 0.0032 & 0.0029 & 0.0028 & 0.6868 & 0.8057 & 0.8199 & 0.8226 & 0.8239 & 0.1675 & 0.0742 & 0.0560 & 0.0471 & 0.0414 \\
7 & 19 & 19 & 6 & 0.1084 & 0.0375 & 0.0168 & 0.0105 & 0.0065 & 0.2140 & 0.4031 & 0.5136 & 0.5797 & 0.6274 & 1.0043 & 0.4524 & 0.2796 & 0.1987 & 0.1510 \\
8 & 19 & 19 & 6 & 0.0737 & 0.0232 & 0.0099 & 0.0091 & 0.0036 & 0.2866 & 0.4641 & 0.5640 & 0.6285 & 0.6768 & 1.0471 & 0.4060 & 0.2452 & 0.1737 & 0.1331 \\
9 & 4 & 11 & 5 & 0.6841 & 0.4195 & 0.0378 & 0.0054 & 0.0046 & 1.0000 & 1.0000 & 1.0000 & 0.6826 & 0.7216 & 6.38e-18 & 1.54e-18 & 4.68e-18 & 0.1794 & 0.1820 \\
10 & 8 & 12 & 6 & 0.2209 & 0.1517 & 0.1533 & 0.1346 & 0.1308 & 0.3746 & 0.5611 & 0.6546 & 0.7009 & 0.7283 & 1.9799 & 0.8461 & 0.5160 & 0.3672 & 0.2914 \\
11 & 12 & 15 & 6 & 0.1381 & 0.0143 & 0.0050 & 0.0030 & 0.0037 & 0.5386 & 0.6700 & 0.7462 & 0.7829 & 0.8008 & 1.0072 & 0.4008 & 0.2090 & 0.1593 & 0.1270 \\
12 & 12 & 15 & 6 & 0.2110 & 0.0236 & 0.0136 & 0.0039 & 0.0035 & 0.3987 & 0.5732 & 0.6650 & 0.7178 & 0.7559 & 1.0083 & 0.3854 & 0.2248 & 0.1573 & 0.1208 \\
13 & 16 & 16 & 9 & 0.0615 & 0.0087 & 0.0043 & 0.0040 & 0.0029 & 0.4130 & 0.6248 & 0.7255 & 0.7747 & 0.7992 & 0.5640 & 0.1955 & 0.1026 & 0.0694 & 0.0535 \\
14 & 16 & 16 & 9 & 0.0114 & 0.0046 & 0.0033 & 0.0032 & 0.0040 & 0.5511 & 0.7335 & 0.7937 & 0.8127 & 0.8200 & 0.4857 & 0.1555 & 0.0897 & 0.0667 & 0.0558 \\
15 & 8 & 16 & 9 & 0.1767 & 0.1094 & 0.1285 & 0.1286 & 0.1134 & 0.7111 & 0.7650 & 0.7747 & 0.7774 & 0.7786 & 1.1041 & 0.3666 & 0.2492 & 0.1949 & 0.1629 \\
16 & 11 & 11 & 4 & 0.2910 & 0.0933 & 0.0410 & 0.0242 & 0.0137 & 0.1705 & 0.3345 & 0.4402 & 0.5183 & 0.5701 & 1.2047 & 0.5316 & 0.3313 & 0.2369 & 0.1832 \\
17 & 11 & 11 & 4 & 0.1605 & 0.0272 & 0.0144 & 0.0067 & 0.0069 & 0.2956 & 0.4559 & 0.5527 & 0.6210 & 0.6664 & 1.5212 & 0.5500 & 0.3053 & 0.2058 & 0.1548 \\
18 & 12 & 12 & 6 & 0.2279 & 0.0571 & 0.0179 & 0.0066 & 0.0039 & 0.2223 & 0.4190 & 0.5541 & 0.6415 & 0.6977 & 0.8787 & 0.3699 & 0.2094 & 0.1398 & 0.1015 \\
19 & 12 & 12 & 6 & 0.0839 & 0.0039 & 0.0037 & 0.0029 & 0.0025 & 0.3849 & 0.5976 & 0.7061 & 0.7644 & 0.7929 & 1.1059 & 0.3455 & 0.1745 & 0.1161 & 0.0914 \\
$\mathrm{AS}_{\mathrm{Expr}}$ & 9 & 18 & 9 & 0.0047 & -- & -- & -- & -- & -- & -- & -- & -- & -- & -- & -- & -- & -- & -- \\
$\mathrm{AS}_{\mathrm{Ent}}$  & 1 & 5 & 3 & -- & -- & -- & -- & -- & 1.0000 & -- & -- & -- & -- & -- & -- & -- & -- & -- \\
$\mathrm{AS}_{\mathrm{Train}}$ & 1 & 4 & 2 & -- & -- & -- & -- & -- & -- & -- & -- & -- & -- & 16.3666 & -- & -- & -- & -- \\
\bottomrule
\end{tabular}

%% file: pqc/expr_pqc.tex
\scalebox{1.0}{
\Qcircuit @C=1.0em @R=0.2em @!R { \\
	 	\nghost{} & \lstick{} & \gate{\mathrm{R_Y}\,(\theta_{0})} & \qw & \ctrl{2} & \gate{\mathrm{R_X}\,(\theta_{4})} & \qw & \ctrl{1} & \gate{\mathrm{R_X}\,(\theta_{5})} & \gate{\mathrm{R_Z}\,(\theta_{6})} & \ctrl{1} & \gate{\mathrm{R_Y}\,(\theta_{8})} & \qw & \qw\\
	 	\nghost{} & \lstick{} & \qw & \targ & \qw & \gate{\mathrm{H}} & \ctrl{1} & \targ & \ctrl{-1} & \qw & \gate{\mathrm{R_Z}\,(\theta_{1})} & \gate{\mathrm{R_Y}\,(\theta_{2})} & \qw & \qw\\
	 	\nghost{} & \lstick{} & \qw & \qw & \targ & \gate{\mathrm{R_X}\,(\theta_{3})} & \targ & \ctrl{1} & \qw & \ctrl{-2} & \ctrl{1} & \gate{\mathrm{R_Y}\,(\theta_{8})} & \qw & \qw\\
	 	\nghost{} & \lstick{} & \gate{\mathrm{R_X}\,(\theta_{0})} & \ctrl{-2} & \qw & \ctrl{-1} & \qw & \gate{\mathrm{R_X}\,(\theta_{4})} & \gate{\mathrm{R_X}\,(\theta_{6})} & \qw & \gate{\mathrm{R_Z}\,(\theta_{7})} & \qw & \qw & \qw\\
\\ }}

%% file: pqc/train_pqc.tex
\scalebox{1.0}{
\Qcircuit @C=1.0em @R=0.2em @!R { \\
	 	\nghost{} & \lstick{} & \ctrl{1} & \gate{\mathrm{R_X}\,(\theta_{0})} & \qw & \qw\\
	 	\nghost{} & \lstick{} & \control\qw & \control\qw & \qw & \qw\\
	 	\nghost{} & \lstick{} & \ctrl{1} & \ctrl{-1} & \qw & \qw\\
	 	\nghost{} & \lstick{} & \gate{\mathrm{R_X}\,(\theta_{0})} & \qw & \qw & \qw\\
\\ }}

%% file: pqc/entgl_pqc.tex
\scalebox{1.0}{
\Qcircuit @C=1.0em @R=0.2em @!R { \\
	 	\nghost{} & \lstick{} & \gate{\mathrm{H}} & \ctrl{3} & \qw & \ctrl{2} & \qw & \qw\\
	 	\nghost{} & \lstick{} & \gate{\mathrm{R_Z}\,(\theta_{0})} & \qw & \targ & \qw & \qw & \qw\\
	 	\nghost{} & \lstick{} & \qw & \qw & \qw & \targ & \qw & \qw\\
	 	\nghost{} & \lstick{} & \qw & \targ & \ctrl{-2} & \qw & \qw & \qw\\
\\ }}

%% file: tab/iqm_metric.tex
\begin{tabular}{lcccccccc}
\toprule
ID & $|\boldsymbol{\theta}|_{r=1}$ & $G_{r=1}$ & $D_{r=1}$ & $\mathrm{Expr}_{r=1}$ & $\mathrm{Expr}_{r=2}$ & $\mathrm{Expr}_{r=3}$ & $\mathrm{Expr}_{r=4}$ & $\mathrm{Expr}_{r=5}$ \\
\midrule
1 & 8 & 16 & 4 & 0.2736 & 0.1990 & 0.1916 & 0.1911 & 0.2091 \\
2 & 8 & 36 & 20 & 0.2927 & 0.0217 & 0.0058 & 0.0054 & 0.0030 \\
3 & 11 & 60 & 42 & 0.2423 & 0.0821 & 0.0317 & 0.0268 & 0.0155 \\
4 & 11 & 61 & 43 & 0.1354 & 0.0264 & 0.0136 & 0.0101 & 0.0068 \\
5 & 28 & 218 & 136 & 0.0555 & 0.0078 & 0.0037 & 0.0029 & 0.0029 \\
6 & 28 & 230 & 148 & 0.0042 & 0.0049 & 0.0032 & 0.0029 & 0.0028 \\
7 & 19 & 73 & 47 & 0.1084 & 0.0375 & 0.0168 & 0.0105 & 0.0065 \\
8 & 19 & 73 & 47 & 0.0737 & 0.0232 & 0.0099 & 0.0091 & 0.0036 \\
9 & 4 & 24 & 13 & 0.6841 & 0.4195 & 0.0378 & 0.0054 & 0.0046 \\
10 & 8 & 21 & 13 & 0.2209 & 0.1517 & 0.1533 & 0.1346 & 0.1308 \\
11 & 12 & 41 & 25 & 0.1381 & 0.0143 & 0.0050 & 0.0030 & 0.0037 \\
12 & 12 & 32 & 19 & 0.2110 & 0.0236 & 0.0136 & 0.0039 & 0.0035 \\
13 & 16 & 144 & 107 & 0.0615 & 0.0087 & 0.0043 & 0.0040 & 0.0029 \\
14 & 16 & 153 & 115 & 0.0114 & 0.0046 & 0.0033 & 0.0032 & 0.0040 \\
15 & 8 & 80 & 49 & 0.1767 & 0.1094 & 0.1285 & 0.1286 & 0.1134 \\
16 & 11 & 59 & 41 & 0.2910 & 0.0933 & 0.0410 & 0.0242 & 0.0137 \\
17 & 11 & 60 & 42 & 0.1605 & 0.0272 & 0.0144 & 0.0067 & 0.0069 \\
18 & 12 & 78 & 53 & 0.2279 & 0.0571 & 0.0179 & 0.0066 & 0.0039 \\
19 & 12 & 81 & 55 & 0.0839 & 0.0039 & 0.0037 & 0.0029 & 0.0025 \\
$\mathrm{AS}_{\mathrm{Expr}}$ & 12 & 16 & 11 & 0.0835 & -- & -- & -- & -- \\
$\mathrm{AS}_{\mathrm{Expr},10k}$ & 19 & 20 & 10 & 0.0086 & -- & -- & -- & -- \\
\bottomrule
\end{tabular}

%% file: pqc/iqm_expr_pqc.tex
\scalebox{1.0}{
\Qcircuit @C=1.0em @R=0.2em @!R { \\
	 	\nghost{} & \lstick{} & \ctrl{2} & \qw & \qw & \qw & \ctrl{2} & \gate{\mathrm{R}\,(\mathrm{\theta_{0},\theta_{2}})} & \control\qw & \gate{\mathrm{R}\,(\mathrm{\theta_{7},\theta_{8}})} & \qw & \qw & \qw & \qw & \qw\\
	 	\nghost{} & \lstick{} & \qw & \ctrl{1} & \ctrl{1} & \gate{\mathrm{R}\,(\mathrm{\theta_{3},\theta_{2}})} & \qw & \gate{\mathrm{R}\,(\mathrm{\theta_{5},\theta_{0}})} & \qw & \qw & \qw & \qw & \ctrl{1} & \qw & \qw\\
	 	\nghost{} & \lstick{} & \control\qw & \control\qw & \control\qw & \gate{\mathrm{R}\,(\mathrm{\theta_{0},\theta_{1}})} & \control\qw & \qw & \ctrl{-2} & \gate{\mathrm{R}\,(\mathrm{\theta_{6},\theta_{1}})} & \ctrl{1} & \gate{\mathrm{R}\,(\mathrm{\theta_{9},\theta_{5}})} & \control\qw & \qw & \qw\\
	 	\nghost{} & \lstick{} & \gate{\mathrm{R}\,(\mathrm{\theta_{4},\theta_{0}})} & \qw & \qw & \qw & \qw & \qw & \qw & \qw & \control\qw & \gate{\mathrm{R}\,(\mathrm{\theta_{10},\theta_{11}})} & \qw & \qw & \qw\\
\\ }}

%% file: pqc/iqm_sim_expr_pqc.tex
\scalebox{1.0}{
\Qcircuit @C=1.0em @R=0.2em @!R { \\
	 	\nghost{} & \lstick{} & \gate{\mathrm{R}\,(\mathrm{\theta_{6},\theta_{5}})} & \gate{\mathrm{R}\,(\mathrm{\theta_{10},\theta_{4}})} & \qw & \qw & \qw & \qw & \ctrl{2} & \gate{\mathrm{R}\,(\mathrm{\theta_{14},\theta_{15}})} & \qw & \qw & \qw & \qw\\
	 	\nghost{} & \lstick{} & \gate{\mathrm{R}\,(\mathrm{\theta_{3},\theta_{1}})} & \gate{\mathrm{R}\,(\mathrm{\theta_{8},\theta_{5}})} & \qw & \qw & \qw & \ctrl{1} & \qw & \gate{\mathrm{R}\,(\mathrm{\theta_{13},\theta_{11}})} & \qw & \qw & \qw & \qw\\
	 	\nghost{} & \lstick{} & \gate{\mathrm{R}\,(\mathrm{\theta_{0},\theta_{2}})} & \gate{\mathrm{R}\,(\mathrm{\theta_{5},\theta_{0}})} & \gate{\mathrm{R}\,(\mathrm{\theta_{7},\theta_{3}})} & \control\qw & \gate{\mathrm{R}\,(\mathrm{\theta_{2},\theta_{9}})} & \control\qw & \control\qw & \gate{\mathrm{R}\,(\mathrm{\theta_{12},\theta_{7}})} & \control\qw & \gate{\mathrm{R}\,(\mathrm{\theta_{16},\theta_{17}})} & \qw & \qw\\
	 	\nghost{} & \lstick{} & \gate{\mathrm{R}\,(\mathrm{\theta_{0},\theta_{1}})} & \gate{\mathrm{R}\,(\mathrm{\theta_{1},\theta_{4}})} & \qw & \ctrl{-1} & \gate{\mathrm{R}\,(\mathrm{\theta_{11},\theta_{2}})} & \qw & \qw & \qw & \ctrl{-1} & \gate{\mathrm{R}\,(\mathrm{\theta_{18},\theta_{2}})} & \qw & \qw\\
\\ }}

%% file: tab/multi_metric.tex
\begin{tabular}{lcccccccccccccccccc}
\toprule
ID & $|\boldsymbol{\theta}|_{r=1}$ & $G_{r=1}$ & $D_{r=1}$ & $\mathcal{L}_{Expr}^{r=1}$ & $\mathcal{L}_{Expr}^{r=2}$ & $\mathcal{L}_{Expr}^{r=3}$ & $\mathcal{L}_{Expr}^{r=4}$ & $\mathcal{L}_{Expr}^{r=5}$ & $\mathcal{L}_{Entgl}^{r=1}$ & $\mathcal{L}_{Entgl}^{r=2}$ & $\mathcal{L}_{Entgl}^{r=3}$ & $\mathcal{L}_{Entgl}^{r=4}$ & $\mathcal{L}_{Entgl}^{r=5}$ & $\mathcal{L}_{Train}^{r=1}$ & $\mathcal{L}_{Train}^{r=2}$ & $\mathcal{L}_{Train}^{r=3}$ & $\mathcal{L}_{Train}^{r=4}$ & $\mathcal{L}_{Train}^{r=5}$ \\
\midrule
1 & 8 & 8 & 2 & 0.4522 & 0.4203 & 0.4165 & 0.4163 & 0.4253 & 1.0000 & 1.0000 & 1.0000 & 1.0000 & 1.0000 & 0 & 0 & 0.0941 & 0.3210 & 0.4500 \\
2 & 8 & 11 & 5 & 0.4590 & 0.1983 & 0.0652 & 0.0594 & 0.0005 & 0.2680 & 0.1902 & 0.0731 & 0.0616 & 0.0481 & 0.6669 & 0.8601 & 0.9083 & 0.9152 & 0.9349 \\
3 & 11 & 11 & 5 & 0.4400 & 0.3316 & 0.2363 & 0.2195 & 0.1648 & 0.7956 & 0.6093 & 0.4801 & 0.3938 & 0.3306 & 0.5147 & 0.7884 & 0.8670 & 0.9050 & 0.9266 \\
4 & 11 & 11 & 5 & 0.3817 & 0.2179 & 0.1518 & 0.1213 & 0.0816 & 0.6311 & 0.4492 & 0.3451 & 0.2654 & 0.2133 & 0.4427 & 0.7965 & 0.8840 & 0.9190 & 0.9378 \\
5 & 28 & 28 & 15 & 0.2924 & 0.0953 & 0.0203 & 0 & 0 & 0.6575 & 0.3221 & 0.1649 & 0.0929 & 0.0604 & 0.9076 & 0.9631 & 0.9782 & 0.9838 & 0.9863 \\
6 & 28 & 28 & 15 & 0.0340 & 0.0496 & 0.0073 & 0 & 0 & 0.1920 & 0.0521 & 0.0354 & 0.0322 & 0.0308 & 0.9330 & 0.9703 & 0.9776 & 0.9811 & 0.9834 \\
7 & 19 & 19 & 6 & 0.3594 & 0.2532 & 0.1729 & 0.1251 & 0.0778 & 0.7482 & 0.5257 & 0.3958 & 0.3180 & 0.2618 & 0.5983 & 0.8190 & 0.8882 & 0.9205 & 0.9396 \\
8 & 19 & 19 & 6 & 0.3208 & 0.2050 & 0.1197 & 0.1108 & 0.0179 & 0.6629 & 0.4541 & 0.3364 & 0.2606 & 0.2037 & 0.5812 & 0.8376 & 0.9019 & 0.9305 & 0.9468 \\
9 & 4 & 11 & 5 & 0.5440 & 0.4951 & 0.2538 & 0.0594 & 0.0420 & 0 & 0 & 0 & 0.1970 & 0.1511 & 1.0000 & 1.0000 & 1.0000 & 0.9282 & 0.9272 \\
10 & 8 & 12 & 6 & 0.4308 & 0.3931 & 0.3942 & 0.3811 & 0.3783 & 0.5593 & 0.3399 & 0.2299 & 0.1754 & 0.1432 & 0.2080 & 0.6616 & 0.7936 & 0.8531 & 0.8835 \\
11 & 12 & 15 & 6 & 0.3837 & 0.1566 & 0.0507 & 0 & 0.0204 & 0.3664 & 0.2117 & 0.1221 & 0.0790 & 0.0579 & 0.5971 & 0.8397 & 0.9164 & 0.9363 & 0.9492 \\
12 & 12 & 15 & 6 & 0.4262 & 0.2065 & 0.1511 & 0.0268 & 0.0156 & 0.5310 & 0.3257 & 0.2177 & 0.1556 & 0.1107 & 0.5967 & 0.8458 & 0.9101 & 0.9371 & 0.9517 \\
13 & 16 & 16 & 9 & 0.3026 & 0.1071 & 0.0359 & 0.0296 & 0 & 0.5141 & 0.2650 & 0.1464 & 0.0886 & 0.0597 & 0.7744 & 0.9218 & 0.9590 & 0.9723 & 0.9786 \\
14 & 16 & 16 & 9 & 0.1339 & 0.0437 & 0.0085 & 0.0073 & 0.0280 & 0.3517 & 0.1370 & 0.0663 & 0.0438 & 0.0353 & 0.8057 & 0.9378 & 0.9641 & 0.9733 & 0.9777 \\
15 & 8 & 16 & 9 & 0.4084 & 0.3604 & 0.3765 & 0.3766 & 0.3640 & 0.1635 & 0.1000 & 0.0886 & 0.0855 & 0.0840 & 0.5584 & 0.8534 & 0.9003 & 0.9220 & 0.9348 \\
16 & 11 & 11 & 4 & 0.4584 & 0.3444 & 0.2620 & 0.2094 & 0.1518 & 0.7995 & 0.6065 & 0.4821 & 0.3902 & 0.3292 & 0.5181 & 0.7874 & 0.8675 & 0.9052 & 0.9267 \\
17 & 11 & 11 & 4 & 0.3988 & 0.2210 & 0.1571 & 0.0812 & 0.0828 & 0.6523 & 0.4637 & 0.3498 & 0.2694 & 0.2160 & 0.3915 & 0.7800 & 0.8779 & 0.9177 & 0.9381 \\
18 & 12 & 12 & 6 & 0.4339 & 0.2951 & 0.1789 & 0.0788 & 0.0267 & 0.7385 & 0.5071 & 0.3481 & 0.2453 & 0.1791 & 0.6485 & 0.8521 & 0.9163 & 0.9441 & 0.9594 \\
19 & 12 & 12 & 6 & 0.3338 & 0.0261 & 0.0207 & 0 & 0 & 0.5472 & 0.2969 & 0.1693 & 0.1007 & 0.0672 & 0.5577 & 0.8618 & 0.9302 & 0.9535 & 0.9635 \\
$\mathrm{AS}_{\mathrm{Expr}-\mathrm{Ent}}$ & 10 & 23 & 11 & 0.0174 & -- & -- & -- & -- & 0 & -- & -- & -- & -- & -- & -- & -- & -- & -- \\
$\mathrm{AS}_{\mathrm{Expr}-\mathrm{Train}}$ & 6 & 16 & 7 & 0.1817 & -- & -- & -- & -- & -- & -- & -- & -- & -- & 0.0019 & -- & -- & -- & -- \\
\bottomrule
\end{tabular}

%% file: pqc/expr_train_pqc.tex
\scalebox{1.0}{
\Qcircuit @C=1.0em @R=0.2em @!R { \\
	 	\nghost{} & \lstick{} & \gate{\mathrm{R_X}\,(\theta_{2})} & \qw & \targ & \gate{\mathrm{R_Z}\,(\theta_{4})} & \qw & \gate{\mathrm{R_Z}\,(\theta_{4})} & \gate{\mathrm{R_Y}\,(\theta_{1})} & \qw & \qw & \qw\\
	 	\nghost{} & \lstick{} & \gate{\mathrm{R_X}\,(\theta_{0})} & \gate{\mathrm{R_X}\,(\theta_{1})} & \ctrl{-1} & \gate{\mathrm{H}} & \gate{\mathrm{R_Z}\,(\theta_{3})} & \qw & \ctrl{1} & \gate{\mathrm{R_X}\,(\theta_{3})} & \qw & \qw\\
	 	\nghost{} & \lstick{} & \qw & \gate{\mathrm{R_X}\,(\theta_{3})} & \gate{\mathrm{R_Y}\,(\theta_{2})} & \qw & \qw & \ctrl{-2} & \gate{\mathrm{R_X}\,(\theta_{5})} & \qw & \qw & \qw\\
	 	\nghost{} & \lstick{} & \gate{\mathrm{H}} & \ctrl{-1} & \gate{\mathrm{R_Z}\,(\theta_{4})} & \gate{\mathrm{R_X}\,(\theta_{4})} & \qw & \qw & \qw & \qw & \qw & \qw\\
\\ }}

%% file: pqc/expr_entgl_pqc.tex
\scalebox{0.9}{
\Qcircuit @C=1.0em @R=0.2em @!R { \\
	 	\nghost{} & \lstick{} & \gate{\mathrm{R_Y}\,(\theta_{4})} & \gate{\mathrm{R_Z}\,(\theta_{4})} & \qw & \qw & \qw & \ctrl{1} & \gate{\mathrm{R_X}\,(\theta_{5})} & \gate{\mathrm{R_X}\,(\theta_{8})} & \qw & \targ & \gate{\mathrm{R_Z}\,(\theta_{1})} & \control\qw & \qw & \qw & \qw\\
	 	\nghost{} & \lstick{} & \gate{\mathrm{R_Z}\,(\theta_{2})} & \gate{\mathrm{H}} & \gate{\mathrm{R_Y}\,(\theta_{5})} & \gate{\mathrm{R_Z}\,(\theta_{6})} & \ctrl{2} & \gate{\mathrm{R_X}\,(\theta_{7})} & \qw & \ctrl{2} & \gate{\mathrm{R_X}\,(\theta_{0})} & \qw & \qw & \ctrl{-1} & \qw & \qw & \qw\\
	 	\nghost{} & \lstick{} & \gate{\mathrm{R_Y}\,(\theta_{0})} & \gate{\mathrm{R_Z}\,(\theta_{3})} & \targ & \qw & \qw & \qw & \ctrl{-2} & \qw & \ctrl{-1} & \qw & \gate{\mathrm{R_X}\,(\theta_{2})} & \gate{\mathrm{H}} & \gate{\mathrm{R_Y}\,(\theta_{3})} & \qw & \qw\\
	 	\nghost{} & \lstick{} & \gate{\mathrm{H}} & \qw & \ctrl{-1} & \qw & \gate{\mathrm{R_Z}\,(\theta_{5})} & \qw & \qw & \gate{\mathrm{R_X}\,(\theta_{9})} & \qw & \ctrl{-3} & \gate{\mathrm{R_X}\,(\theta_{9})} & \qw & \qw & \qw & \qw\\
\\ }}

%% file: tab/power_of_qnn.tex
\begin{tabular}{lccccccccccccc}
\toprule
ID & $|\boldsymbol{\theta}|_{r=1}$ & $G_{r=1}$ & $D_{r=1}$ & $\mathcal{L}_{Expr}^{r=1}$ & $\mathcal{L}_{Expr}^{r=2}$ & $\mathcal{L}_{Expr}^{r=3}$ & $\mathcal{L}_{Expr}^{r=4}$ & $\mathcal{L}_{Expr}^{r=5}$ & $\mathcal{L}_{Train}^{r=1}$ & $\mathcal{L}_{Train}^{r=2}$ & $\mathcal{L}_{Train}^{r=3}$ & $\mathcal{L}_{Train}^{r=4}$ & $\mathcal{L}_{Train}^{r=5}$ \\
\midrule
1 & 8 & 8 & 2 & 0.7143 & 0.6132 & 0.4072 & 0.3016 & 0.2378 & 0.2181 & 0.2392 & 0.4711 & 0.5117 & 0.6663 \\
2 & 8 & 11 & 5 & 0.7268 & 0.5185 & 0.3738 & 0.2499 & 0.0872 & 0.9524 & 0.9164 & 0.9341 & 0.9375 & 0.9515 \\
3 & 11 & 11 & 5 & 0.6895 & 0.5212 & 0.3521 & 0.2477 & 0.1821 & 0.8515 & 0.8900 & 0.9143 & 0.9227 & 0.9352 \\
4 & 11 & 11 & 5 & 0.7047 & 0.5309 & 0.3316 & 0.2443 & 0.1490 & 0.8514 & 0.8877 & 0.9128 & 0.9338 & 0.9526 \\
5 & 28 & 28 & 15 & 0.4215 & 0.1770 & 0.1279 & 0.0352 & 0.0280 & 0.9562 & 0.9707 & 0.9825 & 0.9773 & 0.9820 \\
6 & 28 & 28 & 15 & 0.4328 & 0.1559 & 0.0381 & 0.0186 & 0.0343 & 0.9399 & 0.9685 & 0.9803 & 0.9788 & 0.9835 \\
7 & 19 & 19 & 6 & 0.5436 & 0.3078 & 0.1260 & 0.0747 & 0.0873 & 0.7720 & 0.8839 & 0.9038 & 0.9214 & 0.9382 \\
8 & 19 & 19 & 6 & 0.5284 & 0.2871 & 0.2117 & 0.1914 & 0.0622 & 0.7962 & 0.8538 & 0.9080 & 0.9211 & 0.9399 \\
9 & 4 & 11 & 5 & 0.8218 & 0.7184 & 0.6101 & 0.5251 & 0.4378 & 0.4206 & 0.8179 & 0.8822 & 0.8998 & 0.9217 \\
10 & 8 & 12 & 6 & 0.7098 & 0.5671 & 0.4008 & 0.3274 & 0.2758 & 0.6112 & 0.7766 & 0.8731 & 0.8880 & 0.9284 \\
11 & 12 & 15 & 6 & 0.6204 & 0.3193 & 0.1734 & 0.0310 & 0.0301 & 0.8918 & 0.9124 & 0.9278 & 0.9401 & 0.9483 \\
12 & 12 & 15 & 6 & 0.6221 & 0.4290 & 0.2873 & 0.0973 & 0.0460 & 0.7679 & 0.8585 & 0.9054 & 0.9262 & 0.9404 \\
13 & 16 & 16 & 9 & 0.6402 & 0.4365 & 0.2604 & 0.1306 & 0.0755 & 0.8652 & 0.9037 & 0.9525 & 0.9514 & 0.9771 \\
14 & 16 & 16 & 9 & 0.6572 & 0.4263 & 0.2150 & 0.1790 & 0.0943 & 0.8805 & 0.9107 & 0.9399 & 0.9576 & 0.9662 \\
15 & 8 & 16 & 9 & 0.7210 & 0.5113 & 0.3170 & 0.1616 & 0.0680 & 0.9524 & 0.9419 & 0.9577 & 0.9601 & 0.9662 \\
16 & 11 & 11 & 4 & 0.6889 & 0.4743 & 0.4009 & 0.2306 & 0.1594 & 0.7938 & 0.8644 & 0.9192 & 0.9256 & 0.9408 \\
17 & 11 & 11 & 4 & 0.6765 & 0.5171 & 0.3342 & 0.2139 & 0.2062 & 0.8797 & 0.8860 & 0.8839 & 0.9279 & 0.9466 \\
18 & 12 & 12 & 6 & 0.6721 & 0.4889 & 0.4003 & 0.2021 & 0.1504 & 0.9352 & 0.9146 & 0.9369 & 0.9471 & 0.9503 \\
19 & 12 & 12 & 6 & 0.6689 & 0.4661 & 0.3081 & 0.1568 & 0.0557 & 0.8241 & 0.8966 & 0.9427 & 0.9483 & 0.9586 \\
QNN & 8 & 14 & 7 & 0.7153 & 0.5309 & 0.3434 & 0.1510 & 0.0676 & 0.7557 & 0.8931 & 0.9354 & 0.9416 & 0.9561 \\
$\mathrm{AS}$ & 7 & 24 & 10 & 0.0266 & -- & -- & -- & -- & 0.3990 & -- & -- & -- & -- \\
\bottomrule
\end{tabular}

%% file: pqc/qnn_pqc.tex
\scalebox{1.0}{
\Qcircuit @C=1.0em @R=0.2em @!R { \\
	 	\nghost{} & \lstick{} & \gate{\mathrm{R_Z}\,(\mathrm{\theta_{1}})} & \gate{\mathrm{R_Z}\,(\mathrm{\theta_{1}})} & \gate{\mathrm{R_Y}\,(\mathrm{\theta_{4}})} & \gate{\mathrm{R_Z}\,(\mathrm{\theta_{3}})} & \qw & \qw & \control\qw & \gate{\mathrm{R_Z}\,(\mathrm{\theta_{3}})} & \qw & \qw & \qw & \qw\\
	 	\nghost{} & \lstick{} & \gate{\mathrm{R_X}\,(\mathrm{\theta_{1}})} & \gate{\mathrm{R_X}\,(\mathrm{\theta_{1}})} & \gate{\mathrm{R_Z}\,(\mathrm{\theta_{2}})} & \gate{\mathrm{R_Z}\,(\mathrm{\theta_{2}})} & \gate{\mathrm{H}} & \gate{\mathrm{R_X}\,(\mathrm{\theta_{2}})} & \ctrl{-1} & \gate{\mathrm{R_Y}\,(\mathrm{\theta_{6}})} & \qw & \qw & \qw & \qw\\
	 	\nghost{} & \lstick{} & \qw & \ctrl{-1} & \qw & \ctrl{1} & \qw & \targ & \ctrl{1} & \gate{\mathrm{R_X}\,(\mathrm{\theta_{4}})} & \gate{\mathrm{R_Z}\,(\mathrm{\theta_{4}})} & \gate{\mathrm{R_X}\,(\mathrm{\theta_{4}})} & \qw & \qw\\
	 	\nghost{} & \lstick{} & \gate{\mathrm{R_Z}\,(\mathrm{\theta_{0}})} & \gate{\mathrm{R_Y}\,(\mathrm{\theta_{0}})} & \gate{\mathrm{R_Y}\,(\mathrm{\theta_{0}})} & \control\qw & \gate{\mathrm{R_Y}\,(\mathrm{\theta_{3}})} & \ctrl{-1} & \gate{\mathrm{R_X}\,(\mathrm{\theta_{5}})} & \gate{\mathrm{R_Y}\,(\mathrm{\theta_{6}})} & \ctrl{-1} & \qw & \qw & \qw\\
\\ }}

%% file: tab/h2.tex
\begin{tabular}{lccccccccccccc}
\toprule
ID & $|\boldsymbol{\theta}|_{r=1}$ & $G_{r=1}$ & $D_{r=1}$ & $\mathcal{L}_{Expr}^{r=1}$ & $\mathcal{L}_{Expr}^{r=2}$ & $\mathcal{L}_{Expr}^{r=3}$ & $\mathcal{L}_{Expr}^{r=4}$ & $\mathcal{L}_{Expr}^{r=5}$ & $\mathcal{L}_{Train}^{r=1}$ & $\mathcal{L}_{Train}^{r=2}$ & $\mathcal{L}_{Train}^{r=3}$ & $\mathcal{L}_{Train}^{r=4}$ & $\mathcal{L}_{Train}^{r=5}$ \\
\midrule
1 & 8 & 16 & 4 & 0.4522 & 0.4203 & 0.4165 & 0.4163 & 0.4253 & 0.5206 & 0.8116 & 0.8819 & 0.9100 & 0.9274 \\
2 & 8 & 28 & 14 & 0.4590 & 0.2020 & 0.0403 & 0.0024 & 0.0164 & 0.8728 & 0.9627 & 0.9793 & 0.9851 & 0.9875 \\
3 & 11 & 52 & 34 & 0.4396 & 0.3425 & 0.2571 & 0.2103 & 0.1632 & 0.9455 & 0.9785 & 0.9872 & 0.9907 & 0.9926 \\
4 & 11 & 52 & 40 & 0.3801 & 0.2526 & 0.1651 & 0.0991 & 0.0366 & 0.9398 & 0.9791 & 0.9875 & 0.9909 & 0.9927 \\
5 & 28 & 179 & 100 & 0.2796 & 0.1199 & 0.0277 & 0.0237 & 0.0604 & 0.9896 & 0.9952 & 0.9967 & 0.9972 & 0.9974 \\
6 & 28 & 182 & 102 & 0.0222 & 0.0290 & 0.0053 & 0.0401 & 0.0110 & 0.9909 & 0.9957 & 0.9965 & 0.9968 & 0.9970 \\
7 & 19 & 68 & 30 & 0.3620 & 0.2514 & 0.1707 & 0.1184 & 0.0886 & 0.9583 & 0.9823 & 0.9886 & 0.9915 & 0.9931 \\
8 & 19 & 68 & 32 & 0.3211 & 0.2024 & 0.1569 & 0.1111 & 0.0593 & 0.9587 & 0.9830 & 0.9889 & 0.9916 & 0.9932 \\
9 & 4 & 15 & 6 & 0.5440 & 0.4947 & 0.2172 & 0.0702 & 0.0034 & 0.9987 & 0.9628 & 0.9869 & 0.9780 & 0.9774 \\
HEA & 8 & 12 & 6 & 0.4308 & 0.3931 & 0.3942 & 0.3811 & 0.3783 & 0.7612 & 0.8974 & 0.9370 & 0.9546 & 0.9636 \\
11 & 12 & 36 & 15 & 0.3919 & 0.1344 & 0.0002 & 0.0620 & 0.0286 & 0.9184 & 0.9692 & 0.9811 & 0.9858 & 0.9884 \\
12 & 12 & 27 & 10 & 0.4237 & 0.2186 & 0.0972 & 0.0381 & 0.0229 & 0.8857 & 0.9629 & 0.9777 & 0.9837 & 0.9869 \\
13 & 16 & 112 & 83 & 0.3065 & 0.1241 & 0.0379 & 0.0257 & 0.0143 & 0.9770 & 0.9914 & 0.9947 & 0.9960 & 0.9966 \\
14 & 16 & 120 & 75 & 0.1714 & 0.0142 & 0.0113 & 0.0399 & 0.0185 & 0.9820 & 0.9928 & 0.9951 & 0.9960 & 0.9964 \\
15 & 8 & 48 & 27 & 0.4187 & 0.3649 & 0.3657 & 0.3615 & 0.3621 & 0.9287 & 0.9640 & 0.9752 & 0.9804 & 0.9836 \\
16 & 11 & 53 & 27 & 0.4553 & 0.3345 & 0.2528 & 0.1986 & 0.1507 & 0.9456 & 0.9788 & 0.9872 & 0.9907 & 0.9926 \\
17 & 11 & 54 & 28 & 0.3910 & 0.2147 & 0.1257 & 0.1016 & 0.0723 & 0.9432 & 0.9796 & 0.9878 & 0.9911 & 0.9929 \\
18 & 12 & 67 & 49 & 0.4388 & 0.2964 & 0.1887 & 0.1077 & 0.0433 & 0.9608 & 0.9848 & 0.9909 & 0.9935 & 0.9949 \\
19 & 6 & 12 & 6 & 0.3219 & 0.0815 & 0.0437 & 0.0321 & 0.0217 & 0.9592 & 0.9861 & 0.9919 & 0.9939 & 0.9949 \\
SU2 HEA & 16 & 44 & 18 & 0.1850 & 0.0294 & 0.0019 & 0.0151 & 0.0152 & 0.9426 & 0.9763 & 0.9845 & 0.9877 & 0.9896 \\
UCCSD & 3 & 244 & 163 & 0.7037 & 0.6977 & 0.6955 & 0.6955 & 0.7011 & 0.7677 & 0.8597 & 0.8843 & 0.8936 & 0.8983 \\
$\mathrm{AS}_\mathrm{bench}$ & 5 & 27 & 18 & 0.3490 & -- & -- & -- & -- & 0 & -- & -- & -- & -- \\
$\mathrm{AS}_\mathrm{excitation}$ & 3 & 87 & 38 & 0.7051 & -- & -- & -- & -- & 0 & -- & -- & -- & -- \\
\bottomrule
\end{tabular}

%% file: pqc/vqe_h2.tex
\scalebox{1.0}{
\Qcircuit @C=1.0em @R=0.2em @!R { \\
	 	\nghost{} & \lstick{} & \qw & \gate{\mathrm{R_X}\,(\mathrm{\theta_{1}})} & \qw & \qw & \targ & \qw & \qw & \qw & \qw & \qw & \qw\\
	 	\nghost{} & \lstick{} & \gate{\mathrm{R_X}\,(\mathrm{\theta_{0}})} & \qw & \gate{\mathrm{R_X}\,(\mathrm{\theta_{0}})} & \gate{\mathrm{R_X}\,(\mathrm{\theta_{0}})} & \ctrl{-1} & \gate{\mathrm{R_X}\,(\mathrm{\theta_{2}})} & \gate{\mathrm{R_Z}\,(\mathrm{\theta_{3}})} & \qw & \qw & \qw & \qw\\
	 	\nghost{} & \lstick{} & \qw & \ctrl{-2} & \gate{\mathrm{H}} & \qw & \qw & \ctrl{-1} & \gate{\mathrm{R_Y}\,(\mathrm{\theta_{3}})} & \ctrl{1} & \qw & \qw & \qw\\
	 	\nghost{} & \lstick{} & \qw & \gate{\mathrm{R_Y}\,(\mathrm{\theta_{0}})} & \qw & \qw & \qw & \qw & \qw & \gate{\mathrm{R_X}\,(\mathrm{\theta_{4}})} & \gate{\mathrm{R_Z}\,(\mathrm{\theta_{1}})} & \qw & \qw\\
\\ }}

%% file: pqc/vqe_h2_excitation.tex
\scalebox{1.0}{
\Qcircuit @C=1.0em @R=1.0em @!R { \\
	 	\nghost{} & \lstick{} & \multigate{1}{\mathrm{SingleExc}\,(\mathrm{\theta_{0}})}_<<<{1} & \multigate{1}{\mathrm{SingleExc}\,(\mathrm{\theta_{0}})}_<<<{1} & \multigate{1}{\mathrm{SingleExc}\,(\mathrm{\theta_{2}})}_<<<{1} & \multigate{3}{\mathrm{DoubleExc}\,(\mathrm{\theta_{1}})}_<<<{2} & \qw & \qw\\
	 	\nghost{} & \lstick{} & \ghost{\mathrm{SingleExc}\,(\mathrm{\theta_{0}})}_<<<{0} & \ghost{\mathrm{SingleExc}\,(\mathrm{\theta_{0}})}_<<<{0} & \ghost{\mathrm{SingleExc}\,(\mathrm{\theta_{2}})}_<<<{0} & \ghost{\mathrm{DoubleExc}\,(\mathrm{\theta_{1}})}_<<<{0} & \qw & \qw\\
	 	\nghost{} & \lstick{} & \multigate{1}{\mathrm{SingleExc}\,(\mathrm{\theta_{0}})}_<<<{1} & \multigate{1}{\mathrm{SingleExc}\,(\mathrm{\theta_{0}})}_<<<{1} & \multigate{1}{\mathrm{SingleExc}\,(\mathrm{\theta_{1}})}_<<<{1} & \ghost{\mathrm{DoubleExc}\,(\mathrm{\theta_{1}})}_<<<{3} & \qw & \qw\\
	 	\nghost{} & \lstick{} & \ghost{\mathrm{SingleExc}\,(\mathrm{\theta_{0}})}_<<<{0} & \ghost{\mathrm{SingleExc}\,(\mathrm{\theta_{0}})}_<<<{0} & \ghost{\mathrm{SingleExc}\,(\mathrm{\theta_{1}})}_<<<{0} & \ghost{\mathrm{DoubleExc}\,(\mathrm{\theta_{1}})}_<<<{1} & \qw & \qw\\
\\ }}

%% file: tab/lih.tex
\begin{tabular}{lccccc}
\toprule
ID & $|\boldsymbol{\theta}|$ & $G$ & $D$ & $\mathcal{L}_{Expr}$ & $\mathcal{L}_{Train}$ \\
\midrule
HEA & 24 & 36 & 14 & 0.4869 & 0.7332 \\
SU2 HEA & 48 & 140 & 42 & 0.3687 & 0.9367 \\
UCCSD & 92 & 30170 & 22336 & 0.8503 & 0.9768 \\
$\mathrm{AS}_{\mathrm{excitation}}$ & 32 & 1964 & 767 & 0.8272 & 0.8522 \\
\bottomrule
\end{tabular}

%% file: pqc/lih_pqc.tex
\scalebox{0.55}{
\Qcircuit @C=1.0em @R=1.0em @!R { \\
		\nghost{} & \lstick{} & \qw & \qw & \qw & \multigate{8}{\mathrm{DoubleExc}\,(\mathrm{\theta_{0}})}_<<<{2} & \multigate{2}{\mathrm{SingleExc}\,(\mathrm{\theta_{1}})}_<<<{1} & \multigate{4}{\mathrm{DoubleExc}\,(\mathrm{\theta_{2}})}_<<<{0} & \multigate{9}{\mathrm{DoubleExc}\,(\mathrm{\theta_{4}})}_<<<{2} & \multigate{5}{\mathrm{DoubleExc}\,(\mathrm{\theta_{5}})}_<<<{2} & \multigate{10}{\mathrm{DoubleExc}\,(\mathrm{\theta_{6}})}_<<<{2} & \qw & \qw & \qw & \qw\\
		\nghost{} & \lstick{} & \qw & \qw & \qw & \ghost{\mathrm{DoubleExc}\,(\mathrm{\theta_{0}})}_<<<{0} & \ghost{\mathrm{SingleExc}\,(\mathrm{\theta_{1}})} & \ghost{\mathrm{DoubleExc}\,(\mathrm{\theta_{2}})}_<<<{2} & \ghost{\mathrm{DoubleExc}\,(\mathrm{\theta_{4}})} & \ghost{\mathrm{DoubleExc}\,(\mathrm{\theta_{5}})}_<<<{3} & \ghost{\mathrm{DoubleExc}\,(\mathrm{\theta_{6}})}_<<<{0} & \qw & \qw & \qw & \qw\\
		\nghost{} & \lstick{} & \qw & \qw & \qw & \ghost{\mathrm{DoubleExc}\,(\mathrm{\theta_{0}})} & \ghost{\mathrm{SingleExc}\,(\mathrm{\theta_{1}})}_<<<{0} & \ghost{\mathrm{DoubleExc}\,(\mathrm{\theta_{2}})} & \ghost{\mathrm{DoubleExc}\,(\mathrm{\theta_{4}})} & \ghost{\mathrm{DoubleExc}\,(\mathrm{\theta_{5}})} & \ghost{\mathrm{DoubleExc}\,(\mathrm{\theta_{6}})} & \qw & \qw & \qw & \qw\\
		\nghost{} & \lstick{} & \qw & \qw & \qw & \ghost{\mathrm{DoubleExc}\,(\mathrm{\theta_{0}})} & \qw & \ghost{\mathrm{DoubleExc}\,(\mathrm{\theta_{2}})}_<<<{3} & \ghost{\mathrm{DoubleExc}\,(\mathrm{\theta_{4}})}_<<<{0} & \ghost{\mathrm{DoubleExc}\,(\mathrm{\theta_{5}})} & \ghost{\mathrm{DoubleExc}\,(\mathrm{\theta_{6}})} & \qw & \qw & \qw & \qw\\
		\nghost{} & \lstick{} & \qw & \multigate{6}{\mathrm{DoubleExc}\,(\mathrm{\theta_{0}})}_<<<{2} & \qw & \ghost{\mathrm{DoubleExc}\,(\mathrm{\theta_{0}})} & \qw & \ghost{\mathrm{DoubleExc}\,(\mathrm{\theta_{2}})}_<<<{1} & \ghost{\mathrm{DoubleExc}\,(\mathrm{\theta_{4}})} & \ghost{\mathrm{DoubleExc}\,(\mathrm{\theta_{5}})}_<<<{0} & \ghost{\mathrm{DoubleExc}\,(\mathrm{\theta_{6}})} & \qw & \qw & \qw & \qw\\
		\nghost{} & \lstick{} & \qw & \ghost{\mathrm{DoubleExc}\,(\mathrm{\theta_{0}})}_<<<{0} & \qw & \ghost{\mathrm{DoubleExc}\,(\mathrm{\theta_{0}})} & \qw & \qw & \ghost{\mathrm{DoubleExc}\,(\mathrm{\theta_{4}})} & \ghost{\mathrm{DoubleExc}\,(\mathrm{\theta_{5}})}_<<<{1} & \ghost{\mathrm{DoubleExc}\,(\mathrm{\theta_{6}})} & \qw & \qw & \qw & \qw\\
		\nghost{} & \lstick{} & \multigate{2}{\mathrm{SingleExc}\,(\mathrm{\theta_{0}})}_<<<{1} & \ghost{\mathrm{DoubleExc}\,(\mathrm{\theta_{0}})} & \qw & \ghost{\mathrm{DoubleExc}\,(\mathrm{\theta_{0}})}_<<<{3} & \qw & \qw & \ghost{\mathrm{DoubleExc}\,(\mathrm{\theta_{4}})} & \qw & \ghost{\mathrm{DoubleExc}\,(\mathrm{\theta_{6}})} & \qw & \qw & \qw & \qw\\
		\nghost{} & \lstick{} & \ghost{\mathrm{SingleExc}\,(\mathrm{\theta_{0}})} & \ghost{\mathrm{DoubleExc}\,(\mathrm{\theta_{0}})} & \qw & \ghost{\mathrm{DoubleExc}\,(\mathrm{\theta_{0}})} & \multigate{4}{\mathrm{DoubleExc}\,(\mathrm{\theta_{3}})}_<<<{2} & \multigate{4}{\mathrm{DoubleExc}\,(\mathrm{\theta_{0}})}_<<<{2} & \ghost{\mathrm{DoubleExc}\,(\mathrm{\theta_{4}})}_<<<{3} & \qw & \ghost{\mathrm{DoubleExc}\,(\mathrm{\theta_{6}})} & \qw & \multigate{3}{\mathrm{DoubleExc}\,(\mathrm{\theta_{0}})}_<<<{2} & \multigate{4}{\mathrm{DoubleExc}\,(\mathrm{\theta_{7}})}_<<<{2} & \qw\\
		\nghost{} & \lstick{} & \ghost{\mathrm{SingleExc}\,(\mathrm{\theta_{0}})}_<<<{0} & \ghost{\mathrm{DoubleExc}\,(\mathrm{\theta_{0}})} & \multigate{3}{\mathrm{SingleExc}\,(\mathrm{\theta_{0}})}_<<<{1} & \ghost{\mathrm{DoubleExc}\,(\mathrm{\theta_{0}})}_<<<{1} & \ghost{\mathrm{DoubleExc}\,(\mathrm{\theta_{3}})} & \ghost{\mathrm{DoubleExc}\,(\mathrm{\theta_{0}})} & \ghost{\mathrm{DoubleExc}\,(\mathrm{\theta_{4}})} & \qw & \ghost{\mathrm{DoubleExc}\,(\mathrm{\theta_{6}})}_<<<{3} & \qw & \ghost{\mathrm{DoubleExc}\,(\mathrm{\theta_{0}})}_<<<{3} & \ghost{\mathrm{DoubleExc}\,(\mathrm{\theta_{7}})}_<<<{3} & \qw\\
		\nghost{} & \lstick{} & \qw & \ghost{\mathrm{DoubleExc}\,(\mathrm{\theta_{0}})}_<<<{3} & \ghost{\mathrm{SingleExc}\,(\mathrm{\theta_{0}})} & \qw & \ghost{\mathrm{DoubleExc}\,(\mathrm{\theta_{3}})}_<<<{0} & \ghost{\mathrm{DoubleExc}\,(\mathrm{\theta_{0}})}_<<<{3} & \ghost{\mathrm{DoubleExc}\,(\mathrm{\theta_{4}})}_<<<{1} & \qw & \ghost{\mathrm{DoubleExc}\,(\mathrm{\theta_{6}})} & \qw & \ghost{\mathrm{DoubleExc}\,(\mathrm{\theta_{0}})}_<<<{0} & \ghost{\mathrm{DoubleExc}\,(\mathrm{\theta_{7}})}_<<<{0} & \qw\\
		\nghost{} & \lstick{} & \qw & \ghost{\mathrm{DoubleExc}\,(\mathrm{\theta_{0}})}_<<<{1} & \ghost{\mathrm{SingleExc}\,(\mathrm{\theta_{0}})} & \qw & \ghost{\mathrm{DoubleExc}\,(\mathrm{\theta_{3}})}_<<<{3} & \ghost{\mathrm{DoubleExc}\,(\mathrm{\theta_{0}})}_<<<{0} & \qw & \qw & \ghost{\mathrm{DoubleExc}\,(\mathrm{\theta_{6}})}_<<<{1} & \multigate{1}{\mathrm{SingleExc}\,(\mathrm{\theta_{0}})}_<<<{1} & \ghost{\mathrm{DoubleExc}\,(\mathrm{\theta_{0}})}_<<<{1} & \ghost{\mathrm{DoubleExc}\,(\mathrm{\theta_{7}})} & \qw\\
		\nghost{} & \lstick{} & \qw & \qw & \ghost{\mathrm{SingleExc}\,(\mathrm{\theta_{0}})}_<<<{0} & \qw & \ghost{\mathrm{DoubleExc}\,(\mathrm{\theta_{3}})}_<<<{1} & \ghost{\mathrm{DoubleExc}\,(\mathrm{\theta_{0}})}_<<<{1} & \qw & \qw & \qw & \ghost{\mathrm{SingleExc}\,(\mathrm{\theta_{0}})}_<<<{0} & \qw & \ghost{\mathrm{DoubleExc}\,(\mathrm{\theta_{7}})}_<<<{1} & \qw\\
\\ }}
\vspace{0.6em}
\scalebox{0.55}{
\Qcircuit @C=1.0em @R=1.0em @!R { \\
		\nghost{} & \lstick{} & \multigate{10}{\mathrm{DoubleExc}\,(\mathrm{\theta_{8}})}_<<<{0} & \multigate{3}{\mathrm{SingleExc}\,(\mathrm{\theta_{10}})}_<<<{1} & \multigate{11}{\mathrm{DoubleExc}\,(\mathrm{\theta_{0}})}_<<<{0} & \qw & \qw & \qw & \multigate{3}{\mathrm{DoubleExc}\,(\mathrm{\theta_{10}})}_<<<{2} & \qw & \multigate{2}{\mathrm{SingleExc}\,(\mathrm{\theta_{13}})}_<<<{1} & \multigate{5}{\mathrm{DoubleExc}\,(\mathrm{\theta_{14}})}_<<<{0} & \qw & \multigate{4}{\mathrm{DoubleExc}\,(\mathrm{\theta_{12}})}_<<<{2} & \qw\\
		\nghost{} & \lstick{} & \ghost{\mathrm{DoubleExc}\,(\mathrm{\theta_{8}})} & \ghost{\mathrm{SingleExc}\,(\mathrm{\theta_{10}})} & \ghost{\mathrm{DoubleExc}\,(\mathrm{\theta_{0}})}_<<<{2} & \qw & \qw & \multigate{8}{\mathrm{DoubleExc}\,(\mathrm{\theta_{5}})}_<<<{0} & \ghost{\mathrm{DoubleExc}\,(\mathrm{\theta_{10}})}_<<<{0} & \multigate{10}{\mathrm{DoubleExc}\,(\mathrm{\theta_{11}})}_<<<{2} & \ghost{\mathrm{SingleExc}\,(\mathrm{\theta_{13}})} & \ghost{\mathrm{DoubleExc}\,(\mathrm{\theta_{14}})}_<<<{2} & \qw & \ghost{\mathrm{DoubleExc}\,(\mathrm{\theta_{12}})}_<<<{3} & \qw\\
		\nghost{} & \lstick{} & \ghost{\mathrm{DoubleExc}\,(\mathrm{\theta_{8}})} & \ghost{\mathrm{SingleExc}\,(\mathrm{\theta_{10}})} & \ghost{\mathrm{DoubleExc}\,(\mathrm{\theta_{0}})} & \qw & \multigate{7}{\mathrm{DoubleExc}\,(\mathrm{\theta_{5}})}_<<<{0} & \ghost{\mathrm{DoubleExc}\,(\mathrm{\theta_{5}})} & \ghost{\mathrm{DoubleExc}\,(\mathrm{\theta_{10}})}_<<<{3} & \ghost{\mathrm{DoubleExc}\,(\mathrm{\theta_{11}})} & \ghost{\mathrm{SingleExc}\,(\mathrm{\theta_{13}})}_<<<{0} & \ghost{\mathrm{DoubleExc}\,(\mathrm{\theta_{14}})} & \multigate{2}{\mathrm{SingleExc}\,(\mathrm{\theta_{5}})}_<<<{1} & \ghost{\mathrm{DoubleExc}\,(\mathrm{\theta_{12}})}_<<<{0} & \qw\\
		\nghost{} & \lstick{} & \ghost{\mathrm{DoubleExc}\,(\mathrm{\theta_{8}})}_<<<{2} & \ghost{\mathrm{SingleExc}\,(\mathrm{\theta_{10}})}_<<<{0} & \ghost{\mathrm{DoubleExc}\,(\mathrm{\theta_{0}})} & \qw & \ghost{\mathrm{DoubleExc}\,(\mathrm{\theta_{5}})} & \ghost{\mathrm{DoubleExc}\,(\mathrm{\theta_{5}})} & \ghost{\mathrm{DoubleExc}\,(\mathrm{\theta_{10}})}_<<<{1} & \ghost{\mathrm{DoubleExc}\,(\mathrm{\theta_{11}})} & \qw & \ghost{\mathrm{DoubleExc}\,(\mathrm{\theta_{14}})} & \ghost{\mathrm{SingleExc}\,(\mathrm{\theta_{5}})} & \ghost{\mathrm{DoubleExc}\,(\mathrm{\theta_{12}})} & \qw\\
		\nghost{} & \lstick{} & \ghost{\mathrm{DoubleExc}\,(\mathrm{\theta_{8}})} & \qw & \ghost{\mathrm{DoubleExc}\,(\mathrm{\theta_{0}})} & \multigate{7}{\mathrm{DoubleExc}\,(\mathrm{\theta_{6}})}_<<<{2} & \ghost{\mathrm{DoubleExc}\,(\mathrm{\theta_{5}})} & \ghost{\mathrm{DoubleExc}\,(\mathrm{\theta_{5}})}_<<<{2} & \qw & \ghost{\mathrm{DoubleExc}\,(\mathrm{\theta_{11}})} & \qw & \ghost{\mathrm{DoubleExc}\,(\mathrm{\theta_{14}})}_<<<{3} & \ghost{\mathrm{SingleExc}\,(\mathrm{\theta_{5}})}_<<<{0} & \ghost{\mathrm{DoubleExc}\,(\mathrm{\theta_{12}})}_<<<{1} & \qw\\
		\nghost{} & \lstick{} & \ghost{\mathrm{DoubleExc}\,(\mathrm{\theta_{8}})} & \qw & \ghost{\mathrm{DoubleExc}\,(\mathrm{\theta_{0}})} & \ghost{\mathrm{DoubleExc}\,(\mathrm{\theta_{6}})}_<<<{0} & \ghost{\mathrm{DoubleExc}\,(\mathrm{\theta_{5}})}_<<<{2} & \ghost{\mathrm{DoubleExc}\,(\mathrm{\theta_{5}})} & \qw & \ghost{\mathrm{DoubleExc}\,(\mathrm{\theta_{11}})}_<<<{0} & \qw & \ghost{\mathrm{DoubleExc}\,(\mathrm{\theta_{14}})}_<<<{1} & \qw & \qw & \qw\\
		\nghost{} & \lstick{} & \ghost{\mathrm{DoubleExc}\,(\mathrm{\theta_{8}})}_<<<{3} & \qw & \ghost{\mathrm{DoubleExc}\,(\mathrm{\theta_{0}})} & \ghost{\mathrm{DoubleExc}\,(\mathrm{\theta_{6}})} & \ghost{\mathrm{DoubleExc}\,(\mathrm{\theta_{5}})} & \ghost{\mathrm{DoubleExc}\,(\mathrm{\theta_{5}})} & \qw & \ghost{\mathrm{DoubleExc}\,(\mathrm{\theta_{11}})}_<<<{3} & \qw & \qw & \qw & \qw & \qw\\
		\nghost{} & \lstick{} & \ghost{\mathrm{DoubleExc}\,(\mathrm{\theta_{8}})} & \qw & \ghost{\mathrm{DoubleExc}\,(\mathrm{\theta_{0}})}_<<<{3} & \ghost{\mathrm{DoubleExc}\,(\mathrm{\theta_{6}})} & \ghost{\mathrm{DoubleExc}\,(\mathrm{\theta_{5}})}_<<<{3} & \ghost{\mathrm{DoubleExc}\,(\mathrm{\theta_{5}})}_<<<{3} & \multigate{2}{\mathrm{SingleExc}\,(\mathrm{\theta_{4}})}_<<<{1} & \ghost{\mathrm{DoubleExc}\,(\mathrm{\theta_{11}})} & \qw & \qw & \multigate{4}{\mathrm{SingleExc}\,(\mathrm{\theta_{0}})}_<<<{1} & \qw & \qw\\
		\nghost{} & \lstick{} & \ghost{\mathrm{DoubleExc}\,(\mathrm{\theta_{8}})} & \qw & \ghost{\mathrm{DoubleExc}\,(\mathrm{\theta_{0}})} & \ghost{\mathrm{DoubleExc}\,(\mathrm{\theta_{6}})} & \ghost{\mathrm{DoubleExc}\,(\mathrm{\theta_{5}})} & \ghost{\mathrm{DoubleExc}\,(\mathrm{\theta_{5}})} & \ghost{\mathrm{SingleExc}\,(\mathrm{\theta_{4}})} & \ghost{\mathrm{DoubleExc}\,(\mathrm{\theta_{11}})} & \multigate{3}{\mathrm{SingleExc}\,(\mathrm{\theta_{12}})}_<<<{1} & \qw & \ghost{\mathrm{SingleExc}\,(\mathrm{\theta_{0}})} & \qw & \qw\\
		\nghost{} & \lstick{} & \ghost{\mathrm{DoubleExc}\,(\mathrm{\theta_{8}})} & \multigate{2}{\mathrm{SingleExc}\,(\mathrm{\theta_{9}})}_<<<{1} & \ghost{\mathrm{DoubleExc}\,(\mathrm{\theta_{0}})} & \ghost{\mathrm{DoubleExc}\,(\mathrm{\theta_{6}})} & \ghost{\mathrm{DoubleExc}\,(\mathrm{\theta_{5}})}_<<<{1} & \ghost{\mathrm{DoubleExc}\,(\mathrm{\theta_{5}})}_<<<{1} & \ghost{\mathrm{SingleExc}\,(\mathrm{\theta_{4}})}_<<<{0} & \ghost{\mathrm{DoubleExc}\,(\mathrm{\theta_{11}})} & \ghost{\mathrm{SingleExc}\,(\mathrm{\theta_{12}})} & \qw & \ghost{\mathrm{SingleExc}\,(\mathrm{\theta_{0}})} & \qw & \qw\\
		\nghost{} & \lstick{} & \ghost{\mathrm{DoubleExc}\,(\mathrm{\theta_{8}})}_<<<{1} & \ghost{\mathrm{SingleExc}\,(\mathrm{\theta_{9}})} & \ghost{\mathrm{DoubleExc}\,(\mathrm{\theta_{0}})} & \ghost{\mathrm{DoubleExc}\,(\mathrm{\theta_{6}})}_<<<{3} & \qw & \qw & \qw & \ghost{\mathrm{DoubleExc}\,(\mathrm{\theta_{11}})} & \ghost{\mathrm{SingleExc}\,(\mathrm{\theta_{12}})} & \multigate{1}{\mathrm{SingleExc}\,(\mathrm{\theta_{15}})}_<<<{1} & \ghost{\mathrm{SingleExc}\,(\mathrm{\theta_{0}})} & \qw & \qw\\
		\nghost{} & \lstick{} & \qw & \ghost{\mathrm{SingleExc}\,(\mathrm{\theta_{9}})}_<<<{0} & \ghost{\mathrm{DoubleExc}\,(\mathrm{\theta_{0}})}_<<<{1} & \ghost{\mathrm{DoubleExc}\,(\mathrm{\theta_{6}})}_<<<{1} & \qw & \qw & \qw & \ghost{\mathrm{DoubleExc}\,(\mathrm{\theta_{11}})}_<<<{1} & \ghost{\mathrm{SingleExc}\,(\mathrm{\theta_{12}})}_<<<{0} & \ghost{\mathrm{SingleExc}\,(\mathrm{\theta_{15}})}_<<<{0} & \ghost{\mathrm{SingleExc}\,(\mathrm{\theta_{0}})}_<<<{0} & \qw & \qw\\
\\ }}
\vspace{0.6em}
\scalebox{0.55}{
\Qcircuit @C=1.0em @R=1.0em @!R { \\
		\nghost{} & \lstick{} & \multigate{2}{\mathrm{SingleExc}\,(\mathrm{\theta_{5}})}_<<<{1} & \multigate{10}{\mathrm{DoubleExc}\,(\mathrm{\theta_{16}})}_<<<{0} & \qw & \qw & \qw & \multigate{9}{\mathrm{DoubleExc}\,(\mathrm{\theta_{17}})}_<<<{2} & \multigate{2}{\mathrm{SingleExc}\,(\mathrm{\theta_{5}})}_<<<{1} & \multigate{1}{\mathrm{SingleExc}\,(\mathrm{\theta_{20}})}_<<<{1} & \qw & \multigate{11}{\mathrm{DoubleExc}\,(\mathrm{\theta_{21}})}_<<<{0} & \multigate{4}{\mathrm{SingleExc}\,(\mathrm{\theta_{24}})}_<<<{1} & \qw & \qw\\
		\nghost{} & \lstick{} & \ghost{\mathrm{SingleExc}\,(\mathrm{\theta_{5}})} & \ghost{\mathrm{DoubleExc}\,(\mathrm{\theta_{16}})}_<<<{2} & \qw & \qw & \qw & \ghost{\mathrm{DoubleExc}\,(\mathrm{\theta_{17}})} & \ghost{\mathrm{SingleExc}\,(\mathrm{\theta_{5}})} & \ghost{\mathrm{SingleExc}\,(\mathrm{\theta_{20}})}_<<<{0} & \qw & \ghost{\mathrm{DoubleExc}\,(\mathrm{\theta_{21}})} & \ghost{\mathrm{SingleExc}\,(\mathrm{\theta_{24}})} & \qw & \qw\\
		\nghost{} & \lstick{} & \ghost{\mathrm{SingleExc}\,(\mathrm{\theta_{5}})}_<<<{0} & \ghost{\mathrm{DoubleExc}\,(\mathrm{\theta_{16}})} & \qw & \qw & \qw & \ghost{\mathrm{DoubleExc}\,(\mathrm{\theta_{17}})} & \ghost{\mathrm{SingleExc}\,(\mathrm{\theta_{5}})}_<<<{0} & \multigate{3}{\mathrm{SingleExc}\,(\mathrm{\theta_{25}})}_<<<{1} & \qw & \ghost{\mathrm{DoubleExc}\,(\mathrm{\theta_{21}})} & \ghost{\mathrm{SingleExc}\,(\mathrm{\theta_{24}})} & \qw & \qw\\
		\nghost{} & \lstick{} & \qw & \ghost{\mathrm{DoubleExc}\,(\mathrm{\theta_{16}})} & \qw & \qw & \qw & \ghost{\mathrm{DoubleExc}\,(\mathrm{\theta_{17}})}_<<<{0} & \qw & \ghost{\mathrm{SingleExc}\,(\mathrm{\theta_{25}})} & \qw & \ghost{\mathrm{DoubleExc}\,(\mathrm{\theta_{21}})} & \ghost{\mathrm{SingleExc}\,(\mathrm{\theta_{24}})} & \qw & \qw\\
		\nghost{} & \lstick{} & \qw & \ghost{\mathrm{DoubleExc}\,(\mathrm{\theta_{16}})} & \qw & \qw & \multigate{7}{\mathrm{DoubleExc}\,(\mathrm{\theta_{0}})}_<<<{2} & \ghost{\mathrm{DoubleExc}\,(\mathrm{\theta_{17}})} & \qw & \ghost{\mathrm{SingleExc}\,(\mathrm{\theta_{25}})} & \qw & \ghost{\mathrm{DoubleExc}\,(\mathrm{\theta_{21}})}_<<<{2} & \ghost{\mathrm{SingleExc}\,(\mathrm{\theta_{24}})}_<<<{0} & \qw & \qw\\
		\nghost{} & \lstick{} & \qw & \ghost{\mathrm{DoubleExc}\,(\mathrm{\theta_{16}})} & \qw & \qw & \ghost{\mathrm{DoubleExc}\,(\mathrm{\theta_{0}})}_<<<{0} & \ghost{\mathrm{DoubleExc}\,(\mathrm{\theta_{17}})} & \qw & \ghost{\mathrm{SingleExc}\,(\mathrm{\theta_{25}})}_<<<{0} & \qw & \ghost{\mathrm{DoubleExc}\,(\mathrm{\theta_{21}})} & \qw & \qw & \qw\\
		\nghost{} & \lstick{} & \qw & \ghost{\mathrm{DoubleExc}\,(\mathrm{\theta_{16}})}_<<<{3} & \multigate{5}{\mathrm{DoubleExc}\,(\mathrm{\theta_{13}})}_<<<{0} & \multigate{4}{\mathrm{DoubleExc}\,(\mathrm{\theta_{2}})}_<<<{0} & \ghost{\mathrm{DoubleExc}\,(\mathrm{\theta_{0}})} & \ghost{\mathrm{DoubleExc}\,(\mathrm{\theta_{17}})}_<<<{3} & \multigate{4}{\mathrm{DoubleExc}\,(\mathrm{\theta_{19}})}_<<<{2} & \qw & \qw & \ghost{\mathrm{DoubleExc}\,(\mathrm{\theta_{21}})} & \qw & \multigate{3}{\mathrm{DoubleExc}\,(\mathrm{\theta_{22}})}_<<<{2} & \qw\\
		\nghost{} & \lstick{} & \qw & \ghost{\mathrm{DoubleExc}\,(\mathrm{\theta_{16}})} & \ghost{\mathrm{DoubleExc}\,(\mathrm{\theta_{13}})}_<<<{2} & \ghost{\mathrm{DoubleExc}\,(\mathrm{\theta_{2}})}_<<<{2} & \ghost{\mathrm{DoubleExc}\,(\mathrm{\theta_{0}})} & \ghost{\mathrm{DoubleExc}\,(\mathrm{\theta_{17}})} & \ghost{\mathrm{DoubleExc}\,(\mathrm{\theta_{19}})}_<<<{3} & \qw & \qw & \ghost{\mathrm{DoubleExc}\,(\mathrm{\theta_{21}})}_<<<{3} & \multigate{4}{\mathrm{SingleExc}\,(\mathrm{\theta_{0}})}_<<<{1} & \ghost{\mathrm{DoubleExc}\,(\mathrm{\theta_{22}})}_<<<{0} & \qw\\
		\nghost{} & \lstick{} & \qw & \ghost{\mathrm{DoubleExc}\,(\mathrm{\theta_{16}})} & \ghost{\mathrm{DoubleExc}\,(\mathrm{\theta_{13}})} & \ghost{\mathrm{DoubleExc}\,(\mathrm{\theta_{2}})} & \ghost{\mathrm{DoubleExc}\,(\mathrm{\theta_{0}})} & \ghost{\mathrm{DoubleExc}\,(\mathrm{\theta_{17}})} & \ghost{\mathrm{DoubleExc}\,(\mathrm{\theta_{19}})}_<<<{0} & \qw & \qw & \ghost{\mathrm{DoubleExc}\,(\mathrm{\theta_{21}})} & \ghost{\mathrm{SingleExc}\,(\mathrm{\theta_{0}})} & \ghost{\mathrm{DoubleExc}\,(\mathrm{\theta_{22}})}_<<<{3} & \qw\\
		\nghost{} & \lstick{} & \qw & \ghost{\mathrm{DoubleExc}\,(\mathrm{\theta_{16}})} & \ghost{\mathrm{DoubleExc}\,(\mathrm{\theta_{13}})}_<<<{3} & \ghost{\mathrm{DoubleExc}\,(\mathrm{\theta_{2}})}_<<<{3} & \ghost{\mathrm{DoubleExc}\,(\mathrm{\theta_{0}})} & \ghost{\mathrm{DoubleExc}\,(\mathrm{\theta_{17}})}_<<<{1} & \ghost{\mathrm{DoubleExc}\,(\mathrm{\theta_{19}})} & \qw & \qw & \ghost{\mathrm{DoubleExc}\,(\mathrm{\theta_{21}})} & \ghost{\mathrm{SingleExc}\,(\mathrm{\theta_{0}})} & \ghost{\mathrm{DoubleExc}\,(\mathrm{\theta_{22}})}_<<<{1} & \qw\\
		\nghost{} & \lstick{} & \qw & \ghost{\mathrm{DoubleExc}\,(\mathrm{\theta_{16}})}_<<<{1} & \ghost{\mathrm{DoubleExc}\,(\mathrm{\theta_{13}})} & \ghost{\mathrm{DoubleExc}\,(\mathrm{\theta_{2}})}_<<<{1} & \ghost{\mathrm{DoubleExc}\,(\mathrm{\theta_{0}})}_<<<{3} & \multigate{1}{\mathrm{SingleExc}\,(\mathrm{\theta_{18}})}_<<<{1} & \ghost{\mathrm{DoubleExc}\,(\mathrm{\theta_{19}})}_<<<{1} & \multigate{1}{\mathrm{SingleExc}\,(\mathrm{\theta_{0}})}_<<<{1} & \multigate{1}{\mathrm{SingleExc}\,(\mathrm{\theta_{0}})}_<<<{1} & \ghost{\mathrm{DoubleExc}\,(\mathrm{\theta_{21}})} & \ghost{\mathrm{SingleExc}\,(\mathrm{\theta_{0}})} & \qw & \qw\\
		\nghost{} & \lstick{} & \qw & \qw & \ghost{\mathrm{DoubleExc}\,(\mathrm{\theta_{13}})}_<<<{1} & \qw & \ghost{\mathrm{DoubleExc}\,(\mathrm{\theta_{0}})}_<<<{1} & \ghost{\mathrm{SingleExc}\,(\mathrm{\theta_{18}})}_<<<{0} & \qw & \ghost{\mathrm{SingleExc}\,(\mathrm{\theta_{0}})}_<<<{0} & \ghost{\mathrm{SingleExc}\,(\mathrm{\theta_{0}})}_<<<{0} & \ghost{\mathrm{DoubleExc}\,(\mathrm{\theta_{21}})}_<<<{1} & \ghost{\mathrm{SingleExc}\,(\mathrm{\theta_{0}})}_<<<{0} & \qw & \qw\\
\\ }}
\vspace{0.6em}
\scalebox{0.55}{
\Qcircuit @C=1.0em @R=1.0em @!R { \\
		\nghost{} & \lstick{} & \qw & \qw & \qw & \multigate{9}{\mathrm{DoubleExc}\,(\mathrm{\theta_{26}})}_<<<{2} & \multigate{2}{\mathrm{SingleExc}\,(\mathrm{\theta_{27}})}_<<<{1} & \multigate{2}{\mathrm{SingleExc}\,(\mathrm{\theta_{28}})}_<<<{1} & \qw & \qw & \qw & \multigate{11}{\mathrm{DoubleExc}\,(\mathrm{\theta_{5}})}_<<<{0} & \multigate{3}{\mathrm{SingleExc}\,(\mathrm{\theta_{30}})}_<<<{1} & \multigate{1}{\mathrm{SingleExc}\,(\mathrm{\theta_{31}})}_<<<{1} & \qw\\
		\nghost{} & \lstick{} & \qw & \qw & \qw & \ghost{\mathrm{DoubleExc}\,(\mathrm{\theta_{26}})} & \ghost{\mathrm{SingleExc}\,(\mathrm{\theta_{27}})} & \ghost{\mathrm{SingleExc}\,(\mathrm{\theta_{28}})} & \qw & \qw & \qw & \ghost{\mathrm{DoubleExc}\,(\mathrm{\theta_{5}})} & \ghost{\mathrm{SingleExc}\,(\mathrm{\theta_{30}})} & \ghost{\mathrm{SingleExc}\,(\mathrm{\theta_{31}})}_<<<{0} & \qw\\
		\nghost{} & \lstick{} & \qw & \qw & \qw & \ghost{\mathrm{DoubleExc}\,(\mathrm{\theta_{26}})}_<<<{0} & \ghost{\mathrm{SingleExc}\,(\mathrm{\theta_{27}})}_<<<{0} & \ghost{\mathrm{SingleExc}\,(\mathrm{\theta_{28}})}_<<<{0} & \qw & \qw & \qw & \ghost{\mathrm{DoubleExc}\,(\mathrm{\theta_{5}})} & \ghost{\mathrm{SingleExc}\,(\mathrm{\theta_{30}})} & \qw & \qw\\
		\nghost{} & \lstick{} & \qw & \qw & \qw & \ghost{\mathrm{DoubleExc}\,(\mathrm{\theta_{26}})} & \qw & \qw & \qw & \qw & \qw & \ghost{\mathrm{DoubleExc}\,(\mathrm{\theta_{5}})} & \ghost{\mathrm{SingleExc}\,(\mathrm{\theta_{30}})}_<<<{0} & \qw & \qw\\
		\nghost{} & \lstick{} & \qw & \qw & \qw & \ghost{\mathrm{DoubleExc}\,(\mathrm{\theta_{26}})} & \qw & \qw & \qw & \qw & \qw & \ghost{\mathrm{DoubleExc}\,(\mathrm{\theta_{5}})}_<<<{2} & \qw & \qw & \qw\\
		\nghost{} & \lstick{} & \qw & \qw & \qw & \ghost{\mathrm{DoubleExc}\,(\mathrm{\theta_{26}})} & \qw & \qw & \qw & \qw & \qw & \ghost{\mathrm{DoubleExc}\,(\mathrm{\theta_{5}})} & \qw & \qw & \qw\\
		\nghost{} & \lstick{} & \qw & \qw & \qw & \ghost{\mathrm{DoubleExc}\,(\mathrm{\theta_{26}})} & \qw & \qw & \qw & \qw & \qw & \ghost{\mathrm{DoubleExc}\,(\mathrm{\theta_{5}})} & \qw & \multigate{4}{\mathrm{DoubleExc}\,(\mathrm{\theta_{0}})}_<<<{2} & \qw\\
		\nghost{} & \lstick{} & \multigate{3}{\mathrm{DoubleExc}\,(\mathrm{\theta_{23}})}_<<<{0} & \multigate{2}{\mathrm{SingleExc}\,(\mathrm{\theta_{6}})}_<<<{1} & \multigate{1}{\mathrm{SingleExc}\,(\mathrm{\theta_{7}})}_<<<{1} & \ghost{\mathrm{DoubleExc}\,(\mathrm{\theta_{26}})}_<<<{3} & \qw & \qw & \qw & \qw & \multigate{3}{\mathrm{DoubleExc}\,(\mathrm{\theta_{5}})}_<<<{0} & \ghost{\mathrm{DoubleExc}\,(\mathrm{\theta_{5}})} & \qw & \ghost{\mathrm{DoubleExc}\,(\mathrm{\theta_{0}})} & \qw\\
		\nghost{} & \lstick{} & \ghost{\mathrm{DoubleExc}\,(\mathrm{\theta_{23}})}_<<<{2} & \ghost{\mathrm{SingleExc}\,(\mathrm{\theta_{6}})} & \ghost{\mathrm{SingleExc}\,(\mathrm{\theta_{7}})}_<<<{0} & \ghost{\mathrm{DoubleExc}\,(\mathrm{\theta_{26}})} & \multigate{3}{\mathrm{SingleExc}\,(\mathrm{\theta_{4}})}_<<<{1} & \qw & \qw & \qw & \ghost{\mathrm{DoubleExc}\,(\mathrm{\theta_{5}})}_<<<{2} & \ghost{\mathrm{DoubleExc}\,(\mathrm{\theta_{5}})}_<<<{3} & \qw & \ghost{\mathrm{DoubleExc}\,(\mathrm{\theta_{0}})}_<<<{3} & \qw\\
		\nghost{} & \lstick{} & \ghost{\mathrm{DoubleExc}\,(\mathrm{\theta_{23}})}_<<<{3} & \ghost{\mathrm{SingleExc}\,(\mathrm{\theta_{6}})}_<<<{0} & \qw & \ghost{\mathrm{DoubleExc}\,(\mathrm{\theta_{26}})}_<<<{1} & \ghost{\mathrm{SingleExc}\,(\mathrm{\theta_{4}})} & \multigate{2}{\mathrm{SingleExc}\,(\mathrm{\theta_{29}})}_<<<{1} & \multigate{2}{\mathrm{SingleExc}\,(\mathrm{\theta_{22}})}_<<<{1} & \qw & \ghost{\mathrm{DoubleExc}\,(\mathrm{\theta_{5}})}_<<<{3} & \ghost{\mathrm{DoubleExc}\,(\mathrm{\theta_{5}})} & \multigate{2}{\mathrm{SingleExc}\,(\mathrm{\theta_{6}})}_<<<{1} & \ghost{\mathrm{DoubleExc}\,(\mathrm{\theta_{0}})}_<<<{0} & \qw\\
		\nghost{} & \lstick{} & \ghost{\mathrm{DoubleExc}\,(\mathrm{\theta_{23}})}_<<<{1} & \multigate{1}{\mathrm{SingleExc}\,(\mathrm{\theta_{1}})}_<<<{1} & \qw & \qw & \ghost{\mathrm{SingleExc}\,(\mathrm{\theta_{4}})} & \ghost{\mathrm{SingleExc}\,(\mathrm{\theta_{29}})} & \ghost{\mathrm{SingleExc}\,(\mathrm{\theta_{22}})} & \multigate{1}{\mathrm{SingleExc}\,(\mathrm{\theta_{22}})}_<<<{1} & \ghost{\mathrm{DoubleExc}\,(\mathrm{\theta_{5}})}_<<<{1} & \ghost{\mathrm{DoubleExc}\,(\mathrm{\theta_{5}})} & \ghost{\mathrm{SingleExc}\,(\mathrm{\theta_{6}})} & \ghost{\mathrm{DoubleExc}\,(\mathrm{\theta_{0}})}_<<<{1} & \qw\\
		\nghost{} & \lstick{} & \qw & \ghost{\mathrm{SingleExc}\,(\mathrm{\theta_{1}})}_<<<{0} & \qw & \qw & \ghost{\mathrm{SingleExc}\,(\mathrm{\theta_{4}})}_<<<{0} & \ghost{\mathrm{SingleExc}\,(\mathrm{\theta_{29}})}_<<<{0} & \ghost{\mathrm{SingleExc}\,(\mathrm{\theta_{22}})}_<<<{0} & \ghost{\mathrm{SingleExc}\,(\mathrm{\theta_{22}})}_<<<{0} & \qw & \ghost{\mathrm{DoubleExc}\,(\mathrm{\theta_{5}})}_<<<{1} & \ghost{\mathrm{SingleExc}\,(\mathrm{\theta_{6}})}_<<<{0} & \qw & \qw\\
\\ }}